\documentclass[cleveref,thm-restate,a4paper,USenglish]{lipics-v2021}
\nolinenumbers

\usepackage{amsmath}
\usepackage{amssymb,amsfonts}
\usepackage{mathtools,amsthm}
\usepackage{thmtools,thm-restate}
\usepackage{mathrsfs}

\usepackage{enumerate}
\usepackage{xurl}
\usepackage{hyperref,xcolor}

\usepackage{multirow}
\usepackage{tabularx, environ}
\usepackage{subcaption}
\usepackage[capitalise]{cleveref}

\theoremstyle{definition}

\usepackage{mysetting}

\author{Tatsuya Gima}{Hokkaido University, Japan}{gima@ist.hokudai.ac.jp}{0000-0003-2815-5699}{JSPS KAKENHI Grant Numbers 
JP25K03077, JP26K02980, JP26K21161}

\authorrunning{T. Gima} 

\Copyright{Tatsuya Gima} 

\title{Extending Courcelle's Theorem with Optimality Predicates}
\hideLIPIcs
\titlerunning{Extending Courcelle's Theorem with Optimality Predicates}

\ccsdesc[500]{Theory of computation~Fixed parameter tractability}
\ccsdesc[500]{Theory of computation~Finite Model Theory}
\ccsdesc[500]{Theory of computation~Complexity theory and logic}
\keywords{Courcelle's theorem, Monadic second-order logic, clique-width, treewidth, optimization}

\begin{document}
\maketitle
\begin{abstract}
    Courcelle's theorem and its optimization variants yield fixed-parameter tractable algorithms for a wide range of graph problems on graphs of bounded treewidth or clique-width.
However, the limited counting power of $\mathsf{CMSO}$ poses an obstacle to capturing certain optimization problems and properties within this framework.
We introduce a new logic $\mathsf{AmCMSO}$, which extends $\mathsf{CMSO}$ with predicates for membership in the families of minimum- and maximum-cardinality sets satisfying a fixed formula $\phi(X)$.
In contrast to most previous extensions of $\mathsf{CMSO}$ with cardinality constraints,
we give algorithmic meta-theorems based on fixed-parameter tractable model checking for $\mathsf{AmCMSO}_1$ parameterized by clique-width and the formula, and for $\mathsf{AmCMSO}_2$ parameterized by treewidth and the formula.
Our proof is based on the combination of Feferman--Vaught-type decomposition and fundamental techniques for dynamic programming.
The meta-theorems yield fixed-parameter tractable algorithms for a wide range of optimization problems involving optimal solutions, including network interdiction, pre-assignment for solution uniquification, and diversity maximization, without parameterizing by the optimum value.
Finally, allowing an optimality predicate to depend on even one external set variable makes model checking hard for every level of the polynomial hierarchy, already on trees of depth four.

\end{abstract}
\clearpage

\section{Introduction}\label{sec:intro}

Courcelle's celebrated theorem~\cite{Courcelle90MSO1,CourcelleMR00} states that every fixed property expressible in counting monadic second-order logic ($\CMSO_2$) can be decided in linear time on graphs of bounded treewidth. For graphs of bounded clique-width, the corresponding result holds for $\CMSO_1$, provided that a clique-width expression of bounded width and linear size is given.
This theorem has been extended to optimization problems through the \LinE\CMSO{} framework, which solves problems of the following form in linear time on graphs of bounded clique-width, given such an expression, when $f$ is a fixed linear set function and $\phi(S)$ is a fixed \CMSO$_1$ formula.
\begin{equation*}
        \exists S  \; (f(S) \leq k \land \phi (S)).
\end{equation*}
This $\LinE\CMSO$ framework captures many combinatorial optimization problems, such as \ProblemName{Minimum Vertex Cover}, \ProblemName{Minimum Dominating Set}, and their connected variants. On graphs of bounded treewidth, its $\CMSO_2$ version also captures vertex-deletion distance to a property defined by a fixed finite set of forbidden minors~\cite{ArnborgLS91}.
Moreover, the framework extends to a set function $f$ that decomposes into element-wise sums $f(\vec S) = \sum_{v \in V} g_v(\indicator_{S_1}(v), \dots, \indicator_{S_k}(v))$ for functions $g_v \colon \{0,1\}^{k}\to \Z$~\cite{CourcelleM93}, which is not explicitly stated in the original papers but follows from a more general result.
Extensions of this framework address partitioning~\cite{Rao07}, $K$-best solutions~\cite{EppsteinK18}, diversity maximization~\cite{Baste22,DrabikM26}, the design of stable algorithms~\cite{GimaKY25}, and certain numerical constraints beyond $\MSO$~\cite{CourcelleD16}.
On graphs of bounded treewidth, these frameworks can be extended to $\CMSO_2$, which can describe properties involving edges and edge sets. In contrast, there is a fixed $\CMSO_2$ property whose model-checking problem cannot be solved in polynomial time even on cliques, which have clique-width at most 2, unless $\mathrm{E}=\mathrm{NE}$~\cite{CourcelleMR00,Lampis14}.
However, these frameworks do not directly support quantification over optimal solutions within the logic.

Several types of combinatorial optimization problems involving optimal solutions have also been studied actively, including robust optimization~\cite{Kasperski16}, network interdiction~\cite{Smith13}, uniquification problems~\cite{HoriyamaKOSS24,AnCCKLOS25}, quantified integer programming~\cite{ChistikovH17,NguyenP22}, and other multilevel optimization problems.
The complexity of most of these problems climbs one level of the polynomial hierarchy, typically from P to NP-hard or from NP to $\Sigma_2^P$-hard~\cite{GruneW25,GruneW26}.
Most of these problems that are in $\Sigma_2^P$ have a min-max structure, 
and can be expressed in the following form
\begin{equation} \tag{A}\label{eq:min-max-form} 
    \exists S \;(f(S) \le p \land \forall Y \;(g(S, Y) \le q \to \Psi(S,Y)))\;
\end{equation} with evaluation functions $f$ and $g$, and property $\Psi$. 
One example is \ProblemName{Length-Bounded Cut}, also called \ProblemName{$k$-Bounded Cut}, a network-interdiction problem.
The problem asks for a minimum-cardinality set $S$ of edges whose removal ensures that there is no $s$-$t$ path of length at most $k$ in the graph.
The form \eqref{eq:min-max-form} captures the decision version of this problem with $f(S) = |S|$, $g(S, Y) = |Y|$, and $\Psi(S,Y)$ requiring that $Y$ is not an $s$-$t$ path or $S \cap Y \neq \varnothing$.
The fixed-parameter tractability of \ProblemName{Length-Bounded Cut}, parameterized by treewidth\footnote{This application requires $\MSO_2$, so the bounded-clique-width case is nontrivial.} plus $q$, follows readily from the optimization version of Courcelle's theorem because $|Y| \le q$ can be expressed by an $\MSO$ formula of length $O(q)$; a more efficient algorithm is also known~\cite{DvorakK18}.
However, when the length bound is part of the input, \ProblemName{Length-Bounded Cut} is W[1]-hard parameterized by treewidth~\cite{DvorakK18}.

These facts imply that extending $\CMSO$ with cardinality comparisons such as $|X| \le k$ can express many hard problems even on graphs of bounded treewidth or clique-width.
For the same reason, several proposed extensions of $\MSO$ with cardinality comparisons, such as $\mathsf{cardMSO}$~\cite{GanianO13}, $\MSO$-$\mathsf{LCC}$~\cite{Szeider11}, $\MSO_{\mathsf{Lin}}^{\mathsf{GL}}$~\cite{KnopKMT19}, and ${\setlength\fboxsep{1pt}\boxed{\forall}}\MSO$~\cite{DreierGH25}, are known to be hard even on graphs of bounded treewidth or clique-width.
Consequently, these works mainly study XP algorithms~\cite{DreierGH25,KnopKMT19,Szeider11} or approximation algorithms~\cite{DreierGH25} parameterized by treewidth or clique-width, or fixed-parameter tractability~\cite{GanianO13,KnopKMT19} parameterized by more restrictive graph parameters such as vertex cover number.

Nevertheless, some problems of the form~\eqref{eq:min-max-form} are fixed-parameter tractable parameterized by clique-width alone when the bound $q$ is the optimum value.
One example is \ProblemName{Pre-assignment for Uniquification of Minimum Vertex Cover} (\ProblemName{PAU-VC})~\cite{AnCCKLOS25}, which asks for a minimum-cardinality vertex set $S$ contained in exactly one minimum vertex cover of $G$.
This problem does not appear to be captured directly by the $\LinE\CMSO$ framework, since minimum cardinality cannot in general be expressed in \MSO{} logic.
However, uniqueness can be expressed in $\CMSO$ logic, and \ProblemName{Minimum Vertex Cover} is a basic example captured by the optimization variant of Courcelle's theorem.

These results suggest a substantial complexity gap between quantifying over sets of size at most an input bound $k$ and quantifying over minimum-cardinality sets satisfying a fixed formula $\phi(Y)$.

In this paper, we introduce a new logic $\Am\CMSO$ that extends $\CMSO$ logic by allowing $\Argmin$ and $\Argmax$ operations.
A new predicate $X \in \Argmin(\phi)$ means that $X$ is a minimum-cardinality set among the sets satisfying $\phi(X)$ on the given graph $G$.
This logic captures some problems of the form~\eqref{eq:min-max-form} such that the constraints $\forall Y\; (g(S, Y) \le q \to \Psi(S,Y))$ can be replaced by $\forall Y \;(Y\in \Argmin(\psi(Y)) \to \psi'(S,Y))$ with MSO-definable properties $\psi$ and $\psi'$.
We also consider an indexed version of the $\Argmin$ operator, denoted by $\Argmin^{+k}$: it selects satisfying sets whose cardinality is the $(k+1)$-st smallest distinct feasible value, so $k=0$ selects the minimum. Under an alternative interpretation, it selects satisfying sets of cardinality $\ab<\text{minimum}> + k$. 
We write $\Am_k\CMSO$ when the indices of these operators are bounded by $k$, and $\Am\CMSO$ when all indices are zero.
Our extension captures many bilevel optimization problems, including \ProblemName{PAU-VC} and \ProblemName{${}^+k$-Bounded Cut}; the latter is the variant of \ProblemName{$k'$-Bounded Cut} in which $k'=d(s,t)+k$.
For example, \ProblemName{${}^+k$-Bounded Cut} can be defined as a following $\Am_{k}\MSO_2$ formula\footnote{We define $\Argmin^{+i}$ as a ranked version of $\Argmin$ which is not defined as the family of sets achieving $\min(\cdot) + i$, and thus the formulation is not exactly equivalent to the problem.
 However, we can modify the formula to reflect the intended meaning. See~\cref{sec:pre,sec:applications} for details.}

\begin{equation*}
    \phi(S) \equiv S \in \Argmin \begin{bmatrix}
    S \subseteq E \land \forall Y \begin{bmatrix}
    Y \in \Argmin^{\le +k}(\ab<Y \text{ is an $s$-$t$ path}>)\\
    \to S \cap Y \neq \varnothing
    \end{bmatrix}
    \end{bmatrix}
\end{equation*}
Furthermore, we can handle a weighted version of the $\Argmin$ predicate $\Argmin_f$, where $f$ is an additive function, which is defined in \cref{sec:pre}.
We write $\Am_k\CMSO_{i}(\mathcal F)$ to denote the $\Am_k\CMSO$ logic with function symbols from the set $\mathcal F$.

In contrast to previous extensions of $\MSO$ with cardinality comparisons, we show that problems of the above form are fixed-parameter tractable when the width, the formula, and $k$ are treated as parameters.
If the corresponding decompositions are given, the running time is linear in the size of the given decomposition when the parameters are considered as constants.
Combining our algorithms and the known methods for obtaining clique-width and treewidth decompositions, we obtain the following results.
\begin{restatable}{theorem}{evalargmso} \label{thm:eval-argmso}
    Let $G$ be an $n$-vertex graph with clique-width $\cw$.
    Let $\mathcal F$ be a finite set of additive functions $f \colon \pset_{\arity(f)}(V(G)) \to \Z$.
    Let $\phi$ be an $\Am_k\CMSO_1(\mathcal F)$ formula with free set variables $\vec X$.
    Then, a tuple of sets $\vec S$ satisfying $G \models \phi[\vec S]$ can be found, or its nonexistence reported, in time $g(|\phi|,k, \cw)\cdot n^2$, where $g$ is a computable function.
\end{restatable}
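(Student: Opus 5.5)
We reduce to the case where a clique-width expression is given: first we compute a $(2^{\cw+1}-1)$-expression of $G$ with $O(n)$ operations in time $g_0(\cw)\cdot n^2$, using the Oum--Seymour rank-width approximation together with the standard conversion from rank-decompositions to clique-width expressions. The $n^2$ term in the statement comes only from this step. The remaining task is linear-time model checking (with witness extraction) for a fixed $\Am_k\CMSO_1(\mathcal F)$ formula on a given bounded-width expression.

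We proceed by structural induction on $\phi$, working from the innermost $\Argmin/\Argmax$ subformulas outward. For each subformula $\psi(\vec X)$ we maintain, at every node $t$ of the expression, a bounded-size \emph{type summary}. This summary has two parts. The first is the usual $\CMSO_1$ type of quantifier rank $|\psi|$ of the labelled graph $G_t$ with the restricted sets $\vec X\cap V(G_t)$. The second part handles each optimality predicate $Y\in\Argmin^{+i}_f(\chi)$ occurring in $\psi$. For it we record a table indexed by pairs (type $\tau$ of the partial solution $Y\cap V(G_t)$ together with the partial free sets, $j\le k$). Each entry holds the $(j+1)$-st smallest distinct $f$-value achieved by partial solutions of type $\tau$.

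The key step is a Feferman--Vaught-type lemma for the operations $\oplus$, $\rho_{a\to b}$, $\eta_{a,b}$. It states that the type of a composite structure is determined by the types of its parts, and that the set of $k+1$ smallest distinct values of an additive $f$ over the composite is computed from the $k+1$ smallest values of the parts by a min-plus merge over compatible type pairs. For the merge we only need the first $k+1$ values because additive functions are monotone under sums. Once the table is known at the root, we can decide for each global optimum whether a set $Y$ is in $\Argmin^{+i}$. The predicate ``$Y\in\Argmin^{+i}(\chi)$'' then becomes the $\CMSO_1$-with-weights condition ``$\chi(Y)$ holds and $f(Y)=c_i$'', where $c_i$ is a precomputed constant. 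This condition is of the $\LinE\CMSO$ form handled by the optimization variant of Courcelle's theorem, now with an exact target value. Because $c_i$ may be unbounded, we cannot inline it into the formula. Instead we carry $f$-values exactly in the DP state: for each type we store the partial sums as values relative to the optimum.

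The main obstacle is this last issue. States indexed by raw $f$-values are unbounded, so the DP for the outer formula must not enumerate values. The approach I would try first is to replace ``$f(Y)=c_i$'' by a comparison against the optimum. Concretely, we run the inner minimization jointly with the outer quantifier structure. For each type of the outer partial assignment we keep, per inner type, only the $k+1$ best distinct values, and we use the lemma that the global $i$-th distinct value decomposes as a sum of per-part ranks bounded by $i$. With this, the state stays of size bounded by $g(|\phi|,k,\cw)$. Nested optimality predicates are handled by iterating this layer construction, once per nesting depth.

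Finally, we obtain a witness $\vec S$ by standard backtracking through the stored tables. The total cost of this phase is $g(|\phi|,k,\cw)$ per node of the linear-size expression.
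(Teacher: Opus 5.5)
Your ingredients are the right ones: DP over the expression, tables of the $k+1$ smallest distinct $f$-values per type, a min-plus merge, and the fact that the $(i+1)$-st smallest value of a union of sumsets only involves summands of rank at most $i$ (this is \cref{prop:fundamental-union-rank}). But the step you call the main obstacle is left open, and the fixes you sketch do not close it. Carrying $f$-values of partial solutions ``relative to the optimum'' is still unbounded. The constant $c_i$ lives at the root and is invisible at an inner node. A $\CMSO_1$ type of $G_t$ with the restricted sets, plus value tables indexed by types, does not tell you whether a \emph{particular} partial solution on $G_t$ can extend to a member of the global $\Argmin^{+i}_f(\chi)$.

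This bites as soon as the optimality atom sits under outer quantifiers, e.g.\ $\exists S\,\forall Y\,(Y\in\Argmin_f(\chi)\to\psi'(S,Y))$. For a partial $S$ you need a bounded, composable record of which pairs (joint type, \emph{local rank} $a\le k$) are realized by partial $Y$'s. These are Boolean facts, not ``the $k+1$ best values per type of the outer partial assignment'', and your summary has no slot for local ranks. Relatedly, the inner tables must not depend on the outer assignment at all. This is exactly where the restriction that $\chi$ has no free variables besides $\vec Y$ is used: with even one external set variable the problem becomes hard for every level of PH (\cref{thm:hardness-argmso-tree}). So ``together with the partial free sets'' should be dropped.

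The missing idea is to make the local rank itself a formula of the same logic, evaluated on the substructure with its \emph{own} optimum. At a node $\mathcal G_1\oplus\mathcal G_2$, if $(\alpha_j,\beta_j)_j$ is the decomposition of $\chi$, then
\[
\sat\bigl(\mathcal G_1\oplus\mathcal G_2,\ \vec Y\in\Argmin^{+i}_f(\chi),\ \vec Y\bigr)
=\bigcup_{(j,a,b)}\sat\bigl(\mathcal G_1,\ \vec Y\in\Argmin^{+a}_f(\alpha_j),\ \vec Y\bigr)\boxtimes\sat\bigl(\mathcal G_2,\ \vec Y\in\Argmin^{+b}_f(\beta_j),\ \vec Y\bigr),
\]
where $(j,a,b)$ ranges over triples with $a,b\le i$ whose union-ranking (\cref{def:union-rank}) equals $i$. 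These triples are computed from exactly your tables for the children. So an optimality atom decomposes into optimality atoms of the children, and no global constant is ever needed.

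Now take types to be $\Am_k\CMSO$ formulas of bounded quantifier and argmin rank; there are finitely many up to equivalence by \cref{lem:finiteness-CMSOAm}. Then the Boolean and quantifier cases, including $\forall Y$ and alternation over $S$, go through as in the classical Feferman--Vaught proof, and nesting is handled by induction on argmin rank rather than an unspecified layer construction. The price is that the decomposition list is computed per node instead of being fixed by the formula (\cref{thm:FV-MSO-AM}).

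Finally, Oum--Seymour runs in $O(n^9\log n)$ time, and Oum's later refinement in $O(n^3)$, so neither gives the $n^2$ bound. You need the $2^{2^{O(\cw)}}n^2$ algorithm of Fomin and Korhonen, which yields a $(2^{2\cw+1}-1)$-expression rather than a $(2^{\cw+1}-1)$-expression.
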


\begin{restatable}{theorem}{evalargmsotw} \label{thm:eval-argmso-tw}
    Let $G$ be an $n$-vertex graph with treewidth $\tw$.
    Let $\mathcal F$ be a finite set of additive functions $f \colon \pset_{\arity(f)}(V(G) \cup E(G)) \to \Z$.
    Let $\phi$ be an $\Am_k\CMSO_2(\mathcal F)$ formula with free set variables $\vec X$.
    Then, a tuple of sets $\vec S$ satisfying $G \models \phi[\vec S]$ can be found, or its nonexistence reported, in time $g(|\phi|, k, \tw)\cdot n^2$, where $g$ is a computable function.
\end{restatable}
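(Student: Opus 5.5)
The plan is to reduce \cref{thm:eval-argmso-tw} to the clique-width machinery, mirroring how $\CMSO_2$ on bounded treewidth is classically reduced to $\CMSO_1$ on the incidence graph.

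\textbf{Step 1: compute a decomposition.} Use Bodlaender's linear-time algorithm, or any FPT algorithm producing a tree decomposition of width $O(\tw)$ in time $2^{O(\tw^2)}n$ or better. Convert it into a nice tree decomposition with $O(n)$ nodes. Since $|E(G)| = O(\tw\cdot n)$, the incidence graph $I(G)$ has $O(\tw\cdot n)$ vertices.

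\textbf{Step 2: pass to the incidence structure.} Let $I(G)$ be the bipartite incidence graph with a unary label distinguishing edge-vertices. A tree decomposition of $G$ of width $t$ gives one of $I(G)$ of width $O(t)$: attach each edge-vertex to a bag containing both of its endpoints. Hence $I(G)$ has clique-width at most $3\cdot 2^{O(\tw)}$, and an expression of linear size is obtained directly from the decomposition.

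\textbf{Step 3: translate the formula.} Translate $\phi$ into an $\Am_k\CMSO_1(\mathcal F')$ formula $\phi'$ over labeled $I(G)$.
\begin{itemize}
\item Set variables range over $V(G)\cup E(G)$ exactly as before.
\item The incidence predicate becomes adjacency in $I(G)$.
\item Relativizations become the label predicates.
\item Each additive $f$ on $\pset_{\arity(f)}(V(G)\cup E(G))$ becomes an additive function on $V(I(G))$ with the same element-wise weights.
\end{itemize}
The key check is that $\Argmin$, $\Argmax$ and the indexed $\Argmin^{+i}$ translate verbatim. These operators are defined semantically as optimality among satisfying sets over the same universe. The translation is a bijection on set assignments that preserves satisfaction of each subformula, cardinalities and weights, so optimality is preserved inductively on the nesting of operators.

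\textbf{Step 4: apply the clique-width result.} Apply \cref{thm:eval-argmso} to $I(G)$ with the given expression and $\phi'$. Here $|\phi'| = O(|\phi|)$ and the width is bounded by a function of $\tw$. This yields $\vec S$, or reports nonexistence, in time $g(|\phi|,k,2^{O(\tw)})\cdot(\tw\, n)^2$, which is of the form $g'(|\phi|,k,\tw)\cdot n^2$.

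\textbf{Main obstacle.} I expect the main difficulty to be the use of \cref{thm:eval-argmso} itself. As stated, it computes its own clique-width expression for $I(G)$, which is fine up to the function $g$, since $\cw(I(G))$ is bounded in $\tw$.

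The genuine subtlety is that cardinality in $\Argmin$ must count the same elements before and after translation. In $I(G)$, $|X|$ counts exactly the vertices and edges of $G$ in $X$, so this holds.

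If the paper's proof of \cref{thm:eval-argmso} is not black-box reusable, the fallback is a direct Feferman--Vaught argument on the nice tree decomposition. For each $\Argmin$ subformula, store per type the optimal value together with the $k+1$ smallest distinct achievable values. This is the same DP fact that underlies the clique-width case.
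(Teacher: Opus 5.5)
Your proposal is correct and follows essentially the same route as the paper. You pass to the vertex/edge-labeled incidence graph, note that its treewidth is at most that of $G$, so it has clique-width bounded by a function of $\tw$ with an expression built from a tree decomposition, rewrite the $\Am_k\CMSO_2$ formula as an $\Am_k\CMSO_1$ formula over the same universe $V(G)\cup E(G)$ (so the $\Argmin^{+i}_f$ predicates and additive functions carry over unchanged), and invoke the clique-width algorithm on the labeled incidence graph, applying \cref{thm:model-check-AM} with the constructed expression as you do in Step 4.
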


Our framework implies that, for many problems, an FPT algorithm parameterized by clique-width plus a cardinality bound $k$ can be strengthened to one parameterized by clique-width plus $|k-\mathrm{opt}|$.
For example, \ProblemName{Diverse Minimum Vertex Cover} (DMVC) asks for $r$ minimum vertex covers $S_1,\dots, S_r$ that maximize the sum of their pairwise Hamming distances.
DMVC is known to be fixed-parameter tractable parameterized by $w+r+\tau$~\cite{Baste22,DrabikM26}, where $w$ is treewidth or clique-width and $\tau$ is the size of a minimum vertex cover.
Our results remove the dependence on $\tau$; \cref{sec:applications} discusses the details and other applications.
Furthermore, our framework handles weighted versions of these problems, which cannot be captured by the $\LinE\CMSO$ framework even when the bounds in the weighted cardinality constraints are constants.

Additionally, we consider an extension of $\Am\MSO$, called 1-$\Am\MSO$, that allows an $\Argmin$ operator to depend on one external set variable and can express formulas of the form $\phi(X) \equiv \forall Y(Y \in\Argmin(\beta(X, Y) \mid X) \to \alpha(X, Y))$.
See~\cref{sec:hardness} for the details.
We show that the model-checking problem for 1-$\Am\MSO$ is hard for every level of the polynomial hierarchy even on trees of fixed depth.

\begin{restatable}{theorem}{phhardargmso} \label{thm:hardness-argmso-tree}
	For any fixed integer $i \ge 1$,
	there exist 1-$\Am\MSO$-formulas $\phi_s$ and $\phi_p$ such that
	deciding whether $G \models \phi_s$ is $\SigmaP_i$-hard even on trees of depth 4,
	and deciding whether $G \models \phi_p$ is $\PiP_i$-hard even on trees of depth 4.
\end{restatable}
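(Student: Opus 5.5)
The idea is to reduce from the canonical $\SigmaP_i$-complete problem $\exists\forall\cdots$-QBF ($\Sigma_i$-QSAT) and from its $\Pi_i$ counterpart. The reduction uses alternating $\Argmin$ predicates with a dependence on one external set to simulate the quantifier alternation. Plain $\MSO$ on a fixed-depth tree cannot encode a propositional formula of unbounded size together with an assignment. With a single nested level of $\Argmin(\beta(X,Y)\mid X)$, however, we can express that a set $Y$ is a minimum-size ``witness'' relative to the current choice $X$. A set is optimal exactly when no strictly smaller feasible set exists, and this is a universal statement over sets. Iterating the construction raises the level of the hierarchy by one each time. So the plan is to show that 1-$\Am\MSO$ can express ``$X \in \Argmin(\beta(X,Y)\mid X)$''-style statements whose truth encodes a $\Sigma_{j}$ condition, and then build $\phi_s$ by induction on $i$.

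Concretely, I would first fix a tree encoding of a QBF $\exists \vec x_1 \forall \vec x_2 \cdots Q \vec x_i\, \psi$ with $\psi$ in CNF (or DNF, depending on the parity of $i$).
\begin{itemize}
\item The root has one child per variable and one child per clause.
\item A variable node is labelled by its block index $j\le i$. Since the formulas are fixed and the labels must be $\MSO$-definable on an unlabelled tree, the labels are realized by gadgets: a bounded number of leaf children distinguish the types.
\item A clause node has children representing literal occurrences. Each such child has a pendant path, at depth 4 overall, whose structure identifies which variable it refers to.
\end{itemize}
The identification of variables is where the difficulty lies. $\MSO$ on a tree of depth 4 cannot compare unbounded indices. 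I would therefore encode the variable identity numerically, giving the $t$-th variable $t$ leaves under its node and $t$ leaves under each corresponding literal node. The optimality predicate is then used to test equal cardinality: $\Argmin$ over sets $Y$ that match leaves of the literal gadget injectively against leaves of the variable gadget (or over a symmetric-difference construction) detects the equality $|A|=|B|$, which $\MSO$ alone cannot express. This requires only the one external set variable, namely the current assignment/selection set $X$.

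The inductive step handles the quantifiers. An assignment to blocks $1..j$ is a set $X$ of selected variable nodes. I would express ``for all extensions on block $j+1$'' as universal set quantification inside $\MSO$ itself. The only non-$\MSO$ ingredient, counting-based matching of literals to variables, is supplied by the $\Argmin$ gadget. If the matching gadget suffices, the whole QBF becomes a plain $\MSO$ alternation of length $i$ over sets plus the equality test, giving $\phi_s$ and $\phi_p$ (the latter via negation or by starting with $\forall$). A sanity check is needed here: the construction must not collapse to NP or coNP. The evaluation of $\psi$ under $X$ must be expressible for every $X$, and the $\Argmin(\cdot\mid X)$ dependence is exactly what allows the optimality test to vary with the externally quantified set.

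The main obstacle I expect is designing the $\Argmin(\beta(X,Y)\mid X)$ gadget so that it correctly certifies ``literal node $\ell$ refers to a variable selected in $X$'' using only cardinality minimization on a depth-4 tree. My first attempt would let $\beta(X,Y)$ require $Y$ to contain, for each literal gadget, at least as many leaves as the variable gadget it points to, forced via $X$. Minimality of $|Y|$ would then pin the counts exactly, and $\alpha(X,Y)$ would read off satisfaction. Care is needed so that ill-formed $X$ (not corresponding to an assignment) are excluded by an $\MSO$ guard, and so that $\Argmin$ of an empty family behaves harmlessly.
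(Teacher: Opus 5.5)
\paragraph{Verdict: the plan is right, but the equality gadget is missing.}
Your overall architecture matches the paper's:
\begin{itemize}
\item variable indices are encoded by subtree sizes on a bounded-depth tree;
\item quantifier alternation is done by ordinary $\MSO$ set quantifiers over the blocks, so the nested-$\Argmin$ induction in your first paragraph is not needed;
\item labels are realized by leaf gadgets;
\item the optimality predicate is used only to test equality of two cardinalities.
\end{itemize}
That equality test is the whole source of hardness, because plain $\MSO$ on trees of bounded depth is tractable. You leave it open, and the concrete attempts you sketch do not work.

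\paragraph{Why your attempts fail.}
Requiring $Y$ to contain ``at least as many leaves as the variable gadget it points to'' presupposes two things. It needs a cardinality comparison between sets in different subtrees, which $\MSO$ cannot express. It also needs to know which variable the literal points to, which is exactly what the test is supposed to establish. The matching version fails as well. Leaves in different subtrees are non-adjacent, and $\MSO_1$ quantifies only over vertex sets, so an injective matching between them is not definable. Finally, insisting that the single parameter of the operator be the assignment $X$ is a self-imposed restriction that makes a per-literal test awkward. In 1-$\Am\MSO$ the parameter can be any set variable in scope.

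\paragraph{The missing idea and the paper's construction.}
Let the family being minimized consist of exactly the sets whose sizes you want to compare, and pack those sets into the parameter. The paper defines
\[
\mathtt{Same}(W) \equiv \forall Y\,\bigl(\mathtt{Conn}(W,Y) \to Y \in \Argmin(\mathtt{Conn}(W,Y) \mid W)\bigr),
\]
which says that every connected component of $G[W]$ has minimum size among them. For disjoint, non-adjacent connected sets $A$ and $B$, $\mathtt{Same}(A \cup B)$ holds iff $|A| = |B|$. Here $W = A \cup B$ is an existentially quantified set, not the assignment.

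With this gadget your plan can be completed. Compare a variable node together with its index leaves against a literal node together with its index leaves. Then check the variable node's membership in the assignment against the literal's polarity.

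The paper streamlines the polarity handling as follows.
\begin{itemize}
\item Each $x_i$ gets two children $x_i^T, x_i^F$ with $2i+2$ and $2i+1$ leaf children respectively.
\item A literal $v_j$ (resp.\ $\lnot v_j$) gets a subtree with $2j+2$ (resp.\ $2j+1$) leaves.
\item An assignment selects one of the two subtrees for each variable.
\item A clause is satisfied iff one of its literal subtrees $D$ has the same size as a component $S'$ of the selected set, which is tested by $\mathtt{Same}(S' \cup D)$.
\end{itemize}
This gives depth $3$ with $O(i)$ colors, and depth $4$ after the colors are replaced by attached leaves.
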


We also observe that the known algorithmic frameworks for $\CMSO$ properties yield a meta-theorem that captures a wide range of problems with cardinality constraints such as  \ProblemName{Equitable $k$-Coloring}~\cite{BodlaenderF05}, \ProblemName{Equitable $k$-Partition}~\cite{BodlaenderF05}, \ProblemName{Partial Dominating Set}~\cite{AminiFS11}, and \ProblemName{Bisection}~\cite{HanakaKS21}.
See \cref{subsec:cardcmso} for details.
\begin{restatable}{theorem}{cardinalityconstrainedcmso}\label{thm:cardinality-constrained-cmso}
    \ProblemName{Cardinality-Constrained $\Am\CMSO_1$ Problem} can be solved in $g(\cw,|\phi|) \cdot n^{2k+1}$ time for some computable function $g$, where $\cw$ is the clique-width of the input graph.
\end{restatable}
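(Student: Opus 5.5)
The plan is to reduce the problem to a dynamic program over a clique-width expression in which each table entry is indexed by a pair consisting of a finite ``type'' of the partial solution and the vector of its cardinalities. I assume the problem takes as input an $\Am\CMSO_1$ formula $\phi(X_1,\dots,X_k)$ and a predicate $P$ on $\{0,\dots,n\}^k$ (for instance, linear constraints on the $|X_i|$), and asks for $\vec S$ with $G\models\phi[\vec S]$ and $P(|S_1|,\dots,|S_k|)$. If the paper's definition differs, the same scheme should apply with the cardinality vector replaced by whatever bounded-range integer statistics are constrained.

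First, I obtain a clique-width expression of width $f(\cw)$ and size $O(n)$ using the standard approximation algorithm, in $g(\cw)\cdot n^2$ time. Next, I invoke the machinery behind \cref{thm:eval-argmso}, namely the Feferman--Vaught-type decomposition for $\Am\CMSO_1$, in the following form.
\begin{itemize}
  \item The $\Argmin/\Argmax$ subformulas are global: their truth depends on optimum values over the whole graph.
  \item I first eliminate them by computing, bottom-up, the relevant optimum values, as in the proof of \cref{thm:eval-argmso}.
  \item This turns $\phi$ into a formula whose satisfaction for a tuple $\vec S$ is determined by a homomorphic ``type'' map.
  \item Concretely, each node $t$ and tuple $\vec S_t$ restricted to the subgraph $G_t$ get a type $\tau(G_t,\vec S_t)$ from a finite set $\mathcal T$ with $|\mathcal T|\le g(\cw,|\phi|)$.
  \item The type at a union, relabel, or edge-creation node is computable from the children's types alone, and whether $G\models\phi[\vec S]$ is determined by the type at the root.
\end{itemize}
I take this compositionality as given from the earlier sections; it is the core of \cref{thm:eval-argmso}.

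The dynamic program stores, for each node $t$, the set $D_t\subseteq\mathcal T\times\{0,\dots,|V(G_t)|\}^k$ of realizable pairs $(\tau(G_t,\vec S_t),(|S_{t,1}|,\dots,|S_{t,k}|))$. Each node type is handled as follows.
\begin{itemize}
  \item Leaves are trivial.
  \item Relabel and edge-creation nodes map types and keep the vectors, at cost $O(|\mathcal T| n^k)$.
  \item At a disjoint-union node, I combine every pair of entries: the types combine via the composition map, and the vectors add coordinatewise because the vertex sets are disjoint.
\end{itemize}
The union step costs $O(|\mathcal T|^2 n^{2k})$ per node, so over $O(n)$ nodes the total is $g(\cw,|\phi|)\cdot n^{2k+1}$, matching the claim. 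At the root, I accept if some entry has an accepting type and a vector satisfying $P$. A witness is recovered by standard backtracking with stored pointers.

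The main obstacle is the correct handling of the $\Argmin/\Argmax$ predicates, including weighted ones. Their meaning refers to the optimum over all sets of the whole graph, not to the partial structure, so the types must be parameterized by the global optimum values. The approach I would try first is the one from \cref{thm:eval-argmso}:
\begin{itemize}
  \item Compute the optima in a preliminary pass, working innermost-first over the nesting of optimality predicates.
  \item Within the types, track only the bounded ``offset'' information: whether a partial candidate can still extend to an optimal one, encoded via its difference from the best partial value of the same type. This keeps $\mathcal T$ finite.
\end{itemize}
I need to verify that this offset encoding remains compositional when the unbounded cardinality vector is carried alongside it. Since the vector only adds at union nodes and never interacts with the type transition, I expect the product construction to be sound.
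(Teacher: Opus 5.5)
Your proposal is correct and is essentially the paper's proof: the table of realizable pairs (type, cardinality vector) you maintain at each node is exactly the Boolean-semiring spectrum that the paper computes via \cref{thm:evalargmso}, with the same $O(n^{2k})$ combination cost per union node over $O(n)$ nodes, followed by a scan against $\Gamma$ and standard backtracking. One caveat concerns your final paragraph: the paper does not fix global optima and track offsets, but instead builds node-dependent Feferman--Vaught lists from the children's $p+1$ smallest values via union-rankings, and it is ranks, not differences, that stay bounded. This does not affect your argument, since you only use the resulting finite, node-wise compositional types as a black box.
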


\subsection{Algorithmic Techniques}
We give an overview of the techniques behind our algorithmic results.
Our algorithm is based on the Feferman--Vaught theorem~\cite{FefermanV59,Makowsky04}, a key tool for obtaining dynamic-programming algorithms for $\MSO$ model checking on graphs of bounded clique-width or treewidth.
An extension of the Feferman--Vaught theorem to $\MSO$ states that the truth value of any $\MSO$ sentence $\phi$ on the disjoint union of two graphs $G$ and $H$ can be determined from the truth values, on $G$ and $H$, of $\MSO$ formulas whose quantifier rank is at most that of $\phi$.
Here, the quantifier rank is a complexity measure of a formula, defined as the maximum depth of nested quantifiers.
A fundamental fact of finite model theory is that, for a fixed finite signature and a fixed finite set of free variables, the number of $\MSO$ formulas of quantifier rank at most $q$ is finite up to logical equivalence.
Thus, the Feferman--Vaught theorem implies that the truth value of $\phi$ on the disjoint union of two graphs can be determined by a finite number of truth values of $\MSO$ formulas on the two graphs.

We extend the Feferman--Vaught theorem for $\MSO$ to our $\Am\MSO$ logic and show that $\Am\MSO$ model checking can be solved by dynamic programming on a decomposition tree of the input graph.
We first describe an intuitive obstacle to extending the Feferman--Vaught theorem to $\MSO$ with cardinality evaluation.
Consider a (hypothetical) dynamic programming algorithm with a table $\mathcal E\colon \Pi\times \mathbb N \to (2^{V} \to \{\top, \bot\})$, where $\Pi$ is a finite set of properties of subsets of $V$ (e.g., properties defined by $\MSO$ formulas $\phi(X)$ of bounded quantifier rank).
Set $\mathcal E_{G}[\pi, n][S] = \top$ if the set $S$ satisfies the property $\pi \in \Pi$ and $|S| = n$, and set $\mathcal E_{G}[\pi, n][S] = \bot$ otherwise.
Assume that we can update $\mathcal E$ as follows, using a function $F: \Pi \to 2^{\Pi \times \Pi}$:
\[
    \mathcal E_{G \oplus H}[\pi, n][S] = \bigvee_{(\pi_1, \pi_2) \in F(\pi)} \bigvee_{n_1 + n_2 = n} (\mathcal E_G[\pi_1, n_1][S \cap V(G)] \land \mathcal E_H[\pi_2, n_2][S \cap V(H)]).
\]
Here, the number of disjuncts is unbounded when $n$ is unbounded, since the number of pairs of nonnegative integers $n_1,n_2$ with $n_1+n_2=n$ grows with $n$.
This is an obstacle to extending the Feferman--Vaught theorem to $\MSO$ with cardinality evaluation.

Our strategy is based on the two fundamental dynamic-programming techniques: storing only the minimum feasible cardinality and reconstructing an optimal solution by backtracking through the computation.
Let $n_{\min}^{G}(\pi) = \min\{|S| : \mathcal E_G[\pi, |S|][S] = \top\}$, with $\min\varnothing=\infty$. We consider the restricted table $\mathcal E' \colon \Pi \to (2^{V} \to \{\top, \bot\})$ defined by $\mathcal E_G'[\pi][S] = \mathcal E_G[\pi, n_{\min}^{G}(\pi)][S]$ when $n_{\min}^{G}(\pi)$ is finite, and by $\mathcal E_G'[\pi][S]=\bot$ otherwise.
Let $F'_{G,H}: \Pi \to 2^{\Pi \times \Pi}$ return the pairs $(\pi_1, \pi_2)\in F(\pi)$ that minimize $n_{\min}^{G}(\pi_1) + n_{\min}^{H}(\pi_2)$ among pairs with finite values. This function depends on $G$ and $H$, and it yields the update rule
\[
    \mathcal E_{G \oplus H}'[\pi][S] = \bigvee_{(\pi_1, \pi_2) \in F'_{G,H}(\pi)} (\mathcal E_G'[\pi_1][S \cap V(G)] \land \mathcal E_H'[\pi_2][S \cap V(H)]).
\]
Hence, if $F'_{G,H}$ is known, we can determine whether a set is a minimum-cardinality set satisfying $\pi$ on $G \oplus H$ using finitely many properties on $G$ and $H$.
Using standard dynamic programming for $\MSO$ model checking, we can compute the table $F'_{G,H}$ and thus obtain a dynamic programming algorithm for $\Am\MSO$ model checking on graphs of bounded clique-width or treewidth.
Combining this procedure with the structural induction in the proof of the Feferman--Vaught theorem, we show that $\Am\MSO_1$ model checking is fixed-parameter tractable parameterized by clique-width and the formula, and that the corresponding result holds for $\Am\MSO_2$ and treewidth.

\section{Preliminaries}\label{sec:pre}
All objective functions and weight functions in this paper are integer-valued.
For simplicity, we assume that addition and comparison of two integers take constant time.
When large integers are involved, the running-time bounds in this paper can instead be interpreted as bounds on the number of arithmetic operations.

A problem is \emph{fixed-parameter tractable} parameterized by $p$ if it can be solved in $f(p)n^c$ time, where $f$ is a computable function, $n$ is the input length, and $c$ is a constant independent of $p$.
A problem is in \emph{XP} parameterized by $p$ if it can be solved in $n^{f(p)}$ time, where $f$ is a computable function and $n$ is the input length.

Let $U$ be a set.
We denote the power set of $U$ by $\pset(U)$ or $2^{U}$.
For $k\in \Z_{> 0}$, let $\pset_k(U)  = (2^{U})^k$.

Let $\vec x$ be a tuple.
The $i$-th element of $\vec x$ is denoted by $x_i$.
Let $l,r$ be integers.
The integer interval $\{i \in \mathbb Z : l \le i \le r\}$ is denoted by $[l,r]$.
We denote $[1,r]$ by $[r]$.
Let $f$ be a function from a domain $D$ to a codomain $C$.
For $S \subseteq D$, let $f(S) = \{f(s) : s \in S\}$.

We consider extended $\min$ operations, which denote the $i$-th smallest element of a set.
\begin{definition}\label{def:imin}
	Let $D = \{d_1, \ldots, d_n\} \subseteq \Z \cup \{\infty\}$ with $d_i < d_{i+1}$ for all $i \in [n-1]$.
	For $i \in [0, n-1]$, $d_{i+1}$ is the \emph{$(i+1)$-st smallest element} of $D$ and is denoted by $\imin^{+i}(D)$.
	For $i \ge n$, set $\imin^{+i}(D) = \infty$. 
\end{definition}
Here, $\imin^{+0}(D) = \min(D)$, and the superscript $+i$ indicates the distance from the minimum element on the total order of $D$.
The operator $\imax^{+i}(D)$ is defined analogously. By duality, we focus on $\imin^{+i}(D)$.

\begin{definition}\label{def:additive}
	Let $D$ be a finite set and $k\in \mathbb Z_{>0}$.
	For a set $S\subseteq D$ and $v \in D$, define the \emph{indicator function} $\indicator_S(v) = 1$ if $v \in S$ and $\indicator_S(v) = 0$ otherwise.
    A function $f\colon \pset_k(D) \to \Z$ is \emph{additive}
    if there exists a tuple of functions $\ab<g_v\colon \{0,1\}^k \to \Z>_{v \in D}$ such that $f(\vec S) = \sum_{v \in D} g_v(\indicator_{S_1}(v), \ldots, \indicator_{S_k}(v))$ for all $\vec S \in \pset_k(D)$ and $g_v(0, \dots, 0) = 0$.
\end{definition}

Weighted cardinality is one example. If $w\colon D \to \Z$ is a weight function, then
    $f(\vec S) = \sum_{i=1}^{k} \sum_{v \in S_i} w(v)$
is additive.
In this paper, we assume that the additive function $f$ is given by the tuple of functions $\ab<g_v>_{v \in D}$.

From the definition, the following property of additive functions is immediate.
\begin{observation}\label{obs:separable-additive}
	Let $D$ be a finite set and $k\in \mathbb Z_{>0}$.
	Let $f\colon \pset_k(D) \to \Z$ be an additive function on tuples of sets.
	Then, for any $D_1, D_2 \subseteq D$ with $D_1 \cap D_2 = \varnothing$, and any $\vec S \in \pset_k(D_1)$ and $\vec T \in \pset_k(D_2)$, we have $f(\ab<S_i \cup T_i>_{i \in [k]}) = f(\vec S) + f(\vec T)$.
\end{observation}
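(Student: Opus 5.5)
The plan is to unfold both sides of the claimed identity through the definition of additivity (\cref{def:additive}) and split the resulting sum over $D$ into a part indexed by $D_1$, a part indexed by $D_2$, and a remainder. Fix the tuple $\ab<g_v>_{v \in D}$ witnessing that $f$ is additive, so that
\[
f(\ab<S_i \cup T_i>_{i \in [k]}) = \sum_{v \in D} g_v\bigl(\indicator_{S_1 \cup T_1}(v), \ldots, \indicator_{S_k \cup T_k}(v)\bigr).
\]
We split this sum according to whether $v \in D_1$, $v \in D_2$, or $v \in D \setminus (D_1 \cup D_2)$. This is a partition of $D$ because $D_1 \cap D_2 = \varnothing$.

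For $v \in D_1$ we have $v \notin D_2 \supseteq T_i$, so $\indicator_{S_i \cup T_i}(v) = \indicator_{S_i}(v)$ for every $i$. Symmetrically, for $v \in D_2$ we have $\indicator_{S_i \cup T_i}(v) = \indicator_{T_i}(v)$. For $v$ outside $D_1 \cup D_2$, every indicator vanishes, and the normalization $g_v(0,\dots,0)=0$ makes the whole term zero. Hence the full sum equals
\[
\sum_{v \in D_1} g_v(\indicator_{S_1}(v),\dots,\indicator_{S_k}(v)) + \sum_{v \in D_2} g_v(\indicator_{T_1}(v),\dots,\indicator_{T_k}(v)).
\]

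Next we identify each of these two sums with $f(\vec S)$ and $f(\vec T)$ respectively. Expanding $f(\vec S)$ over all of $D$, the terms with $v \notin D_1$ have all indicators zero, since $\vec S \in \pset_k(D_1)$. These terms therefore vanish, again by $g_v(0,\dots,0)=0$, and $f(\vec S)$ reduces to the sum over $D_1$. The same argument applies to $\vec T$ and $D_2$.

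The only point needing care is the normalization $g_v(\vec 0)=0$. It is used in two places: to kill the terms outside $D_1 \cup D_2$, and to restrict $f(\vec S)$ and $f(\vec T)$ to their supports. The rest is pure bookkeeping, so I expect no genuine obstacle.
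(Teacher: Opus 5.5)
Your proof is correct and is the direct unfolding of the definition of additivity that the paper has in mind when it calls this observation immediate; the paper gives no written proof. You also correctly identify the normalization $g_v(0,\dots,0)=0$ as the one ingredient that matters. Without it, the right-hand side would exceed the left by $\sum_{v \in D} g_v(0,\dots,0)$.
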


By the result of Courcelle and Mosbah~\cite{CourcelleM93} and \cref{obs:separable-additive}, if $f$ is additive, then the value $\min\{f(\vec S) : G\models \phi[\vec S]\}$ for a fixed $\CMSO_2$ formula $\phi$ can be computed in linear time on graphs of bounded treewidth.

\subsection{Logic}
A \emph{relational signature} $\sigma$ is a finite set of constant symbols and relation symbols with specified arities.
The set of first-order formulas over $\sigma$, denoted by $\FO[\sigma]$, is constructed from $R(t_1, \ldots, t_k)$, $t_1 = t_2$, $\lnot \phi$, $\phi_1 \land \phi_2$, and $\exists x\,\phi$, for each $k$-ary relation symbol $R \in \sigma$, terms $t_i$, and first-order variable $x$.

The set of monadic second-order formulas over $\sigma$, denoted by $\MSO[\sigma]$, extends $\FO[\sigma]$ with $t \in X$ and $\exists X\,\phi$ for each set variable $X$.
$\CpMSO$ extends $\MSO$ with predicates $\texttt{Card}_{m,p'}(X)$, meaning $|X| \equiv m \pmod {p'}$, for set variables $X$ and integers $m,p'$ with $2\le p'\le p$ and $0\le m<p'$.
We use $\CMSO$ to denote $\CpMSO$ for some constant $p$.
The quantifier rank of a $\CMSO$ formula $\phi$, denoted by $r_Q(\phi)$, is the maximum nesting depth of its quantifiers.
For a logic $\mathcal L$, a relational signature $\sigma$, and a nonnegative integer $r_Q$, let $\mathcal L[\sigma, r_Q]$ be the set of all $\mathcal L$ formulas over $\sigma$ whose quantifier rank is at most $r_Q$.
A \emph{sentence} is a formula without free variables.

In this paper, we consider only finite structures.
A $\sigma$-structure $\mathcal G$ consists of a domain $V$, an interpretation $R^{\mathcal G} \subseteq V^{\arity(R)}$ for each relation symbol $R \in \sigma$, and an interpretation $c^{\mathcal G}\in V$ for each constant symbol $c\in\sigma$.
Let $\vec X = \ab<X_1, \dots, X_{k_2}>$ be a tuple of $k_2$ free set variables and  $\vec x = \ab<x_1, \dots, x_{k_1}>$ ($k_1, k_2 \in \mathbb N$) be a tuple of $k_1$ free first-order variables.
Let $\phi$ be an $\mathcal L[\sigma]$ formula whose free set variables are a subset of $\vec X$ and whose free first-order variables are a subset of $\vec x$.
An \emph{assignment} of $\phi$ on $\mathcal G$ is a tuple $(\vec A, \vec a)$ of functions $\vec A \colon \vec X \to 2^{V}$ and $\vec a \colon \vec x \to V \cup \{\bot\}$.
We write $A_i = \vec A(X_i)$ for each $i \in [k_2]$ and $a_i = \vec a(x_i)$ for each $i \in [k_1]$.
An assignment $(\vec A, \vec a)$ is \emph{valid} if $a_i \neq \bot$ for every $i \in [k_1]$.
The semantics of $\mathcal G \models \phi[\vec A, \vec a]$ for a valid assignment $(\vec A, \vec a)$ is defined in the standard way for $\CMSO$ formulas.
For a sentence $\phi$, we write $\mathcal G \models \phi$ if $\mathcal G \models \phi[\varnothing]$.

To define $\Am_p\CMSO$, we extend the notion of relational signature.
A \emph{measure-enriched signature} or \emph{measured signature} $\tau = (\sigma, \mathcal F)$ is a pair consisting of a relational signature $\sigma$ and a finite set $\mathcal F$ of function symbols $f$ with arities $\arity(f)$.
Let $p$ be a nonnegative integer. The logic \emph{$\Am_p\CMSO$} over $\tau$ extends $\CMSO[\sigma]$ with the following predicates, for each $i \in [0,p]$, each $f \in \mathcal F$, each tuple of set variables $\vec X$ with $\arity(f)$ variables, and each $\Am_p\CMSO$ formula $\phi$ whose free set variables are a subset of $\vec X$ and which has no free first-order variables:
 \[ \vec X \in \Argmin^{+i}_{f}(\phi).  \]

A structure over a measured signature $\tau = (\sigma, \mathcal F)$ is a pair $\mathcal G = (\mathcal A, \mathcal F^{\mathcal A})$, where $\mathcal A$ is a $\sigma$-structure with domain $A$ and $\mathcal F^{\mathcal A} = \{f^{\mathcal A} : f \in \mathcal F\}$ is a set of functions with $f^{\mathcal A}\colon (2^{A})^{\arity(f)} \to \Z$.
The predicate $\vec X \in \Argmin^{+i}_f(\phi)$ holds when $\vec X$ satisfies $\phi$ in $\mathcal G$ and
\[
    f^{\mathcal A}(\vec X) = \imin^{+i}\ab\{f^{\mathcal A}(\vec X') : \mathcal G \models \phi[\vec X']\}.
\]
If the arity of $f$ is 1 and $f^{\mathcal A}(X) = |X|$, then we may omit the subscript $f$ and write $X \in \Argmin^{+i}(\phi)$.
Moreover, if it is clear from the context, we may omit the superscript $\mathcal A$ and write $f(X)$ instead of $f^{\mathcal A}(X)$.
We may write $\Am\CMSO$ to denote $\Am_0\CMSO$.
If $\mathcal F$ consists only of the cardinality function $X \mapsto |X|$, we identify the measured signature $(\sigma, \mathcal F)$ with the relational signature $\sigma$.
The $\Argmax$ operator is defined analogously to the $\Argmin$ operator.
Since $\Argmax$ can be expressed in terms of $\Argmin$ by negating the function values, we do not consider it as a primitive operator in our logic.

The following is a fundamental property of $\CMSO$ logic (see~\cite[Section~5.6]{CourcelleEngelfriet12}).
\begin{lemma}\label{lem:finiteness-CMSO}
	Fix a signature $\sigma$, a bound on the counting moduli, a nonnegative integer $r_Q$, and a finite set $\mathcal X$ of variables\footnote{Here, the calligraphic symbol represents a set containing both second-order and first-order variables.}.
	Let $\CMSO[\sigma, r_Q, \mathcal X]$ be the set of all $\CMSO[\sigma, r_Q]$ formulas whose free variables form a subset of $\mathcal X$.
	There is an equivalence relation $\approx$ on $\CMSO[\sigma,r_Q, \mathcal X]$ with the following properties:
	(i) If $\phi \approx \psi$, then for every $\sigma$-structure $\mathcal G$ and every assignment $(\vec A, \vec a)$, we have $\mathcal G \models \phi[\vec A, \vec a]$ if and only if $\mathcal G \models \psi[\vec A, \vec a]$;
	(ii) The number of equivalence classes of $\approx$ is finite;
	(iii) A canonical representative of the $\approx$-class of a formula is computable from that formula.
\end{lemma}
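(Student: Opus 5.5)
We prove the lemma by induction on $r_Q$. For each $r$ and each finite variable set $\mathcal X$ we define a computable normalization map $\mathrm{nf}_{r,\mathcal X}$ on $\CMSO[\sigma, r, \mathcal X]$ with three properties: it preserves semantics on every structure and every assignment; it has finite range; and it can be computed from the input formula. We then set $\phi \approx \psi$ if and only if $\mathrm{nf}(\phi) = \mathrm{nf}(\psi)$. With this definition, (i) follows from semantic preservation, (ii) from the finite range, and (iii) from computability, taking $\mathrm{nf}(\phi)$ itself as the canonical representative. We do not try to capture full semantic equivalence, which is undecidable in general. We only need a sound, syntactically computable coarsening of syntactic identity that still has finitely many classes.

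\emph{Base case $r=0$.} Since $\sigma$ is finite, every relation has bounded arity, the moduli are bounded by $p$, and $\mathcal X$ is finite, the set $\mathrm{At}(\mathcal X)$ of atomic formulas is finite. These atoms are $R(t_1,\dots,t_k)$ and $t_1=t_2$ with terms drawn from the constants and the first-order variables of $\mathcal X$, together with $t\in X$ and $\texttt{Card}_{m,p'}(X)$. A quantifier-free formula is a Boolean combination of such atoms. We normalize it to the set of truth assignments $\beta\colon \mathrm{At}(\mathcal X)\to\{\top,\bot\}$ that make it true, written as a canonical DNF in a fixed order of the atoms. There are at most $2^{2^{|\mathrm{At}(\mathcal X)|}}$ such normal forms, and the normal form is computable by evaluating the formula propositionally.

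\emph{Inductive step $r \to r+1$.} A formula of rank at most $r+1$ is a Boolean combination of atoms in $\mathrm{At}(\mathcal X)$ and of formulas $\exists y\,\psi$ or $\exists Y\,\psi$ (with $\forall$ read as $\lnot\exists\lnot$), where $\psi$ has rank at most $r$. First, we rename the bound variable to a fixed fresh variable $y_{r+1}$ (first-order) or $Y_{r+1}$ (set), chosen outside $\mathcal X$. This $\alpha$-renaming preserves semantics, and after it $\psi$ lies in $\CMSO[\sigma, r, \mathcal X\cup\{y_{r+1}\}]$ or $\CMSO[\sigma, r, \mathcal X\cup\{Y_{r+1}\}]$. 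Second, by the induction hypothesis we replace $\psi$ by $\mathrm{nf}_r(\psi)$. The resulting set $B_{r+1}(\mathcal X)$ of basic formulas consists of $\mathrm{At}(\mathcal X)$ together with $\exists y_{r+1}\,\chi$ and $\exists Y_{r+1}\,\chi$ for $\chi$ ranging over the finitely many normal forms of rank $r$, so $B_{r+1}(\mathcal X)$ is finite. Third, we normalize the Boolean combination propositionally over $B_{r+1}(\mathcal X)$ exactly as in the base case. Each step preserves satisfaction under every assignment. For renaming this uses that the fresh variable does not occur free, and for substitution under $\exists$ it uses that the induction hypothesis gives equivalence under all assignments to the enlarged variable set. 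Each step is effective, and the range has size at most $2^{2^{|B_{r+1}(\mathcal X)|}}$.

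The main obstacle is bookkeeping rather than depth. First, the induction must be carried out uniformly over all finite variable sets, because the inner formula lives over $\mathcal X$ extended by one bound variable. Second, bound-variable names must be canonicalized, since without this even the quantifier-free layer would be infinite: there would be infinitely many variable names. Third, the Boolean layer must be normalized to a set of truth assignments rather than kept as a syntax tree, because otherwise repeated connectives give infinitely many syntactically distinct formulas. Once these three points are fixed, properties (i)–(iii) follow directly from the induction.
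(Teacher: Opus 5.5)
Your argument is correct and is the standard one. The paper does not prove this lemma itself but cites \cite[Section~5.6]{CourcelleEngelfriet12}; its proof of the $\Am\CMSO$ analogue (\cref{lem:finiteness-CMSOAm}) follows your scheme exactly: induction on quantifier rank, renaming bound variables to fixed fresh symbols, and full disjunctive normal form over the finitely many resulting basic formulas.

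Two details are worth making explicit. First, $\mathrm{nf}$ is idempotent by a routine induction, so $\mathrm{nf}(\phi)$ really lies in the $\approx$-class of $\phi$. Second, the paper refines classes by the exact free-variable set, normalizing over atoms of the actual free variables of $\phi$ rather than all of $\mathcal X$. A full DNF over $\mathrm{At}(\mathcal X)$ makes every variable of $\mathcal X$ syntactically free. That is harmless for (i)--(iii), but awkward when representatives are later evaluated on partial assignments in the Feferman--Vaught lists.
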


We obtain a similar property for $\Am\CMSO$.
The \emph{quantifier rank} of an $\Am\CMSO$ formula $\phi$, denoted by $r_Q(\phi)$, is the maximum nesting depth of its quantifiers.
The \emph{argmin rank} of $\phi$, denoted by $r_A(\phi)$, is the maximum nesting depth of its $\Argmin$ and $\Argmax$ operators.
We write $\Am_p\CMSO[\tau,r_Q,r_A]$ for the formulas over $\tau$ with quantifier rank at most $r_Q$ and argmin rank at most $r_A$.

\begin{lemma}
	\label{lem:finiteness-CMSOAm}
    Let $\tau = (\sigma, \mathcal F)$ be a measured signature and $p \in \pZ$.
	Fix a bound on the counting moduli, nonnegative integers $r_Q,r_A$, and a finite set $\mathcal X$ of variables.
	Let $\Am_p\CMSO[\tau,\allowbreak r_Q, r_A, \mathcal X]$ be the set of all $\Am_p\CMSO[\tau, r_Q, r_A]$ formulas whose free variables form a subset of $\mathcal X$.
	There is an equivalence relation $\approx$ on $\Am_p\CMSO[\tau,r_Q, r_A, \mathcal X]$ with the following properties:
	(i) If $\phi \approx \psi$, then for every measured $\tau$-structure $\mathcal G$ and every assignment $(\vec A, \vec a)$, we have $\mathcal G \models \phi[\vec A, \vec a]$ if and only if $\mathcal G \models \psi[\vec A, \vec a]$;
	(ii) The number of equivalence classes of $\approx$ is finite;
	(iii) A canonical representative of the $\approx$-class of a formula is computable from that formula.
\end{lemma}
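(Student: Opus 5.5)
We prove \cref{lem:finiteness-CMSOAm} by induction on the argmin rank $r_A$, reducing at each level to \cref{lem:finiteness-CMSO} over an enlarged signature. The idea is to treat each $\Argmin$-atom as a fresh relation symbol, because its inner formula has strictly smaller argmin rank and, up to equivalence, there are only finitely many such formulas.

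For $r_A = 0$ the formulas are plain $\CMSO[\sigma, r_Q, \mathcal X]$ formulas, and \cref{lem:finiteness-CMSO} gives the relation directly. For the inductive step, suppose the statement holds for argmin rank $r_A - 1$. Consider an atom $\vec Y \in \Argmin^{+i}_f(\psi)$ occurring in a formula of $\Am_p\CMSO[\tau, r_Q, r_A, \mathcal X]$.

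First we bound the inner formula $\psi$.
\begin{itemize}
\item $\psi$ has argmin rank at most $r_A - 1$.
\item It has no free first-order variables.
\item Its free set variables are among $\vec Y$, and $|\vec Y| = \arity(f) \le m := \max_{f \in \mathcal F} \arity(f)$.
\item Its own quantifiers are not counted in the outer quantifier rank, since the operator binds them separately. We must therefore fix a separate bound. It is cleanest to let the rank parameters bound the whole formula tree, including quantifiers inside $\Argmin$ scopes. Then $\psi$ has quantifier rank at most $r_Q$ over the fixed variable set $\{Y_1, \dots, Y_m\}$; we rename variables canonically, since $\Argmin$ binds $\vec Y$.
\end{itemize}
By the induction hypothesis, $\psi$ lies in one of finitely many classes $[\psi_1], \dots, [\psi_N]$, each with a computable canonical representative.

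By property (i) of the hypothesis, the truth of $\vec Y \in \Argmin^{+i}_f(\psi)$ under an assignment depends only on the class of $\psi$, together with $i \in [0,p]$, $f \in \mathcal F$, and the tuple of set variables actually plugged in. The last of these is a tuple of variables from the finite set $\mathcal X$ or from quantified variables. After canonical renaming, the quantified variables come from a bounded pool of size at most $r_Q$ plus $|\mathcal X|$.

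So we introduce a finite relational signature $\sigma'$. It extends $\sigma$ by one fresh $\arity(f)$-ary \emph{monadic-second-order predicate symbol} $P_{j,i,f}$ for each triple $(j,i,f)$. To stay inside plain $\CMSO$ syntax, we can encode $P_{j,i,f}$ as a set-predicate atom in the finite vocabulary. Replacing every outermost $\Argmin$ atom by the corresponding $P$-atom maps each formula $\phi$ to a formula $\hat\phi \in \CMSO[\sigma', r_Q, \mathcal X]$. Because the $P$-atoms are atomic, this replacement preserves quantifier rank.

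We then define $\phi \approx \phi'$ if and only if $\hat\phi \approx' \hat\phi'$, where $\approx'$ is the relation from \cref{lem:finiteness-CMSO} for $\sigma'$; that lemma is applied to this atomic-set-predicate extension, or its standard proof is repeated. The three properties follow.
\begin{itemize}
\item \textbf{(ii)} Finiteness of classes is inherited from $\approx'$.
\item \textbf{(iii)} Computability follows by computing each canonical $\psi_j$ recursively, forming $\hat\phi$, and applying the computable canonicalization for $\approx'$.
\item \textbf{(i)} Given $\mathcal G$, expand it to a $\sigma'$-structure by interpreting $P_{j,i,f}$ as the family of tuples in $\Argmin^{+i}_f(\psi_j)$ on $\mathcal G$. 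Then $\mathcal G \models \phi[\vec A,\vec a]$ holds if and only if the expansion satisfies $\hat\phi[\vec A,\vec a]$. Property (i) of $\approx'$ transfers this equivalence to $\phi'$.
\end{itemize}

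The main obstacle is bookkeeping rather than mathematics. \Cref{lem:finiteness-CMSO} is stated for relational symbols over elements, while $P_{j,i,f}$ is a predicate on set tuples. We would handle this by noting that the standard proof of \cref{lem:finiteness-CMSO} only uses that there are finitely many atomic formulas over a finite variable pool, followed by normalization of Boolean combinations and quantifier-by-quantifier induction. That argument goes through verbatim with the additional atoms $P_{j,i,f}(\vec Z)$. A second subtlety is ensuring the quantifier-rank bound also controls the formulas nested inside $\Argmin$. We fix this by the convention above, or equivalently by carrying a global nesting bound in the induction.
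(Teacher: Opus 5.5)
Your proof is correct in substance and rests on the same core idea as the paper. By induction on $r_A$, the inner formulas of $\Argmin$ atoms fall into finitely many classes, so these atoms act as finitely many extra atoms and the standard $\CMSO$ finiteness argument takes over. The difference is in packaging and in what it gives you for free.

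The paper runs a single lexicographic induction on $(r_A,r_Q)$. It writes a formula as a Boolean combination of three kinds of pieces: top-level $\Argmin$ atoms whose inner formula lies in $\Am_p\CMSO[\tau,r_Q,r_A-1,\mathcal X]$; formulas of smaller argmin rank; and $\exists X\,\psi$, $\exists x\,\psi$ with $\psi$ of quantifier rank at most $r_Q-1$. You instead flatten all $\Argmin$ atoms into set-predicate symbols $P_{j,i,f}$ and use \cref{lem:finiteness-CMSO} over $\sigma'$ as a black box, after checking that its proof tolerates atoms on set variables. Your route is more modular.

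The paper's route keeps rank bookkeeping automatic. An $\Argmin$ atom at quantifier depth $d$ has an inner formula of rank at most $r_Q-d$, so full-DNF canonical forms stay inside the fragment. In your encoding, every $P_{j,i,f}$ counts as rank $0$ at every depth. This is harmless for (i) and (ii). For (iii), however, you only produce a canonical $\sigma'$-formula, not a member of $\Am_p\CMSO[\tau,r_Q,r_A,\mathcal X]$. Substituting back can exceed $r_Q$: a full DNF under $\exists Y$ may mention $P_{j,i,f}(Y)$ whose inner formula has rank $r_Q$. The repair is easy, and either of these works:
\begin{itemize}
\item output the first $\phi''$ of the fragment, in a fixed enumeration, with $\widehat{\phi''}\approx'\hat\phi$; the search terminates because $\phi$ itself qualifies;
\item index the $P$-symbols by inner rank and bound the depth at which each may occur.
\end{itemize}
Either way, this step should be stated.

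Two smaller remarks:
\begin{itemize}
\item Under the paper's definition, $r_Q$ already counts quantifiers inside $\Argmin$ scopes, so your ``convention'' is simply the definition, not an extra assumption.
\item For back-substituted atoms to be well formed, take the inner classes over exactly $\{Y_1,\dots,Y_{\arity(f)}\}$ rather than $\{Y_1,\dots,Y_m\}$. The paper likewise refines classes by the exact free-variable set.
\end{itemize}
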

\begin{proof}
	Let $\mathcal L[r_Q, r_A, \mathcal X] = \Am_p\CMSO[\tau,\allowbreak r_Q, r_A, \mathcal X]$.
	We proceed by induction on $(r_A, r_Q)$.
	Throughout the construction, we refine the equivalence classes by the exact set of free variables and choose representatives that preserve this set.
	For the base case $r_A = 0$, since $\mathcal L[r_Q, 0, \mathcal X] = \CMSO[\sigma, r_Q, \mathcal X]$, the lemma holds for any $r_Q, \mathcal X$ by \cref{lem:finiteness-CMSO}.
	Assume that the lemma holds for $\mathcal L[r_Q', r_A', \mathcal X]$ for any finite $\mathcal X$ and any $(r_A', r_Q')$ such that $r_A' < r_A$ or $(r_A' = r_A \land r_Q' < r_Q)$.
	Let $\Psi[r_Q, r_A, \mathcal X]$ be the collection of well-formed $\mathcal L[r_Q, r_A, \mathcal X]$ predicates $\psi_{i,f}(\vec Y) \equiv [\vec Y \in \Argmin^{+i}_{f}(\psi)]$, where the entries of $\vec Y$ belong to $\mathcal X$, $\psi \in \mathcal L[r_Q, r_A-1, \mathcal X]$, $i \in [0, p]$, and $f\in \mathcal F$.
	By the induction hypothesis, there are finitely many canonical representatives for $\psi$. Since there are also finitely many choices of $i$, $f$, and $\vec Y$, the collection $\Psi[r_Q, r_A, \mathcal X]$ has finitely many representatives up to logical equivalence.

	Every formula $\phi \in \mathcal L[r_Q, r_A, \mathcal X]$ can be expressed as a Boolean combination of formulas in $\Psi[r_Q, r_A, \mathcal X] \cup \mathcal L[r_Q, r_A-1, \mathcal X]$ and, when $r_Q>0$, quantified formulas $\exists X\,\psi$ and $\exists x\,\psi$ with $\psi$ in $\mathcal L[r_Q-1, r_A, \mathcal X\cup\{X\}]$ and $\mathcal L[r_Q-1, r_A, \mathcal X\cup\{x\}]$, respectively.
	After renaming the bound variables to fixed fresh symbols $X$ and $x$, the induction hypothesis gives finitely many canonical representatives for each of these collections.
	There are only finitely many Boolean functions of these representatives, so $\mathcal L[r_Q, r_A, \mathcal X]$ also has finitely many representatives up to logical equivalence.

	A canonical form can be constructed by following this induction and using a fixed ordering of the representatives and full disjunctive normal form for Boolean combinations.
\end{proof}

The \emph{length} of a formula $\phi$, denoted by $| \phi |$, is the number of symbols it contains.
Note that $r_A(\phi), r_Q(\phi) \leq | \phi |$, and the number of free variables of $\phi$ is also at most $| \phi |$.
For simplicity, we assume that $|\phi| \ge p$ where $p$ is the smallest index such that $\phi \in \Am_p\CMSO$.

\subsection{Graphs}
In this paper, by graph we mean a finite simple graph.
For a graph $G$, denote its vertex set by $V(G)$ and its edge set by $E(G)$.
Let $w$ be a positive integer.
A \emph{$w$-colored graph} $\mathscr G$ is a tuple $(G, V_1, \dots, V_w)$ where $G$ is a graph and $V_i \subseteq V(G)$ is a color class.
If there is a subset $I \subseteq [w]$ such that $V_i = \emptyset$ for each $i \in I$, then we may call $\mathscr G$ a $(w-|I|)$-colored graph.
Let $i,j\in [w]$ with $i\ne j$.
We define the graph operations $\oplus$, $\rho_{i\to j}$, and $\eta_{i,j}$ on $w$-colored graphs as follows.
The binary operation $\mathscr G_1 \oplus \mathscr G_2$ denotes the disjoint union of two $w$-colored graphs $\mathscr G_1$ and $\mathscr G_2$.
The unary operation $\rho_{i\to j} (\mathscr G)$ denotes the $w$-colored graph obtained from $\mathscr G$ by replacing $V_j$ with $V_i\cup V_j$ and $V_i$ with $\varnothing$.
The unary operation $\eta_{i,j}(\mathscr G)$ denotes the $w$-colored graph obtained from $\mathscr G$ by adding every edge between color classes $i$ and $j$.
The constant $\mathtt{i}$ denotes a $w$-colored graph with a single vertex colored with color $i$.
The \emph{clique-width} of a graph $G$ is the minimum number $w$ such that the $1$-colored graph $(G, V(G))$ can be obtained by the above operations with $w$ colors.
We write \emph{$w$-expression tree of $G$} to denote an algebraic expression tree that constructs $G$ from $w$-colored graphs.
Note that a $w$-expression tree can be viewed as a rooted tree whose internal nodes have at most two children and are labeled by $\rho_{i\to j}$, $\eta_{i,j}$, or $\oplus$.
For each fixed bound on clique-width, an expression tree of bounded width can be found in quadratic time on graphs satisfying that bound~\cite{FominK22}.

\subsection{Graph signatures}
For $\CMSO$ on graphs, the following two types of signatures are standard. See~\cite{CourcelleEngelfriet12} for more details.
A type-1 signature $\tau^1$ has one binary symmetric relation symbol $E$ and finitely many unary relation symbols and vertex constants.
For a graph $G$, the corresponding $\tau^1$-structure has domain $V(G)$ and interprets $E$ as adjacency. The unary relation symbols represent vertex colors, which need not form a partition.
Denote by $\tau^1_{w}$ a type-1 signature with $w$ unary relation symbols.
Here, a $\tau^1_{w}$ structure is identical to a $w$-colored graph, so the notions related to clique-width also apply to $\tau^1_{w}$ structures.
The logic $\CMSO[\tau^1]$ is called $\CMSO_1$ logic.

A type-2 signature $\tau^2$ has one binary relation symbol $R$, two unary relation symbols $P_E,P_V$, and finitely many additional unary relation symbols and vertex constants.
The corresponding $\tau^2$-structure of $G$ has domain $V(G) \cup E(G)$, interprets $R$ as vertex--edge incidence, and interprets $P_E$ and $P_V$ as $E(G)$ and $V(G)$, respectively.
In other words, the $\tau^2$-structure represents the incidence graph of $G$ with additional unary relations to distinguish vertices and edges.
The logic $\CMSO[\tau^2]$ is called $\CMSO_2$ logic; it extends $\CMSO_1$ by allowing quantification over edges and edge sets.

We denote by $\Am_k\CMSO_i(\mathcal F)$ the $\Am_k\CMSO_i$ logic extended with the measured signature equipped with the family of functions $\mathcal F$, for $i \in \{1,2\}$.

We can extend the notion of a $\cw$-expression to $\tau^2$-structures.
However, the clique-width of the incidence graph may be unbounded even if the original graph has constant clique-width.
Moreover, unless $\mathrm{E}=\mathrm{NE}$ (which is a kind of unary variant of $\mathrm{P}=\mathrm{NP}$), \ProblemName{$\CMSO_2$-Model Checking} has no polynomial-time algorithm even on complete graphs~\cite{CourcelleMR00,Lampis14}.
Thus, we consider only type-1 signatures for graphs of bounded clique-width.

In contrast, considering tree-decompositions or $\HR$-decompositions of $\tau^2$-structures is meaningful.
Most of our discussion of graphs of bounded treewidth generalizes to $\tau^2$-structures.
Moreover, for directed graphs, mixed graphs, and directed or undirected hypergraphs, most of our algorithmic discussion about bounded treewidth generalizes by considering the treewidth of their incidence graphs.
See the monograph of Courcelle and Engelfriet~\cite{CourcelleEngelfriet12} for more details.

Finally, we extend the operations on $w$-colored graphs to structures over a measured signature.
Let $\tau = (\tau_w^1, \mathcal F)$ be a measured signature.
A $\tau$-structure $\mathcal G = (G, \mathcal F^G)$ is \emph{additive} if every function in $\mathcal F^G$ is additive.
Let $\mathcal G = (G, \mathcal F^G)$ and $\mathcal H = (H, \mathcal F^H)$ be two additive measured $\tau$-structures.
For the operations $\mu \in \{\rho_{i\to j}, \eta_{i,j} : i,j \in [w], i\neq j\}$, define
$\mu(\mathcal G) = (\mu(G), \mathcal F^G)$.
Denote by $\mathcal F^G \sqcup \mathcal F^H$ the family of functions
$f_i^{\mathcal G\oplus \mathcal H}\colon \pset_{\arity(f_i)}(V(G) \cup V(H)) \to \Z$, for $f_i\in\mathcal F$, defined by
\[
f_i^{\mathcal G \oplus \mathcal H}(\vec S)
= f_i^{\mathcal G}(\ab<S_j\cap V(G)>_{j\in[\arity(f_i)]})
+ f_i^{\mathcal H}(\ab<S_j\cap V(H)>_{j\in[\arity(f_i)]}).
\]
Note that $f_i^{\mathcal G \oplus \mathcal H}$ is additive since $f_i^{\mathcal G}$ and $f_i^{\mathcal H}$ are additive.
Then, $\mathcal G \oplus \mathcal H = (G \oplus H, \mathcal F^G \sqcup \mathcal F^H)$.
These definitions extend the notion of a $w$-expression tree to additive measured $\tau$-structures.
A $w$-expression tree of an additive measured $\tau$-structure $\mathcal G= (G, \mathcal F^G)$ can be obtained from a $w$-expression tree of $G$ by restricting each function in $\mathcal F^G$ to the vertex set of each subexpression.

\section{Algorithm}\label{sec:alg}

\subsection{Algorithm for known \texorpdfstring{$\CMSO$}{CMSO} model checking}
First, we review the algorithm of Courcelle, Makowsky, and Rotics for $\CMSO$ model checking on graphs of bounded clique-width~\cite{CourcelleMR00}.

Let $\mathcal G$ be a structure with domain $V$.
Let $\vec X = (X_1, \dots, X_{k_2})$ be a tuple of $k_2$ variables and $\vec x = (x_1, \dots, x_{k_1})$ ($k_1, k_2 \in \mathbb N$) be a tuple of $k_1$ variables.
Let $\phi$ be a $\CMSO$ formula such that the set of its free variables is a subset of $\vec X \cup \vec x$.
Denote by $\sat(\mathcal G,\phi,\vec X, \vec x)$ the set of all valid assignments $(\vec A, \vec a)$ such that $\mathcal G \models \phi[\vec A, \vec a]$.
Since each operation $\rho_{i \to j}$ and $\eta_{i,j}$ can be defined by a quantifier-free $\MSO$ formula, the following lemma follows.
\begin{lemma}[\cite{CourcelleMR00}]
		\label{lem:translation-cw}
		Let $\cw, r_Q$ be integers.
		Let $\vec X$ be a set of set-variable symbols and $\vec x$ a set of vertex-variable symbols.
		Let $\mu \in \{\rho_{i\to j}, \eta_{i,j} : i,j \in [1,\cw], i\neq j\}$ be an operation over $\cw$-graphs.
		For every $\CMSO[\tau^{1}_{\cw}, r_Q, \vec X \cup \vec x]$ formula $\phi$, there exists a $\CMSO[\tau^{1}_{\cw}, r_Q, \vec X \cup \vec x]$ formula $\phi'$ such that for every $\cw$-graph $G$,
		\[
			\sat(\mu(G), \phi, \vec X, \vec x)  = \sat(G, \phi', \vec X, \vec x).
		\]
		Moreover, $\phi'$ can be computed in time depending only on $|\phi|$.
\end{lemma}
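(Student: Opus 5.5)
We prove the statement by structural induction on $\phi$, constructing $\phi'$ as a syntactic translation that substitutes, for each atomic relation of $\tau^1_{\cw}$, a quantifier-free definition of that relation in $\mu(G)$ in terms of the relations of $G$. The domain of $\mu(G)$ equals that of $G$, so no relativization of quantifiers is needed. Consequently the quantifier rank, the set of free variables, and the counting predicates are all preserved.

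The first step is to write the quantifier-free definitions for the two operations.

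For $\mu=\rho_{i\to j}$:
\begin{itemize}
\item the edge relation is unchanged, $E^{\mu(G)}(x,y)\equiv E(x,y)$;
\item the color predicate $j$ becomes $V_j^{\mu(G)}(x)\equiv V_i(x)\lor V_j(x)$;
\item the color predicate $i$ becomes $V_i^{\mu(G)}(x)\equiv x\neq x$;
\item every other color predicate $V_\ell$ is unchanged.
\end{itemize}

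For $\mu=\eta_{i,j}$:
\begin{itemize}
\item all color predicates are unchanged;
\item the edge relation becomes
\[
E^{\mu(G)}(x,y)\equiv E(x,y)\lor \bigl(x\neq y\land ((V_i(x)\land V_j(y))\lor(V_j(x)\land V_i(y)))\bigr).
\]
\end{itemize}
The disequality $x\neq y$ keeps the graph simple, since color classes need not be disjoint in $\tau^1$. Vertex constants, if present, are unchanged in both cases.

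Next, define $\phi'$ from $\phi$ by replacing every atom $E(t_1,t_2)$ and every atom $V_\ell(t)$ with the corresponding defining formula instantiated at the terms $t_1,t_2$ or $t$. All other atoms are left unchanged: equality, $t\in X$, and $\texttt{Card}_{m,p'}(X)$. Connectives and quantifiers are kept as they are. Because only atoms are replaced, and each replacement is quantifier-free, $r_Q(\phi')=r_Q(\phi)$ and the free variables are exactly those of $\phi$. The length satisfies $|\phi'|=O(|\phi|)$ for fixed $\cw$. Hence $\phi'\in\CMSO[\tau^1_{\cw},r_Q,\vec X\cup\vec x]$, and it is clearly computable in time depending only on $|\phi|$ (and $\cw$, which is bounded by $|\phi|$ up to the fixed signature).

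The correctness claim is that, for every valid assignment $(\vec A,\vec a)$ over the common domain $V$, we have $\mu(G)\models\phi[\vec A,\vec a]$ iff $G\models\phi'[\vec A,\vec a]$. We prove this by induction on the structure of $\phi$:
\begin{itemize}
\item \emph{Atoms.} The claim holds by the correctness of the defining formulas. Counting atoms depend only on $|A_\ell|$ and are unchanged.
\item \emph{Boolean connectives.} The claim follows immediately from the induction hypothesis.
\item \emph{Quantifiers.} Both structures share the domain $V$, so $\exists X$ and $\exists x$ range over the same objects in both.
\end{itemize}
Since $\sat(\mathcal G,\phi,\vec X,\vec x)$ is defined as a set of valid assignments over the same domain, the equality of $\sat$ sets follows. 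There is no real obstacle here. The only point needing care is the $x\neq y$ guard and possibly overlapping color classes in $\eta_{i,j}$, so that the defined edge relation coincides exactly with that of $\mu(G)$.
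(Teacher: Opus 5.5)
Your proposal is correct and is essentially the paper's argument: the paper justifies this lemma in one line by noting that $\rho_{i\to j}$ and $\eta_{i,j}$ are definable by quantifier-free formulas over the same domain, and your atom-by-atom substitution (including the $x\neq y$ guard for possibly overlapping color classes) is exactly that argument written out. One small quibble: your parenthetical claim that $\cw$ is bounded by $|\phi|$ is not true in general, but you do not need it, because each replacement formula has constant size independent of $\cw$.
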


Thus, the only remaining case is the disjoint union operation, to which we can apply the Feferman--Vaught theorem~\cite{FefermanV59}.
There are many formulations of the Feferman--Vaught type theorems; we use the variant due to Courcelle and Engelfriet~\cite{CourcelleEngelfriet12}, which they called the \emph{splitting theorem}.
Let $\vec X$ be a set of set-variable symbols, and $\vec x, \vec y$ disjoint sets of vertex-variable symbols.
Let $D$ and $U$ be disjoint sets.
Let $\mathbf A_{I}$ be the set of all $\vec A \colon \vec X \to 2^I$ for $I \in \{D, U\}$.
Let $\mathbf a_D$ be the set of all $\vec a \colon \vec x \to D$, and let $\mathbf a_U$ be the set of all $\vec b \colon \vec y \to U$.
For $\vec a \in \mathbf a_D$ and $\vec b \in \mathbf a_U$, denote by $\vec a \concat \vec b$ the concatenation of $\vec a$ and $\vec b$, that is, $\vec a \concat \vec b \colon \vec x \cup \vec y \to D \cup U$ such that $(\vec a \concat \vec b)(x) = \vec a(x)$ for $x\in \vec x$ and $(\vec a \concat \vec b)(y) = \vec b(y)$ for $y\in \vec y$.
For $\vec A \in \mathbf A_D$ and $\vec B \in \mathbf A_U$, denote by $\vec A \cup \vec B$ the function $\vec A \cup \vec B \colon \vec X \to \pset(D\cup U)$ such that $(\vec A \cup \vec B)(X) = \vec A(X) \cup \vec B(X)$ for $X\in \vec X$.
Let $\mathcal A \subseteq \mathbf A_D \times \mathbf a_D$ and $\mathcal B \subseteq \mathbf A_U \times \mathbf a_U$.
We define $\mathcal A \boxtimes \mathcal B \coloneq \{(\vec A \cup \vec B, \vec a \concat \vec b) : (\vec A, \vec a) \in \mathcal A, (\vec B, \vec b) \in \mathcal B\}$.
We write $\bigcup_{i\in [k]} S_i$ as $\bigsqcup_{i\in [k]} S_i$ if $S_i \cap S_j = \emptyset$ for every $i, j \in [k]$ with $i\neq j$.
\begin{theorem}[{\cite[Theorem 5.39]{CourcelleEngelfriet12}}]
	\label{thm:splitting-MSO}
	Let $\vec X$ be a set of set-variable symbols and $\vec x$ a set of vertex-variable symbols.
	Let $\tau$ be a signature.
	For every $q\in \N$ and every formula $\phi(\vec X, \vec x)\in \CMSO[\tau,q, \vec X \cup \vec x]$,
	there exists a list of tuples $\FV(\phi) = \ab<\alpha_i, \beta_i, \vec y_i, \vec z_i>_{i\in [m]}$ of $\CMSO[\tau,q, \vec X \cup \vec x]$-formulas $\alpha_i$ and $\beta_i$
	such that $\vec y_i$ and $\vec z_i$ are disjoint sets of vertex-variable symbols and $\vec y_i \cup \vec z_i = \vec x$ for every $i\in [m]$,
    and for every two graphs $G$ and $H$ over the same signature $\tau$,
	\[
    \sat(G\oplus H, \phi, \vec X, \vec x) = \bigsqcup_{i\in [m]} \sat(G, \alpha_i,\vec X, \vec y_i) \boxtimes \sat(H, \beta_i, \vec X, \vec z_i).
	\]
	Moreover, the list $\FV(\phi)$ is computable in time depending only on $|\phi|$ and $\tau$.
\end{theorem}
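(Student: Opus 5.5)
The plan is to prove the statement by structural induction on $\phi$, keeping a stronger normal form as the invariant. Fix a split $(\vec y,\vec z)$ of the free vertex variables $\vec x$, i.e.\ a decision of which free first-order variables are interpreted in $G$ and which in $H$. For each such split, I maintain two finite families of $\CMSO[\tau,q,\vec X\cup\vec x]$ formulas:
\begin{itemize}
\item $\alpha^{(1)},\dots,\alpha^{(s)}$ with free first-order variables $\vec y$, which are pairwise mutually exclusive and jointly exhaustive (``$G$-types'');
\item $\beta^{(1)},\dots,\beta^{(t)}$ with free first-order variables $\vec z$, with the same property (``$H$-types'').
\end{itemize}
In addition, I keep a relation $R\subseteq[s]\times[t]$ such that $\sat(G\oplus H,\phi)$, restricted to assignments following this split, equals $\bigsqcup_{(j,l)\in R}\sat(G,\alpha^{(j)})\boxtimes\sat(H,\beta^{(l)})$. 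Disjointness across different splits is automatic, since each vertex variable lies in exactly one of $V(G)$ and $V(H)$. Disjointness within a split follows from the exclusivity of the types, because the set parts $\vec A\cup\vec B$ determine $\vec A=\vec A\cap V(G)$ and $\vec B$ uniquely. The list $\FV(\phi)$ is then read off as all triples $(\alpha^{(j)},\beta^{(l)},\text{split})$ with $(j,l)\in R$.

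The base cases are the atomic formulas.
\begin{itemize}
\item $x\in X$, $P(x)$, $x=c$: place the formula on the side where $x$ lives and put $\mathrm{true}$ on the other side. For a constant, case-split on which side $c$ lies; this is decided by the structure's own $c$, so it is expressible on each side as satisfiable or not.
\item $x=y$ and $E(x,y)$: these are false when $x$ and $y$ are on different sides, because $\oplus$ adds no edges; otherwise they are placed on the common side.
\item $\texttt{Card}_{m,p'}(X)$: write it as $\bigvee_{m_1+m_2\equiv m}\texttt{Card}_{m_1,p'}(X)\wedge\texttt{Card}_{m_2,p'}(X)$, with the two conjuncts evaluated on $G$ and $H$. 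The disjuncts are disjoint since the residues $m_1$ are distinct.
\end{itemize}
Each of these base decompositions is then refined into types. For the connectives:
\begin{itemize}
\item Negation replaces $R$ by its complement within $[s]\times[t]$, which is correct precisely because the types are exhaustive.
\item A conjunction $\phi_1\wedge\phi_2$ takes, for each split, the common refinement of types: $\alpha_1^{(j)}\wedge\alpha_2^{(j')}$, and similarly for $H$. The new relation consists of the pairs that are related in both $R_1$ and $R_2$.
\end{itemize}
None of these steps increases quantifier rank.

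For quantifiers I use that projection commutes with $\boxtimes$, with unions, and with the split of sets.
\begin{itemize}
\item $\exists X\,\psi$: for each split, the satisfying set is $\bigcup_{(j,l)\in R}\sat(G,\exists X\alpha^{(j)})\boxtimes\sat(H,\exists X\beta^{(l)})$. This holds because a witness $X$ splits as $(X\cap V(G))\cup(X\cap V(H))$ and the two halves are independent.
\item $\exists x\,\psi$: the new splits of $\vec x\setminus\{x\}$ arise from the old ones by forgetting whether $x$ went to $G$ (contributing $\exists x\,\alpha^{(j)}$) or to $H$ (contributing $\exists x\,\beta^{(l)}$). Again this gives a union of products.
\end{itemize}
In both cases disjointness is lost. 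I restore it by re-refining: the atoms of the Boolean algebra generated by the new $G$-formulas become the new $G$-types, likewise for $H$, and $R$ becomes the set of atom pairs contained in some product. Each new formula has rank at most $r_Q(\psi)+1=r_Q(\phi)$, so everything stays within $\CMSO[\tau,q,\vec X\cup\vec x]$. \cref{lem:finiteness-CMSO} keeps the number of types finite and lets us compute canonical representatives, which makes the whole construction computable from $|\phi|$ and $\tau$.

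I expect the main obstacle to be the bookkeeping of disjointness through quantification and negation. The union-of-products form is preserved by $\exists$ but not by $\neg$, and the disjoint form is preserved by $\neg$ but not by $\exists$. The type-refinement invariant above reconciles the two, and one must check carefully that atom refinement does not raise the quantifier rank. The modular-counting base case, with its residue bookkeeping, is the other point that needs care.
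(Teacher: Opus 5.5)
Your proof is correct, but it handles negation and disjointness differently from the paper. The paper only cites this theorem. Its own version of the argument is the proof of \cref{thm:FV-MSO-AM}, which it says follows Courcelle--Engelfriet, and it keeps, for each split of $\vec x$, a possibly overlapping union $\bigcup_{j}\sat(\mathcal G_1,\alpha^j,\vec X,\vec y)\boxtimes\sat(\mathcal G_2,\beta^j,\vec X,\vec z)$. In that setup:
\begin{itemize}
\item conjunction takes pairwise conjunctions;
\item $\exists X$ and $\exists x$ are simply pushed onto the factors;
\item negation uses the De Morgan expansion $\bigcup_{I\subseteq[l]}\sat(\mathcal G_1,\bigwedge_{j\in I}\neg\alpha^j)\boxtimes\sat(\mathcal G_2,\bigwedge_{j\notin I}\neg\beta^j)$.
\end{itemize}
Disjointness is restored only once, at the end (proof of \cref{thm:splitting-FVMSOAM}). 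There the paper passes to the atoms $\gamma_I,\delta_J$ and keeps the pairs with $I\cap J\neq\varnothing$, which is exactly your atom refinement with $R$ the ``shared index'' relation.

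Your invariant, exclusive and exhaustive types on each side plus a relation $R$, is the classical Feferman--Vaught type bookkeeping. What it buys you is that negation is just the complement of $R$ and the decomposition stays disjoint at every stage. The price is an exponential re-refinement after every quantifier. In the paper the quantifier steps are free, and the blow-up sits in the negation step and the single final refinement. Both approaches keep the quantifier rank within $r_Q(\phi)$.

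Two small remarks. First, \cref{lem:finiteness-CMSO} is not what makes the construction computable, since your recursion on $\phi$ is already a finite computation. It only lets you bound the length of $\FV(\phi)$ by a function of $q$, $\tau$ and the variables rather than of $|\phi|$. Second, your treatment of constants is vague, because $\FV(\phi)$ must not depend on $G$ and $H$. As written, $G\oplus H$ is not even well defined when both structures interpret a constant $c$. The standard fix is to encode constants as singleton unary predicates; the signature $\tau^1_{w}$ used later has none anyway. After that, $x=c$ is an atom of the form $P(x)$ and is handled like your other unary base case.
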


We are now ready to describe the algorithm for $\CMSO$ model checking.
Assume that we are given a $\cw$-expression tree $\mathcal T_G$ of a graph $G$ and a $\CMSO[\tau^1_{\cw}, q]$ sentence $\phi$.
Applying \cref{thm:splitting-MSO} to the sentence $\phi$ gives
\[
G\oplus H \models \phi \iff \text{there exists } i \in [m] \text{ such that } G \models \alpha_i \text{ and } H \models \beta_i.
\]
Combining this equivalence with \cref{lem:translation-cw}, we can recursively determine whether $\sat(G, \phi)= \varnothing$ from the leaves to the root of $\mathcal T_G$.
Since the set of $\CMSO[\tau^1_{\cw}, q]$ sentences is finite up to logical equivalence by~\cref{lem:finiteness-CMSO}, we can memoize the truth value of $[\sat(G_t, \phi)=\varnothing]$ for each graph $G_t$ corresponding to a node $t$ of $\mathcal T_G$ and each $\CMSO[\tau^1_{\cw}, q]$ sentence $\phi$.
Moreover, all leaf nodes of $\mathcal T_G$ correspond to graphs with a single vertex, and thus we can compute the truth value of $[\sat(G_t, \phi)=\varnothing]$ for each leaf node $t$ and each $\CMSO[\tau^1_{\cw}, q]$ sentence $\phi$ in constant time for fixed $q$ and $\cw$.
The total number of memoized truth values is at most $f(q, \cw)\cdot n$, where $n$ is the number of nodes in $\mathcal T_G$ and $f$ is the computable function obtained from~\cref{lem:finiteness-CMSO}. Thus, we can determine whether $G \models \phi$ in time $f(q, \cw)\cdot n$.

This algorithm can be generalized as follows. See also~\cref{subsec:eval-ammso}.
Let $h_A$ be a homomorphism from $\mathscr D = (\mathcal D, \sqcup, \boxtimes)$ to $\mathscr A = (A, +_A, \times_A)$.
Then, by applying $h_A$ to the equality in \cref{thm:splitting-MSO},
we obtain recursive equations on $\mathscr A$.
For example, if $\mathscr A$ is a $(\min, +)$-semiring, we obtain an algorithm for $\LinE\CMSO_1$ problems.
Courcelle and Mosbah introduced this technique in a more general setting~\cite{CourcelleM93}; it also yields solution-counting algorithms and their variants.

\subsection{Our results}\label{sec:our-results}
In this subsection, we extend the algorithm for $\CMSO$ model checking to our logic $\Am\CMSO$.
First, we show the $\Am\CMSO$ version of \cref{lem:translation-cw}.

\begin{lemma}
		\label{lem:Amtranslation-cw}
		Let $\cw, r_Q, r_A, p \in \pZ$ and $\tau = (\tau^1_{\cw}, \mathcal F)$ be a measured signature.
		Let $\vec X$ be a set of set-variable symbols and $\vec x$ a set of vertex-variable symbols.
		Let $\mu \in \{\rho_{i\to j}, \eta_{i,j} : i,j \in [1,\cw], i\neq j\}$ be an operation over $\cw$-graphs.
		For every $\Am_p\CMSO[\tau, r_Q, r_A, \vec X \cup \vec x]$ formula $\phi$, there exists an $\Am_p\CMSO[\tau, r_Q, r_A, \vec X \cup \vec x]$ formula $\phi'$ such that for every additive measured $\tau$-structure $\mathcal G$,
		\[
			\sat(\mu(\mathcal G), \phi, \vec X, \vec x)  = \sat(\mathcal G, \phi', \vec X, \vec x).
		\]
		Moreover, $\phi'$ can be computed in time depending only on $|\phi|$.
\end{lemma}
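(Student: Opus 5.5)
We prove the lemma by structural induction on $\phi$. The translation $\phi \mapsto \phi'$ is a syntactic substitution that only rewrites atomic formulas. It commutes with the Boolean connectives and the quantifiers, and it is pushed inside the $\Argmin$ operators. The key fact is that $\mu$ changes only the relational part of $\mathcal G$: we have $\mu(\mathcal G) = (\mu(G), \mathcal F^G)$. So the domain is the same set $V$, and every $f \in \mathcal F$ has the same interpretation in $\mu(\mathcal G)$ and in $\mathcal G$. For $\eta_{i,j}$ we replace each atom $E(s,t)$ by
\[
E(s,t) \lor (V_i(s) \land V_j(t)) \lor (V_j(s)\land V_i(t)).
\]
For $\rho_{i\to j}$ we replace $V_j(t)$ by $V_i(t)\lor V_j(t)$ and $V_i(t)$ by $\bot$. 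All other atoms are left unchanged, including set membership, equality, constants and the counting predicates $\texttt{Card}_{m,p'}$. This is exactly the quantifier-free definability used for \cref{lem:translation-cw}. The substitution introduces no quantifiers and no $\Argmin$ operators, so $r_Q(\phi')=r_Q(\phi)$ and $r_A(\phi')=r_A(\phi)$. Clearly $\phi'$ is computable in time depending only on $|\phi|$.

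We then prove a stronger claim by induction on $(r_A, r_Q)$, or simply on the structure of $\phi$. The claim is: for every formula $\psi$ occurring as a subformula of $\phi$, and every valid assignment $(\vec A,\vec a)$, we have $\mu(\mathcal G)\models\psi[\vec A,\vec a]$ iff $\mathcal G\models\psi'[\vec A,\vec a]$.

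The atomic cases hold by the definition of $\mu$. The Boolean cases are immediate. The quantifier cases hold because both structures have the same domain $V$, so $\exists X$ and $\exists x$ range over the same sets and elements.

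The only new case is an atom $\vec Y \in \Argmin^{+i}_f(\psi)$, whose translation is $\vec Y\in\Argmin^{+i}_f(\psi')$. By the induction hypothesis applied to $\psi$ (which has no free first-order variables),
\[
\{\vec Y' : \mu(\mathcal G)\models\psi[\vec Y']\} = \{\vec Y' : \mathcal G\models\psi'[\vec Y']\}.
\]
Both structures interpret $f$ by the same function $f^{\mathcal A}$. Hence the multisets of values $f^{\mathcal A}(\vec Y')$ coincide, and so do their $\imin^{+i}$. Membership of $\vec Y$ in the solution set is equivalent in the two structures, and the value $f(\vec Y)$ is identical. Therefore the $\Argmin$ predicate has the same truth value in both.

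The main subtle point is this $\Argmin$ case. One must check that the optimization ranges over exactly the same feasible family, which uses the induction hypothesis for all assignments, not just one. One must also check that the measured part is untouched by $\mu$. Both facts follow directly from the definition $\mu(\mathcal G)=(\mu(G),\mathcal F^G)$ given earlier. Additivity of $\mathcal G$ is not actually needed here; it matters only for $\oplus$. Taking $\psi=\phi$ in the claim gives $\sat(\mu(\mathcal G),\phi,\vec X,\vec x)=\sat(\mathcal G,\phi',\vec X,\vec x)$, which is the statement of the lemma.
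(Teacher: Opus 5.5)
Your proposal is correct and follows the same route as the paper: structural induction on $\phi$, with the $\CMSO$ atoms handled by the quantifier-free translation behind \cref{lem:translation-cw}, and the $\Argmin$ case settled by noting that $\sat(\mu(\mathcal G),\psi,\vec X)=\sat(\mathcal G,\psi',\vec X)$ forces equal value sets and hence equal $\imin^{+i}$, so $\vec Y\in\Argmin^{+i}_f(\psi')$ is the required formula. You are more explicit than the paper about the atomic substitutions and about $\mu$ leaving $\mathcal F^G$ unchanged (one small detail: if color classes may overlap, add $s\neq t$ to the new $E$-disjuncts for $\eta_{i,j}$ so that no loops are created).
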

\begin{proof}
    We prove the lemma by structural induction on $\phi$.
    The Boolean and quantifier cases follow directly from the induction hypothesis.
    For $\CMSO$ atomic formulas, we use the translation in~\cref{lem:translation-cw}.

	Assume $\phi = \vec Y \in \Argmin^{+p'}_{f}(\psi)$.
	By the induction hypothesis, there exists a formula $\psi'$ such that $\sat(\mu(\mathcal G), \psi, \vec X) = \sat(\mathcal G, \psi', \vec X)$.
	Then, $\{f(\vec Y) : \vec Y \in \sat(\mu(\mathcal G),\psi, \vec X) \} = \{f(\vec Y) : \vec Y \in \sat (\mathcal G, \psi', \vec X)\}$ and thus
	\[  \imin^{+p'} \{f(\vec Y) : \mu(\mathcal G) \models \psi[\vec Y]\} = \imin^{+p'} \{f(\vec Y) : \mathcal G \models \psi'[\vec Y]\}.
	\]
    Thus, $\vec Y \in \Argmin^{+p'}_{f}(\psi')$ is the desired formula.
\end{proof}

In the Feferman--Vaught theorem, the list $\FV(\phi)$ depends only on the formula $\phi$ and the signature $\tau$.
Thus, we can use the same list $\FV(\phi)$ for any disjoint union of two relational structures.
On the other hand, for our logic, we cannot use the same list $\FV(\phi)$ for any disjoint union of two structures.
For example, consider a formula $\phi(X)$ stating that $X$ is a maximum clique.
Then, if $G = K_4$ and $H = K_3$, for any $A \subseteq V(G) \cup V(H)$, $G \oplus H \models \phi[A]$ if and only if $G \models \phi[A]$, but if $G = K_2$ and $H = K_3$, for any $A \subseteq V(G) \cup V(H)$, $G \oplus H \models \phi[A]$ if and only if $H \models \phi[A]$.

However, we can construct a similar list $\FV_{\mathcal G, \mathcal H}(\phi)$ for any two additive measured $\tau$-structures $\mathcal G$ and $\mathcal H$ if we know their $\cw$-expressions.

\begin{restatable}{theorem}{splitting-FVMSOAM}
	\label{thm:splitting-FVMSOAM}
	Let $\cw, r_Q, r_A, p \in \pZ$ and $\tau = (\tau^1_{\cw}, \mathcal F)$ be a measured signature.
	Let $\mathcal G$ and $\mathcal H$ be two additive measured $\tau$-structures.
    For every formula $\phi\in \Am_{p}\CMSO[\tau, r_Q, r_A, \mathcal X]$ with free set variables $\vec X$ and free vertex variables $\vec x$,
	there exists a list of tuples $\FV_{\mathcal G,\mathcal H}(\phi) = \ab<\alpha_i, \beta_i, \vec y_i, \vec z_i>_{i\in [m]}$
	such that
     $\alpha_i, \beta_i\in \Am_p\CMSO[\tau, r_Q, r_A, \mathcal X]$ for every $i\in [m]$,
     $\vec y_i$ and $\vec z_i$ are disjoint sets of vertex-variable symbols and $\vec y_i \cup \vec z_i = \vec x$ for every $i\in [m]$,
	\[
    \sat(\mathcal G\oplus \mathcal H, \phi, \vec X, \vec x) = \bigsqcup_{i\in [m]} \sat(\mathcal G, \alpha_i,\vec X, \vec y_i) \boxtimes \sat(\mathcal H, \beta_i, \vec X, \vec z_i).
	\]
	Moreover, given a $\cw$-expression $\mathcal T_G$ for $\mathcal G = (G, \mathcal F^{G})$,
	the lists $\FV_{\mathcal G_1, \mathcal G_2}(\phi)$ for every term $\mathcal G_1\oplus \mathcal G_2$ in $\mathcal T_G$ and every formula $\phi$ in $\Am_{p}\CMSO[\tau, r_Q, r_A, \mathcal X]$ are computable in time $g(\tau, p, r_Q, r_A, \mathcal X) \cdot |\mathcal T_G|$,
	where $g$ is a computable function.
\end{restatable}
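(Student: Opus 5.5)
The plan is a structural induction on $\phi$, ordered lexicographically by $(r_A, r_Q)$ and then by formula size. The inductive claim is that, for every pair $\mathcal G,\mathcal H$, there is a list $\FV_{\mathcal G,\mathcal H}(\phi)$ whose components again lie in $\Am_p\CMSO[\tau,r_Q,r_A,\mathcal X]$.

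\textbf{Formulas without argmin atoms.} Atomic $\CMSO$ formulas, Boolean connectives and quantifiers are handled exactly as in the proof of \cref{thm:splitting-MSO}. For $\lnot$, we complement the disjoint union of rectangles. We do this by taking all combinations of $\approx$-classes of the $\alpha_i$ on the $\mathcal G$-side and of the $\beta_i$ on the $\mathcal H$-side, which are finitely many by \cref{lem:finiteness-CMSOAm}. For $\land$, we intersect rectangles pairwise. For $\exists X$ and $\exists x$, we project rectangles: a set variable splits into its two sides, and a vertex variable goes to the $\vec y$ or the $\vec z$ part. The only dependence on $(\mathcal G,\mathcal H)$ enters through the lists for the argmin subformulas. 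Every step preserves the rank bounds.

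\textbf{The argmin atom.} The key case is $\phi\equiv \vec Y\in\Argmin^{+i}_f(\psi)$ with $r_A(\psi)<r_A$. By induction, $\sat(\mathcal G\oplus\mathcal H,\psi,\vec Y)=\bigsqcup_{j}\sat(\mathcal G,\alpha_j)\boxtimes\sat(\mathcal H,\beta_j)$. Since $f^{\mathcal G\oplus\mathcal H}$ splits additively, the set of $f$-values of solutions of $\psi$ is $\bigcup_j (D^{\mathcal G}_{\alpha_j}+D^{\mathcal H}_{\beta_j})$, where $D^{\mathcal G}_{\alpha}=\{f^{\mathcal G}(\vec A):\mathcal G\models\alpha[\vec A]\}$ and $D^{\mathcal H}_{\beta}$ is defined likewise.

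Its $(i+1)$-st smallest element $v^*$ only involves the $p+1$ smallest values of each $D^{\mathcal G}_{\alpha_j}$ and $D^{\mathcal H}_{\beta_j}$. These are the quantities $c^{\mathcal G}_{\alpha_j,s}=\imin^{+s}D^{\mathcal G}_{\alpha_j}$ for $s\le p$, and the analogous $c^{\mathcal H}_{\beta_j,s}$. The reason is that any sum using the $(s+1)$-st value of one side is preceded by at least $s$ distinct smaller sums.

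Once these numbers are known, a pair $(\vec A,\vec B)$ in rectangle $j$ has value $v^*$ exactly when $\vec A\in\Argmin^{+s}_f(\alpha_j)$ and $\vec B\in\Argmin^{+t}_f(\beta_j)$ for some $s,t\le p$ with $c^{\mathcal G}_{\alpha_j,s}+c^{\mathcal H}_{\beta_j,t}=v^*$. Pairs using a rank beyond $p$ are impossible here, by the same counting argument. The new list consists of the pairs $(\vec Y\in\Argmin^{+s}_f(\alpha_j),\ \vec Y\in\Argmin^{+t}_f(\beta_j))$ for all such $(j,s,t)$. These rectangles are disjoint, because both the index $j$ and the rank determine the values. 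Their argmin rank is at most $r_A$, their quantifier rank is at most $r_Q$, and they have no free vertex variables.

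\textbf{Computing the lists.} For the algorithmic part, we process the expression tree bottom-up. At each node $t$ we store, for each representative $\alpha$ of $\Am_p\CMSO[\tau,r_Q,r_A,\mathcal X]$, which is a finite set by \cref{lem:finiteness-CMSOAm}, the values $c^{G_t}_{\alpha,s}$ for $s\le p$.

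At a leaf these are computed by brute force. At $\rho$ and $\eta$ nodes we transport them via \cref{lem:Amtranslation-cw}, noting that $f$ is unchanged. At an $\oplus$ node we first build $\FV_{G_1,G_2}(\alpha)$ from the children's stored values as above. We then read off $\imin^{+s}$ of $\bigcup_j(D_{\alpha_j}+D_{\beta_j})$ by merging the $p+1$ smallest values on each side, which takes time bounded in $p$ and the number of classes.

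\textbf{Main obstacle.} The delicate point is the nested induction. Computing $\FV$ for $\phi$ at a node requires the numbers $c$ for the argmin subformulas at the children. Those numbers in turn were obtained from $\FV$ lists for lower-rank formulas at deeper nodes. So the table must be filled in order of increasing $r_A$ and, within each rank, bottom-up over the tree. Sentence-level and assignment-level correctness must also be kept distinct. I would handle this by proving the two statements jointly by induction on (rank, height of node), which gives total time $g(\tau,p,r_Q,r_A,\mathcal X)\cdot|\mathcal T_G|$.
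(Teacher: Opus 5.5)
You follow the paper's strategy: a two-level induction (bottom-up over the expression tree, structural on the formula), storing the $p+1$ smallest $f$-values of every formula class at every node. You handle $\vec Y\in\Argmin^{+i}_f(\psi)$ by the counting argument that only ranks $s,t\le i$ on each side can contribute, which is \cref{prop:fundamental-union-rank} and the union-ranking.

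The gap is in the invariant you carry. You maintain that every list represents a \emph{disjoint} union of rectangles, and your argmin step relies on it: your claim that the new rectangles are disjoint ``because the index $j$ and the rank determine the values'' only gives disjointness across different $j$ if the rectangles for $\psi$ were already disjoint. However, your $\exists X$ and $\exists x$ steps do not preserve disjointness. Projecting disjoint rectangles $\sat(\mathcal G,\alpha_j)\boxtimes\sat(\mathcal H,\beta_j)$ yields rectangles $\sat(\mathcal G,\exists X\,\alpha_j)\boxtimes\sat(\mathcal H,\exists X\,\beta_j)$ that can overlap. For example, with $\beta_1\equiv\beta_2\equiv\top$, the disjoint pieces $\alpha_1\equiv X=\varnothing$ and $\alpha_2\equiv X\neq\varnothing\land Y=\varnothing$ project to $\top$ and $Y=\varnothing$. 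For $\exists x$, the partitions $(\vec y\cup\{x\},\vec z)$ and $(\vec y,\vec z\cup\{x\})$ collapse onto $(\vec y,\vec z)$, and an assignment may have witnesses on both sides. So after any quantifier step your lists only represent a union. The disjointness required by the statement is then not established whenever a quantified subformula's list is passed up unchanged, or through $\land$ or an argmin, without an intervening negation.

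The repair is cheap, and it is what the paper does. First prove the statement with $\bigcup$ in place of $\bigsqcup$ (\cref{thm:FV-MSO-AM}). Your argmin step goes through verbatim there, because the $f$-values of a possibly overlapping union of rectangles still form $\bigcup_j(D_{\alpha_j}+D_{\beta_j})$. Then, for each partition $(\vec y_i,\vec z_i)$, define $\gamma_I=\bigwedge_{j\in I}\alpha_i^j\land\bigwedge_{j\notin I}\lnot\alpha_i^j$ and $\delta_J$ analogously from the $\beta_i^j$. Replace the pairs by all $(\gamma_I,\delta_J)$ with $I\cap J\neq\varnothing$. These products of atoms are pairwise disjoint, their union is the original union, and they are Boolean combinations, so both ranks are preserved. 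This is the same atom construction you already use for negation; you could instead apply it after every projection step to restore your invariant.
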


In \cref{subsec:proof-FV-MSO-AM}, we prove \cref{thm:splitting-FVMSOAM} by combining the Feferman--Vaught theorem with standard dynamic programming.

Combining \cref{thm:splitting-FVMSOAM,lem:finiteness-CMSOAm,lem:Amtranslation-cw}, we obtain the following result.
\begin{theorem} \label{thm:model-check-AM}
	Let $\cw, r_Q, r_A, p \in \pZ$ and $\tau = (\tau^1_{\cw}, \mathcal F)$ be a measured signature.
    Let $\mathcal X$ be a set of variable symbols.
	Given an additive measured $\tau$-structure $\mathcal G = (G, \mathcal F^G)$ with a $\cw$-expression $\mathcal T_G$ and a formula $\phi \in \Am_p\CMSO[\tau, r_Q, r_A, \mathcal X]$, we can determine whether an assignment $(\vec A, \vec a)$ satisfying $\mathcal G \models \phi[\vec A, \vec a]$ exists in time $g(\tau, |\phi|) \cdot |\mathcal T_G|$, where $g$ is a computable function.
    Moreover, if such an assignment exists, we can compute one in the same time.
\end{theorem}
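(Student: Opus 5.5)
The plan is a bottom-up dynamic program over $\mathcal T_G$. For every node $t$ with structure $\mathcal G_t$ it records, for every canonical representative $\psi$ of $\Am_p\CMSO[\tau, r_Q, r_A, \mathcal X']$, the Boolean value $[\sat(\mathcal G_t,\psi,\vec X,\vec x)\neq\varnothing]$. Here $\mathcal X'$ ranges over subsets of $\mathcal X$ enlarged by the boundedly many fresh variables needed by \cref{lem:Amtranslation-cw} and \cref{thm:splitting-FVMSOAM}. By \cref{lem:finiteness-CMSOAm} there are at most $g(\tau,|\phi|)$ such representatives, and each is computable. So each node stores a table of bounded size, and the answer is the entry for (the representative of) $\phi$ at the root.

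\textbf{Leaves.} A leaf is a single-vertex structure. There are at most $2^{|\mathcal X|}$ assignments: each set variable contains the vertex or not, and each vertex variable is the vertex or $\bot$. The relevant values of $f\in\mathcal F$ are just the finitely many numbers $g_v(\cdot)$. Hence every $\Argmin$ predicate is evaluated by enumerating the at most $2^{\arity(f)}$ candidate tuples and taking $\imin^{+i}$ directly. Each entry is thus computed by brute-force evaluation in time depending only on $|\phi|$ and $\tau$.

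\textbf{Unary nodes.} For $\mu\in\{\rho_{i\to j},\eta_{i,j}\}$, \cref{lem:Amtranslation-cw} gives $\psi'$ of the same ranks with $\sat(\mu(\mathcal G),\psi)=\sat(\mathcal G,\psi')$. So the table entry for $\psi$ at $t$ is copied from the entry for the canonical form of $\psi'$ at the child.

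\textbf{Disjoint union nodes.} Here I invoke \cref{thm:splitting-FVMSOAM}. Its list $\FV_{\mathcal G_1,\mathcal G_2}(\psi)$ is precomputed for all nodes and all $\psi$ in total time $g\cdot|\mathcal T_G|$, which is where the content of the argument lives. Since $\sat(\mathcal G_1\oplus\mathcal G_2,\psi)$ is a disjoint union of products $\sat(\mathcal G_1,\alpha_i)\boxtimes\sat(\mathcal G_2,\beta_i)$, it is nonempty iff some $i$ has both factors nonempty. This is read from the children's tables after canonicalizing $\alpha_i,\beta_i$. Summing over nodes gives time $g(\tau,|\phi|)\cdot|\mathcal T_G|$.

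\textbf{Witness construction.} If the root entry for $\phi$ is true, I would backtrack. At a union node I choose an index $i$ with both children true, then recurse with $(\alpha_i,\vec y_i)$ into $\mathcal G_1$ and $(\beta_i,\vec z_i)$ into $\mathcal G_2$. The resulting assignments are combined by $\vec A\cup\vec B$ and $\vec a\concat\vec b$, which is correct by the equality in the splitting theorem. At unary nodes I pass $\psi'$ down unchanged, since the satisfying sets coincide. At leaves I pick a witness assignment explicitly. Each node is visited once with one formula, so the time bound is the same.

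The main subtlety is that the formulas $\alpha_i,\beta_i$ depend on the actual substructures, unlike in the classical setting. This is harmless only because \cref{thm:splitting-FVMSOAM} guarantees that they stay within the finitely many classes of $\Am_p\CMSO[\tau,r_Q,r_A,\mathcal X]$ and can be computed along $\mathcal T_G$ in linear total time. I therefore expect no obstacle beyond that theorem itself. One bookkeeping point needs care: the DP must handle invalid assignments. A vertex variable whose value lies in the other part maps to $\bot$ locally, as in the splitting formalism via $\vec y_i,\vec z_i$, so the stored tables must range over formulas with free-variable subsets rather than fixed variable sets.
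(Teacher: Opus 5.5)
Your proposal is correct and follows the same route as the paper: a bottom-up nonemptiness dynamic program over $\mathcal T_G$ that uses \cref{lem:Amtranslation-cw} at unary nodes, \cref{thm:splitting-FVMSOAM} at disjoint-union nodes, and \cref{lem:finiteness-CMSOAm} to bound the table size, followed by standard backtracking to recover a witness. The paper gives only a two-line sketch of this argument, and your write-up fills in, consistently with it, the leaf case, the bookkeeping for vertex variables split between the two sides, and the linear-time backtracking.
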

\begin{proof}[Proof sketch]
    The algorithm simply checks the non-emptiness of $\sat(\mathcal G, \phi)$ by dynamic programming on the $\cw$-expression tree of $G$.
    A satisfying assignment can be recovered within the same running-time bound by standard dynamic-programming backtracking.
\end{proof}
In particular, we obtain the following as a corollary.

\evalargmso*

This theorem generalizes the $\LinE\CMSO$ framework.
Let $\phi$ be a $\CMSO$ formula with free set variables $\vec X$.
In a $\LinE\CMSO$ problem, given a graph $G$ with weight functions $w_i$ and a formula $\phi$, we want to find an assignment $\vec A$ such that $G \models \phi[\vec A]$ and $f(\vec A) \coloneq \sum_i \sum_{a \in A_i} w_i(a)$ is minimum.
Recall that $\LinE\CMSO$ problems can be solved in linear time for each fixed $\phi$ on graphs of bounded clique-width~\cite{ArnborgLS91,CourcelleMR00} if the decomposition is given.
Since $f(\vec A)$ is an additive function, we can consider the formula $\phi' = \vec X \in \Argmin_f(\phi)$ and apply \cref{thm:model-check-AM} to $\phi'$.
Then, \cref{thm:model-check-AM} implies that we can find an assignment $\vec A$ such that $G \models \phi'[\vec A]$, that is, $\vec A$ is a solution of the $\LinE\CMSO$ problem.

\subsubsection{\texorpdfstring{$\MSO_2$}{MSO2} and treewidth}
We remark that our algorithm can be applied to $\CMSO_2$ formulas on graphs of bounded treewidth.
An \emph{incidence graph} of a graph $G$ is a bipartite graph $G_I$ with the vertex set $V(G) \cup E(G)$ and the edge set $\{\{v, e\} : v \in V(G), e \in E(G), v \in e\}$.
Here, an incidence graph can be seen as a graph with 2 colors, where one color is assigned to vertices corresponding to vertices of $G$ and the other color is assigned to vertices corresponding to edges of $G$.
It is folklore that any $\CMSO_2$ formula $\phi$ on a graph $G$ is equivalent to a $\CMSO_1$ formula $\phi'$ on the incidence graph $G_I$.
Thus, if the clique-width of the incidence graph $G_I$ is small, we can apply our algorithm to $\Am\CMSO_2$ formulas.

On graphs of bounded treewidth, the treewidth of the incidence graph $G_I$ is at most the treewidth of $G$~(see, e.g.,~\cite{Kreutzer11}). Moreover, the clique-width of a 2-colored graph with treewidth $k$ is at most $f(k)$~\cite{CourcelleEngelfriet12}, where $f$ is a computable function, and such an expression tree can be constructed from a tree decomposition of the graph.
Combined with the fact that the clique-width of a graph can be bounded by a function of its treewidth~\cite{CorneilR05}, we can apply our algorithm for $\Am\CMSO_2$ formulas on graphs of bounded treewidth.

\evalargmsotw*

\subsubsection{Evaluation structure} \label{subsec:eval-ammso}
Another remark is that our algorithm can be applied to the framework of Courcelle--Mosbah~\cite{CourcelleM93}.
Their framework gives a uniform method to evaluate $\CMSO$ formulas on graphs of bounded clique-width and can produce a histogram of solution cardinalities.

An \emph{evaluation structure} is a tuple $\mathcal E = (D, \oplus, \otimes, 0_D)$ of a set $D$, two associative and commutative binary operations $\oplus$ and $\otimes$ on $D$, and an element $0_D \in D$.
Denote by $\bigoplus S$ the iterated application of $\oplus$ over the elements of $S$.
Define $\bigoplus \emptyset = 0_D$.
Fix a finite set $U$ and $k \in \mathbb Z_{>0}$.
We say that two families $A, B \in \pset(\pset_k(U))$ are \emph{separated} if there are disjoint sets $U_1, U_2 \subseteq U$ such that
$A \subseteq \pset_k(U_1)$ and $B \subseteq \pset_k(U_2)$.
A function $h\colon \pset(\pset_k(U)) \to D$ is \emph{separably evaluable} if the following conditions hold.
\begin{itemize}
	\item $h(\bigsqcup_{i} \mathscr A_i) = \bigoplus_{i} h(\mathscr A_i)$ for any pairwise disjoint families $\ab<\mathscr A_i>_i$.
	\item $h(\mathscr A \boxtimes \mathscr B) = h(\mathscr A) \otimes h(\mathscr B)$  if $\mathscr A$ and $\mathscr B$ are separated.
	\item For any singleton $\{u\} \subseteq U$, $h(\mathscr A)$ is computable in constant time for any $\mathscr A \subseteq \pset_k(\{u\})$.
\end{itemize}
For example, for $k=1$, let $h_w\colon \mathscr A \mapsto \min\{ \sum_{u \in A} w(u) : A \in \mathscr A\}$ for a weight function $w\colon U \to \Z$.
Then, $h_w$ is separably evaluable with respect to the evaluation structure $(\Z \cup \{\infty\}, \min, +, \infty)$.
Note that this $h_w$ is a composition $\min \circ f_w$ of an additive function $f_w\colon A \mapsto \sum_{u \in A} w(u)$ and $\min$.
Another example of usages of separably evaluable functions are counting the number of solutions and counting the average cardinality of the solutions~\cite{CourcelleM93}.

One useful example of a separably evaluable function is a graph polynomial.
Consider the histogram of the solutions.
The \emph{histogram} of a family $\mathcal D \subseteq \pset_k(U)$ is the function $h_{\mathcal D}\colon ([0,|U|])^k \to \pZ$ defined by
\[
h_{\mathcal D}(y_1, \dots, y_k) = |\{\ab<A_1, \dots, A_k> \in \mathcal D : |A_i| = y_i \text{ for all } i \in [1,k]\}|.
\]
Here, $h_{\mathcal D}$ can be identified with the polynomial $\sum_{\ab<A_1, \dots, A_k> \in \mathcal D} \prod_{i=1}^k x_i^{|A_i|}$ over the integer semiring $\Z$.
The \emph{spectrum} of a family $\mathcal D \subseteq \pset_k(U)$ is the function $s_{\mathcal D}\colon ([0,|U|])^k \to \{0, 1\}$ defined by $s_{\mathcal D}(y_1, \dots, y_k) = \ab[h_{\mathcal D}(y_1, \dots, y_k) > 0]$.
Then, $s_{\mathcal D}$ can be identified with the polynomial $\sum_{\ab<A_1, \dots, A_k> \in \mathcal D} \prod_{i=1}^k x_i^{|A_i|}$ over the Boolean semiring $(\{0, 1\}, \vee, \wedge, 0)$.
Let $\mathcal E$ be the polynomial semiring in $k$ variables over $\Z$.
Then, $h_{\mathcal A \sqcup \mathcal B} = h_{\mathcal A} + h_{\mathcal B}$ holds if $\mathcal A$ and $\mathcal B$ are disjoint, and $h_{\mathcal A \boxtimes \mathcal B} = h_{\mathcal A} \cdot h_{\mathcal B}$ holds if $\mathcal A$ and $\mathcal B$ are separated.
Thus, the function $\mathcal D \mapsto h_{\mathcal D}$ is separably evaluable.

Combining the proof of~\cite{CourcelleM93}, \cref{lem:Amtranslation-cw}, and \cref{thm:splitting-FVMSOAM} yields the following theorem.

\begin{theorem}\label{thm:evalargmso}
	Let $\cw, r_Q, r_A, p \in \pZ$ and $\tau = (\tau^1_{\cw}, \mathcal F)$ be a measured signature.
	Let $\phi(\vec{X})\in \Am_{p}\CMSO_1[\tau,r_Q,r_A]$ be a formula with $k$ free set variables $\vec X$.
	Let $\mathcal E = (D, \oplus_D, \otimes_D, 0_D)$ be an evaluation structure.
	Let $h\colon \pset(\pset_k(V)) \to D$ be a separably evaluable function.

    Given an additive measured $\tau$-structure $\mathcal G = (G, \mathcal F^G)$ with a $\cw$-expression $\mathcal T_G$,
	we can compute $h(\sat(\mathcal G, \phi, \vec X))$ using $g(\cw, |\phi|) \cdot |\mathcal T_G|$ arithmetic operations of $\oplus_D$ and $\otimes_D$ for some computable function $g$.
\end{theorem}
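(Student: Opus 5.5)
We will compute, bottom-up along $\mathcal T_G$, the value $h(\sat(\mathcal G_t, \psi, \vec X))$ for every node $t$ and every canonical representative $\psi$ of $\Am_p\CMSO[\tau, r_Q, r_A, \vec X]$. By \cref{lem:finiteness-CMSOAm} there are only $g_0(\cw,|\phi|)$ such representatives, and the answer is the entry for the representative of $\phi$ at the root. As a preprocessing step we invoke \cref{thm:splitting-FVMSOAM} to obtain, for every $\oplus$-node $t$ with children $t_1,t_2$ and every representative $\psi$, the list $\FV_{\mathcal G_{t_1},\mathcal G_{t_2}}(\psi)$. This takes $g(\tau,p,r_Q,r_A,\vec X)\cdot|\mathcal T_G|$ time in total. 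Since $\phi$ has only free set variables, we may restrict to formulas without free vertex variables, so the $\vec y_i,\vec z_i$ components are empty and each $\boxtimes$ is a plain product of families of set tuples. We replace every $\alpha_i,\beta_i$ by its canonical representative, which is allowed by \cref{lem:finiteness-CMSOAm}(i).

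The recursion has three cases. At a leaf $t$ (a single vertex $u$), $\sat(\mathcal G_t,\psi,\vec X)\subseteq\pset_k(\{u\})$. It can be computed directly because there are at most $2^k$ assignments, and the $\Argmin$ predicates can be evaluated by brute force on a one-element structure. The value $h$ is then obtained in constant time by the third condition of separable evaluability. At a unary node $\mu\in\{\rho_{i\to j},\eta_{i,j}\}$, \cref{lem:Amtranslation-cw} gives $\psi'$ with $\sat(\mu(\mathcal G_{t_1}),\psi,\vec X)=\sat(\mathcal G_{t_1},\psi',\vec X)$. We set the entry for $\psi$ equal to the stored entry for the representative of $\psi'$, which costs no semiring operations. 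At a $\oplus$-node, \cref{thm:splitting-FVMSOAM} gives
\[
\sat(\mathcal G_t,\psi,\vec X)=\bigsqcup_{i\in[m]}\sat(\mathcal G_{t_1},\alpha_i,\vec X)\boxtimes\sat(\mathcal G_{t_2},\beta_i,\vec X).
\]
The two factors of each product live on the disjoint domains $V(G_{t_1})$ and $V(G_{t_2})$, so they are separated. The union is disjoint by the theorem. The two homomorphism conditions therefore yield
\[
h(\sat(\mathcal G_t,\psi,\vec X))=\bigoplus_{i\in[m]} h(\sat(\mathcal G_{t_1},\alpha_i,\vec X))\otimes h(\sat(\mathcal G_{t_2},\beta_i,\vec X)),
\]
and every term on the right is already stored. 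Here $m$ is bounded by a function of $|\phi|$ and $\cw$: after canonicalization, duplicate pairs $(\alpha_i,\beta_i)$ would make the union non-disjoint unless the corresponding product is empty, so the number of distinct pairs is bounded by the square of the number of representatives.

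Correctness follows by induction over $\mathcal T_G$. For the count, each node performs $O(g_0\cdot m)$ operations of $\oplus_D$ and $\otimes_D$, giving $g(\cw,|\phi|)\cdot|\mathcal T_G|$ operations in total. The structure-dependent lists are computed once in the preprocessing step.

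The main obstacle I expect is keeping the structure-dependent splitting of \cref{thm:splitting-FVMSOAM} compatible with the fixed set of table entries. The lists depend on $(\mathcal G_{t_1},\mathcal G_{t_2})$ and not only on $\psi$, so I must make sure that every $\alpha_i,\beta_i$ they return lies in $\Am_p\CMSO[\tau,r_Q,r_A,\vec X]$. The theorem guarantees this, so the children's tables already contain these formulas. I also need the lists to be available at cost linear in $|\mathcal T_G|$, which is exactly the ``moreover'' part of that theorem.
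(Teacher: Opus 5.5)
Your proposal is correct and matches the paper's argument. The paper proves this theorem in a single line, by combining the Courcelle--Mosbah technique with \cref{lem:Amtranslation-cw} and the disjoint splitting of \cref{thm:splitting-FVMSOAM}; your bottom-up table of $h$-values, indexed by canonical representatives and using $h\bigl(\bigsqcup_i \mathcal A_i\boxtimes\mathcal B_i\bigr)=\bigoplus_i h(\mathcal A_i)\otimes_D h(\mathcal B_i)$ at $\oplus$-nodes, is that combination spelled out (deduplicating canonical pairs keeps the union and its pairwise disjointness, so your bound on $m$ needs nothing further).
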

In particular, the spectrum of the solutions can be computed in time $g(\cw, |\phi|) \cdot |\mathcal T_G|\cdot O(|V(G)|^{2k})$ for some computable function $g$.

\subsection[Proof of the Feferman--Vaught-type theorem]{Proof of \cref{thm:FV-MSO-AM}}

\label{subsec:proof-FV-MSO-AM}
Our algorithm requires advice specifying the ranks of sums of the $a$-th and $b$-th optimal values.
\begin{definition}[Union-ranking]\label{def:union-rank}
    Let $D$ be a finite set, $m\in \pZ$, $k\in \mathbb Z_{>0}$, and $f\colon \pset_k(D) \to \Z$ an additive function.
    Let $D_1, D_2 \subseteq D$ be disjoint sets, and $\mathcal A_1, \dots, \mathcal A_m \in \pset(\pset_k(D_1))$, $\mathcal B_1, \dots, \mathcal B_m \in \pset(\pset_k(D_2))$.
	Let $\mathcal C = \bigcup_{i \in [1,m]} \mathcal A_i \boxtimes \mathcal B_i$.
	The \emph{union-ranking} of $\ab<\mathcal A_i, \mathcal B_i>_{i \in [1,m]}$ is the partial function $\rho_f \colon [1, m] \times \Z_{\geq 0} \times \Z_{\geq 0} \to \Z_{\geq 0}$ such that $\rho_f(i, a,b) = j$ if and only if
    $\imin^{+a}(f(\mathcal A_i)) + \imin^{+b}(f(\mathcal B_i)) = \imin^{+j}(f(\mathcal C))$ and this common value is not $\infty$.

	Let $r\in \pZ$.
	The \emph{$r$-bounded union-ranking} is the function $\rho_{f}^r \colon [1, m] \times [0,r+1] \times [0, r+1] \to [0,r+1]$ such that
    $\rho_{f}^r(i, a, b) = \rho_{f}(i, a, b)$ if $\rho_f(i,a,b)$ is defined and $\rho_{f}(i, a, b) \le r$; and $\rho_{f}^r(i, a, b) = r+1$ otherwise.
	We may omit the subscript $f$ if it is clear from the context.
\end{definition}

Note that the union-ranking is unique for a given $\ab<\mathcal A_i, \mathcal B_i>_{i \in [1,m]}$ and $f$.
The following proposition can be derived from the definition of the union-ranking.

\begin{proposition}\label{prop:fundamental-union-rank}
	Use the same notation as in \cref{def:union-rank}.
	Let $i\in [1,m]$, $t, a, b \in \mathbb Z_{\geq 0}$.
	Then, the following properties hold.
    \begin{enumerate}
        \item
        Assume $\rho_f(i, a, b) = t$.
        Then, any two tuples $A \in \mathcal A_i$ and $B \in \mathcal B_i$ with $f(A) = \imin^{+a}(f(\mathcal A_i))$ and $f(B) = \imin^{+b}(f(\mathcal B_i))$ satisfy
        $f(A \cup B) = \imin^{+t}(f(\mathcal C))$.
        \item
        For any $i\in [1,m]$, $a, b, k \in \mathbb Z_{\geq 0}$, if $\rho_f(i, a, b) \le k$ then $a, b \le k$.
    \end{enumerate}
\end{proposition}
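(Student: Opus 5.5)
Both items follow directly from unwinding \cref{def:union-rank} together with \cref{obs:separable-additive}; I will treat them in order.

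For item 1, assume $\rho_f(i,a,b)=t$. By \cref{def:union-rank} this means that the value $v \coloneq \imin^{+a}(f(\mathcal A_i)) + \imin^{+b}(f(\mathcal B_i))$ is finite and equals $\imin^{+t}(f(\mathcal C))$. Take $A\in\mathcal A_i$ and $B\in\mathcal B_i$ with $f(A)=\imin^{+a}(f(\mathcal A_i))$ and $f(B)=\imin^{+b}(f(\mathcal B_i))$, and write $A\cup B$ for the componentwise union $\ab<A_j\cup B_j>_j$.
\begin{itemize}
\item Since $A\in\pset_k(D_1)$, $B\in\pset_k(D_2)$, and $D_1\cap D_2=\varnothing$, \cref{obs:separable-additive} gives $f(A\cup B)=f(A)+f(B)=v$.
\item Therefore $f(A\cup B)=\imin^{+t}(f(\mathcal C))$, which is the claim.
\end{itemize}
Membership $A\cup B\in\mathcal C$ is not needed for the equality, although it does hold by the definition of $\boxtimes$.

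For item 2, fix $i$ and write $\alpha_0<\alpha_1<\cdots$ for the distinct finite values of $f(\mathcal A_i)$ and $\beta_0<\beta_1<\cdots$ for those of $f(\mathcal B_i)$. Suppose $\rho_f(i,a,b)=t\le k$. The sum $\alpha_a+\beta_b$ is finite, so both $\alpha_a$ and $\beta_b$ exist.

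I now exhibit $a+1$ distinct elements of $f(\mathcal C)$ that are at most $\alpha_a+\beta_b$.
\begin{itemize}
\item Fix $B_0\in\mathcal B_i$ with $f(B_0)=\beta_b$.
\item For each $a'\le a$, choose $A'\in\mathcal A_i$ with $f(A')=\alpha_{a'}$.
\item Then $A'\cup B_0\in\mathcal A_i\boxtimes\mathcal B_i\subseteq\mathcal C$.
\item By \cref{obs:separable-additive}, $f(A'\cup B_0)=\alpha_{a'}+\beta_b$.
\end{itemize}
These $a+1$ values are pairwise distinct, because the $\alpha_{a'}$ are. All of them are at most $\alpha_a+\beta_b=\imin^{+t}(f(\mathcal C))$.

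Hence $f(\mathcal C)$ has at least $a+1$ distinct elements that are at most its $(t+1)$-st smallest element. This forces $t\ge a$. The symmetric argument, fixing $A_0\in\mathcal A_i$ with $f(A_0)=\alpha_a$ and varying $b'\le b$, gives $t\ge b$. Combining, $a,b\le t\le k$.

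The only step needing care is the counting in item 2. One must use the distinct-value ranking of \cref{def:imin} and check that the sums $\alpha_{a'}+\beta_b$ lie in $f(\mathcal C)$; this is exactly why $\mathcal C$ contains $\mathcal A_i\boxtimes\mathcal B_i$. Everything else is immediate from the definitions.
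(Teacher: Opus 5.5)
Your proof is correct and takes the approach the paper intends: the paper only remarks that the proposition follows from \cref{def:union-rank}, and you unwind that definition together with additivity from \cref{obs:separable-additive}. The counting in item 2 is the one detail the paper leaves implicit. You exhibit $a+1$ distinct values $\alpha_{a'}+\beta_b\in f(\mathcal C)$ that are at most $\imin^{+t}(f(\mathcal C))$, which forces $t\ge a$, and the symmetric argument gives $t\ge b$.
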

First, we show the non-disjoint variant of \cref{thm:splitting-FVMSOAM}.
\begin{theorem}
	\label{thm:FV-MSO-AM}
	Let $\cw, r_Q, r_A, p \in \pZ$ and $\tau = (\tau^1_{\cw}, \mathcal F)$ be a measured signature.
	Let $\mathcal G$ and $\mathcal H$ be two additive measured $\tau$-structures.
	For every formula $\phi\in \Am_{p}\CMSO[\tau, r_Q, r_A, \vec X \cup \vec x]$ with free set variables $\vec X$ and free vertex variables $\vec x$,
	there exists a list of tuples $\FV_{\mathcal G,\mathcal H}(\phi) = \ab<\alpha_i, \beta_i, \vec y_i, \vec z_i>_{i\in [m]}$
	such that 
    $\alpha_i, \beta_i \in \Am_{p}\CMSO[\tau, r_Q, r_A, \vec X \cup \vec x]$ for every $i\in [m]$,
    $\vec y_i$ and $\vec z_i$ are disjoint sets of vertex-variable symbols and $\vec y_i \cup \vec z_i = \vec x$ for every $i\in [m]$,
	\[
    \sat(\mathcal G\oplus \mathcal H, \phi, \vec X, \vec x) = \bigcup_{i\in [m]} \sat(\mathcal G, \alpha_i,\vec X, \vec y_i) \boxtimes \sat(\mathcal H, \beta_i, \vec X, \vec z_i).
	\]
	Moreover, given a $\cw$-expression $\mathcal T_G$ for $\mathcal G = (G, \mathcal F^{G})$,
	the lists $\FV_{\mathcal G_1, \mathcal G_2}(\phi)$ for every term $\mathcal G_1\oplus \mathcal G_2$ in $\mathcal T_G$ and every formula $\phi$ in $\Am_{p}\CMSO[\tau, r_Q, r_A, \vec X \cup \vec x]$ are computable in time $g(\tau, |\phi|) \cdot |\mathcal T_G|$, where $g$ is a computable function.
\end{theorem}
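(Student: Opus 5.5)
We prove \cref{thm:FV-MSO-AM} by structural induction on $\phi$, ordered lexicographically by $(r_A(\phi), |\phi|)$. At the same time we build the lists $\FV_{\mathcal G_1,\mathcal G_2}$ bottom-up along $\mathcal T_G$. Atomic $\CMSO$ formulas, Boolean connectives and quantifiers are handled exactly as in the proof of \cref{thm:splitting-MSO}. For $\lor$ we concatenate the two lists. For $\lnot$ and $\land$ we take the standard product/complement construction over the finitely many pairs of formulas appearing in the lists. For $\exists X$ and $\exists x$ we quantify each $\alpha_i$ and $\beta_i$ (or split the first-order variable to one side). These steps do not increase the quantifier rank and only use the induction hypothesis. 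By \cref{lem:finiteness-CMSOAm} we may replace every formula by its canonical representative, so each list has length bounded by a function of $\tau,p,r_Q,r_A,\mathcal X$. The only new case is $\phi \equiv \vec Y \in \Argmin^{+p'}_f(\psi)$ with $r_A(\psi)<r_A(\phi)$.

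For this case, apply the induction hypothesis to $\psi$ and obtain $\FV_{\mathcal G,\mathcal H}(\psi)=\langle \alpha_i,\beta_i\rangle_{i\in[m]}$, so that $\sat(\mathcal G\oplus\mathcal H,\psi)=\bigcup_i \mathcal A_i\boxtimes\mathcal B_i$ with $\mathcal A_i=\sat(\mathcal G,\alpha_i)$ and $\mathcal B_i=\sat(\mathcal H,\beta_i)$. By \cref{obs:separable-additive}, $f(A\cup B)=f(A)+f(B)$. Consequently a pair $(A,B)\in\mathcal A_i\boxtimes\mathcal B_i$ has $f$-value equal to the $p'$-th smallest value $\imin^{+p'}f(\mathcal C)$ exactly when $A$ lies at some rank $a$ in $f(\mathcal A_i)$, $B$ lies at some rank $b$ in $f(\mathcal B_i)$, and $\rho_f(i,a,b)=p'$. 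By \cref{prop:fundamental-union-rank}(2), only $a,b\le p'$ are relevant. We therefore output the list of pairs
\[
\bigl(\vec Y\in\Argmin^{+a}_f(\alpha_i),\ \vec Y\in\Argmin^{+b}_f(\beta_i)\bigr)\quad\text{for all } i,a,b \text{ with } \rho^{p}_f(i,a,b)=p'.
\]
These formulas have argmin rank at most $r_A$ and index at most $p$. Correctness of the set equality then follows from \cref{prop:fundamental-union-rank}(1) and its converse, which holds because every value of $f(\mathcal C)$ is a sum of a value from some $f(\mathcal A_i)$ and a value from $f(\mathcal B_i)$. The union need not be disjoint, which is why the statement uses $\bigcup$ rather than $\bigsqcup$. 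Since $\Argmin$ formulas have no free first-order variables, the splitting $\vec y_i,\vec z_i$ is trivial in this case.

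The main obstacle is computing the bounded union-ranking $\rho^{p}_f$ for each $\oplus$-node and each of the finitely many canonical $\psi$. This requires the values $\imin^{+a}f(\sat(\mathcal G_t,\alpha))$ for $a\le p+1$, for every canonical $\alpha$ of the relevant rank and every node $t$. We compute these by a second bottom-up dynamic program over $\mathcal T_G$, proceeding in order of increasing argmin rank. At a leaf the values come from brute force. At a $\rho$ or $\eta$ node we use \cref{lem:Amtranslation-cw}. At an $\oplus$ node we use the already-built list for $\alpha$: the $(p+1)$ smallest distinct values of $\bigcup_i \{u+v : u\in f(\mathcal A_i),\ v\in f(\mathcal B_i)\}$ are determined by the $(p+1)$ smallest values on each side, which are stored at the children. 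Constructing the list for $\alpha$ at this node needs only rankings for formulas of strictly smaller argmin rank, so there is no circularity.

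Each node thus performs work bounded by $g(\tau,p,r_Q,r_A,\mathcal X)$, giving total time $g\cdot|\mathcal T_G|$. One delicate point is that every child value must be paired with a partner value that is not $\infty$ before it is counted toward a rank; rank $\infty$ entries are handled by the convention in \cref{def:imin}.
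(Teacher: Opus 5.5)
Your proposal is correct and follows essentially the same route as the paper. In both, the $\Argmin$ case is split via the bounded union-ranking into pairs $(\vec Y\in\Argmin^{+a}_f(\alpha_i),\ \vec Y\in\Argmin^{+b}_f(\beta_i))$ with $a,b\le p'$, and the $p+1$ smallest values are maintained by a bottom-up dynamic program over $\mathcal T_G$. The only difference is organizational: you stratify the computation by argmin rank, whereas the paper uses a single pass with an outer induction over the expression tree and an inner induction on the formula, and both orderings avoid circularity.
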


\begin{proof}
    We proceed by two-level induction, with the outer induction following the $\cw$-expression tree $\mathcal T_G$ from the leaves to the root and the inner induction on the structure of $\phi$.
    Let $t$ be a node in $\mathcal T_G$.
    Denote by $\mathcal G_t$ the $\tau$-structure corresponding to the subexpression of $\mathcal T_G$ rooted at $t$.

    The hypothesis of the outer induction is that we have computed the vector 
        \[
        \mathscr E(\mathcal G_t,\phi, f, \vec X) \colon i \mapsto \imin^{+i}\{f^{\mathcal G_t}(\vec X) : \vec X \in \sat(\mathcal G_t, \phi, \vec X)\}
        \] 
    for each $i \in [0,p]$, $f \in \mathcal F$, a tuple of set variables $\vec X$, and $\phi\in\Am_p\CMSO[\tau, r_Q, r_A]$ with no free vertex variables.
    If the free variables of $\phi$ are not a subset of $\vec X$, or if $|\vec X|$ is not equal to the arity of $f$, then the vector $\mathscr E(\mathcal G_t,\phi, f, \vec X)$ is undefined.

    If $t$ is a leaf node, then, for fixed parameters, the vector $\mathscr E(\mathcal G_t,\phi, f, \vec X)$ can be computed in constant time for each $\phi\in\Am_p\CMSO[\tau, r_Q, r_A]$ since $\mathcal G_t$ is a graph with a single vertex.
    If $t$ has a child $t'$ and corresponds to a unary operation $\mu \in \{\rho_{i\to j}, \eta_{i,j} : i,j \in [1,\cw], i\neq j\}$, then we can compute $\mathscr E(\mathcal G_t,\phi, f, \vec X)$ from $\mathscr E(\mathcal G_{t'},\phi', f, \vec X)$ for each $\phi'\in\Am_p\CMSO[\tau, r_Q, r_A]$ and $f \in \mathcal F$ by \cref{lem:Amtranslation-cw}.

    Assume $t$ is a node corresponding to a binary operation $\mathcal G_1 \oplus \mathcal G_2$.
    Denote the vertex sets of $\mathcal G_1$ and $\mathcal G_2$ by $V_1$ and $V_2$, respectively.
    We show by induction on the structure of $\phi$ that we have computed a list $\FV_{\mathcal G_1, \mathcal G_2}(\phi) = \ab*<\ab*<\alpha_i^j, \beta_i^j>_{j\in [l_i]}, \vec y_i, \vec z_i>_{i\in [m]}$ with the following properties.
    \begin{itemize}
        \item $\vec X$ and $\vec x$ are the sets of free set variables and free vertex variables of $\phi$, respectively.
        \item $\ab<\vec y_i, \vec z_i>_{i\in [m]}$ is the list of all possible partitions of $\vec x$ into two disjoint sets.
        \item The argmin rank and the quantifier rank of $\alpha_i^j$ and $\beta_i^j$ are at most those of $\phi$.
        \item The following equality holds.
        \begin{equation*}
                \sat(\mathcal G_1\oplus \mathcal G_2, \phi, \vec X, \vec x) = \bigcup_{i\in [m]} \bigcup_{j\in [l_i]} \sat(\mathcal G_1, \alpha_i^j,\vec X, \vec y_i) \boxtimes \sat(\mathcal G_2, \beta_i^j, \vec X, \vec z_i).
        \end{equation*}
    \end{itemize}
    If $\phi$ has no free vertex variables, then we regard the list $\FV_{\mathcal G_1,\mathcal G_2}(\phi)$ as a list $\langle\alpha_i, \beta_i\rangle_{i\in [m]}$.
    For simplicity, if a partition $(\vec y, \vec z)$ is absent from the list $\FV_{\mathcal G_1,\mathcal G_2}(\phi)$, then we add the tuple $(\ab<\bot, \bot>, \vec y, \vec z)$. Here, $\bot$ denotes an always-false formula. 

    Then, we compute $\FV_{\mathcal G_1,\mathcal G_2}(\phi)$ by the following induction on the structure of $\phi$.
    From the construction of $\phi$, we only need to consider the following cases:
    $\phi \equiv \neg \psi$; $\phi \equiv \psi_1 \land \psi_2$; $\phi \equiv \exists x \psi$; $\phi \equiv \exists X \psi$; and $\phi \equiv \vec X \in \Argmin_{f}^{+p'}(\psi)$.
    If $\phi$ is an atomic formula other than $\vec X \in \Argmin_{f}^{+p'}(\psi)$, then $\phi$ is a $\CMSO$ formula and is handled by~\cref{thm:splitting-MSO}.

    \proofsubparagraph*{Case $\phi\equiv \vec X \in \Argmin_{f}^{+p'}(\psi)$:}
    From the induction hypothesis, we have a list $\FV_{\mathcal G_1,\mathcal G_2}(\psi) = \ab<\alpha_i, \beta_i>_{i\in [m]}$ that satisfies the desired properties.
    Note that $\psi$ has no free vertex variables by the definition of $\Argmin$ and thus the list $\FV_{\mathcal G_1,\mathcal G_2}(\psi)$ contains no $\vec y_i$ and $\vec z_i$.

    Let $\mathcal A_i = \sat(\mathcal G_1, \alpha_i, \vec X)$ and $\mathcal B_i = \sat(\mathcal G_2, \beta_i, \vec X)$ for each $i\in [m]$.
    Let $\rho_{f}^{p'}$ be the $p'$-bounded union-ranking of $\ab<\mathcal A_i, \mathcal B_i>_{i \in [1,m]}$.
    From the definition of the union-ranking, we have
    \begin{align*}
        \sat&(\mathcal G_1\oplus \mathcal G_2, \phi, \vec X) \\
          & = \ab\Bigg\{\vec X \in \bigcup_{i\in [m]} \mathcal A_i \boxtimes \mathcal B_i : f(\vec X) = \imin^{+p'}\ab\Bigg(f \ab\Big(\bigcup_{i\in [m]} \mathcal A_i \boxtimes \mathcal B_i))\} \\
        & = \bigcup_{\substack{
                i \in [m], (a,b) \in [0, p']^2\\
                p' = \rho_{f}^{p'}(i, a, b)
        }}
            \ab\{\vec X \in \mathcal A_i \boxtimes \mathcal B_i : f(\vec X) = \imin^{+a}(f(\mathcal A_i)) + \imin^{+b}(f(\mathcal B_i))\} \\
        & = \bigcup_{\substack{
                i \in [m], (a,b) \in [0, p']^2\\
                p' = \rho_{f}^{p'}(i, a, b)
        }}
            \ab\{\vec X \in \mathcal A_i : f(\vec X) = \imin^{+a}(f(\mathcal A_i))\} \\ &\qquad {}\boxtimes \ab\{\vec X \in \mathcal B_i : f(\vec X) = \imin^{+b}(f(\mathcal B_i))\} \\
          & = \bigcup_{\substack{
                i \in [m], (a,b) \in [0, p']^2\\
                p' = \rho_{f}^{p'}(i, a, b)
        }}
            \sat(\mathcal G_1, \vec X \in \Argmin_{f}^{+a}(\alpha_i),\vec X)
            {}\boxtimes \sat(\mathcal G_2, \vec X \in \Argmin_{f}^{+b}(\beta_i), \vec X).
    \end{align*}
    Moreover, by the outer induction hypothesis, we have computed the vectors $\mathscr E(\mathcal G_1,\alpha_i, f, \vec X)$ and $\mathscr E(\mathcal G_2,\beta_i, f, \vec X)$ for each $i\in [m]$, and thus we can compute the $p'$-bounded union-ranking $\rho_{f}^{p'}$ and the vectors $\mathscr E(\mathcal G_1\oplus \mathcal G_2, \phi, f, \vec X)$ in time $O(m(p'+1)^2)$.
    Note that $m$ is bounded by a function of $\tau$, $p$, $r_Q$, and $r_A$ by \cref{lem:finiteness-CMSOAm}.

    The other cases are similar to those in the proof of \cref{thm:splitting-MSO}~\cite{CourcelleEngelfriet12}, but we provide the details for completeness.
    \proofsubparagraph*{Case $\phi\equiv \neg \psi$:}
    Let $\FV_{\mathcal G_1 , \mathcal G_2}(\psi) = \ab*<\ab*<\alpha_i^j, \beta_i^j>_{j\in [l_i]}, \vec y_i, \vec z_i>_{i\in [m]}$.
    Denote by $U^{C}$ the complement of a set $U$.
    Then, we claim that the following equality holds, with the complement taken with respect to the set of all valid assignments for $\phi$.
    \begin{align*}
        \sat&(\mathcal G_1\oplus \mathcal G_2, \neg \psi, \vec X, \vec x) \\ & =
    \ab[\bigsqcup_{i\in [m]}\bigcup_{j\in [l_i]} \sat(\mathcal G_1, \alpha^j_i,\vec X, \vec y_i) \boxtimes \sat(\mathcal G_2, \beta^j_i, \vec X, \vec z_i)]^{C}\\
            & = \bigsqcup_{i\in [m]} \bigcup_{I\subseteq [l_i]} \sat\ab(\mathcal G_1, \bigwedge_{j\in I} \lnot\alpha^j_i,\vec X, \vec y_i) \boxtimes \sat\ab(\mathcal G_2, \bigwedge_{j\in [l_i]\setminus I}\lnot\beta^j_i, \vec X, \vec z_i).
    \end{align*}
    Here, the union over $i\in[m]$ is disjoint because distinct partitions $(\vec y_i, \vec z_i)$ yield disjoint sets of assignments.
    The first equality follows from the definition of $\lnot$. We check the second equality.
    Suppose $(\vec A, \vec a)$ is an element of the set on the first line.
    For $i=1,2$, define $\vec A_i(X)=\vec A(X)\cap V_i$ for each $X\in\vec X$, and let $\vec a_i$ be the restriction of $\vec a$ to $\{x\in\vec x : \vec a(x)\in V_i\}$.
    Then, $(\vec A, \vec a) \notin \sat(\mathcal G_1\oplus \mathcal G_2, \psi, \vec X, \vec x)$.
    Let $i \in [m]$ be the index such that the domains of $\vec a_{1}$ and $\vec a_{2}$ are $\vec y_i$ and $\vec z_i$, respectively.
    Let $I = \{j\in [l_i] : (\vec A_{1}, \vec a_{1}) \notin \sat(\mathcal G_1, \alpha^j_i,\vec X, \vec y_i)\}$.
    Then, $(\vec A_{1}, \vec a_{1}) \in \sat(\mathcal G_1, \bigwedge_{j\in I} \lnot\alpha^j_i,\vec X, \vec y_i)$.
    Since $(\vec A, \vec a)$ is an element of the set on the first line, we have $(\vec A_{2}, \vec a_{2}) \notin \sat(\mathcal G_2, \beta^j_i, \vec X, \vec z_i)$ for every $j\in [l_i]$ with $j\notin I$.
    Thus, $(\vec A_{2}, \vec a_{2}) \in \sat(\mathcal G_2, \bigwedge_{j\in [l_i]\setminus I}\lnot\beta^j_i, \vec X, \vec z_i)$.
    Hence, $(\vec A, \vec a)$ is an element of the set on the second line.

    Suppose $(\vec A, \vec a)$ is an element of the set on the second line and define $(\vec A_{i}, \vec a_{i})$ as above for $i=1,2$.
    Then, there is a unique $i\in [m]$ and at least one $I\subseteq [l_i]$ such that $(\vec A_{1}, \vec a_{1}) \in \sat(\mathcal G_1, \bigwedge_{j\in I} \lnot\alpha^j_i,\vec X, \vec y_i)$ and $(\vec A_{2}, \vec a_{2}) \in \sat(\mathcal G_2, \bigwedge_{j\in [l_i]\setminus I}\lnot\beta^j_i, \vec X, \vec z_i)$.
    Then, $(\vec A, \vec a) \notin \sat(\mathcal G_1, \alpha^j_i,\vec X, \vec y_i) \boxtimes \sat(\mathcal G_2, \beta^j_i, \vec X, \vec z_i)$ for every $j\in [l_i]$.
    Therefore, $(\vec A, \vec a)$ is an element of the set on the first line.
    Hence, the second equality holds, and we obtain the desired list $\FV_{\mathcal G_1,\mathcal G_2}(\neg \psi)$.

    \proofsubparagraph*{Case $\phi \equiv \exists x \psi$:}
    If $x$ is not a free vertex variable of $\psi$, then $\FV_{\mathcal G_1,\mathcal G_2}(\exists x\,\psi) = \FV_{\mathcal G_1,\mathcal G_2}(\psi)$. Assume otherwise.
    Let $\FV_{\mathcal G_1,\mathcal G_2}(\psi) = \ab*<\ab*<\alpha_i^j,\beta_i^j>_{j\in[l_i]},\vec y_i,\vec z_i>_{i\in[m]}$.
    For each $i \in [m]$, define the list $L_i$ by
    $L_i = \ab*<\exists x\,\alpha_i^j,\beta_i^j>_{j\in[l_i]}$ if $x \in \vec y_i$, and
    $L_i = \ab*<\alpha_i^j,\exists x\,\beta_i^j>_{j\in[l_i]}$ if $x \in \vec z_i$.
    By the induction hypothesis and the semantics of $\exists x$, we have
    \begin{align*}
        \sat&(\mathcal G_1\oplus\mathcal G_2,\exists x\,\psi,\vec X,\vec x\setminus\{x\}) \\ & =
        \begin{bmatrix}
            & \bigcup_{\substack{i\in[m],\,j\in[l_i]\\x\in\vec y_i}}
            \sat(\mathcal G_1,\exists x\,\alpha_i^j,\vec X,\vec y_i\setminus\{x\})
            \boxtimes
            \sat(\mathcal G_2,\beta_i^j,\vec X,\vec z_i) \\
            \cup & \bigcup_{\substack{i\in[m],\,j\in[l_i]\\x\in\vec z_i}}
            \sat(\mathcal G_1,\alpha_i^j,\vec X,\vec y_i)
            \boxtimes
            \sat(\mathcal G_2,\exists x\,\beta_i^j,\vec X,\vec z_i\setminus\{x\})
        \end{bmatrix} \\
        &=
        \bigcup_{i\in[m]}\ \bigcup_{(\alpha,\beta)\in L_i}
        \sat(\mathcal G_1,\alpha,\vec X,\vec y_i\setminus\{x\})
        \boxtimes
        \sat(\mathcal G_2,\beta,\vec X,\vec z_i\setminus\{x\}).
    \end{align*}
    Therefore, merging equal partitions yields the desired list
    \[
        \FV_{\mathcal G_1,\mathcal G_2}(\exists x\,\psi),
    \]
    where the merged partitions have the form $(\vec y_i \setminus \{x\}, \vec z_i \setminus \{x\})$.

    \proofsubparagraph*{Case $\phi \equiv \exists X \psi$:}
    Let
    \[
        \FV_{\mathcal G_1,\mathcal G_2}(\psi)
        =\ab*<\ab*<\alpha_i^j,\beta_i^j>_{j\in[l_i]},\vec y_i,\vec z_i>_{i\in[m]}.
    \]
    By the semantics of $\exists X$ and the induction hypothesis, we have
    \begin{align*}
        \sat(\mathcal G_1\oplus\mathcal G_2,\exists X\,\psi,\vec X\setminus\{X\},\vec x)
        &= \bigsqcup_{i\in[m]}\bigcup_{j\in[l_i]}
        \begin{pmatrix}
        \sat(\mathcal G_1,\exists X\,\alpha_i^j,\vec X\setminus\{X\},\vec y_i)\\
        {}\boxtimes  \sat(\mathcal G_2,\exists X\,\beta_i^j,\vec X\setminus\{X\},\vec z_i)
        \end{pmatrix}.
    \end{align*}
    Hence, the desired list is
    \[
    \FV_{\mathcal G_1,\mathcal G_2}(\exists X\,\psi)
        =\ab*<
            \ab*<\exists X\,\alpha_i^j,\exists X\,\beta_i^j>_{j\in[l_i]},
            \vec y_i,\vec z_i
        >_{i\in[m]}.
    \]

    \proofsubparagraph*{Case $\phi \equiv \psi_1 \land \psi_2$:}
    From the induction hypothesis, we have the following lists.
    \begin{align*}
        \FV_{\mathcal G_1,\mathcal G_2}(\psi_1)
        &=\ab*<\ab*<\alpha_i^j,\beta_i^j>_{j\in[l_i]},
        \vec y_i^1,\vec z_i^1>_{i\in[m_1]}, \\
        \FV_{\mathcal G_1,\mathcal G_2}(\psi_2)
        &=\ab*<\ab*<\gamma_k^\ell,\delta_k^\ell>_{\ell\in[r_k]},
        \vec y_k^2,\vec z_k^2>_{k\in[m_2]}.
    \end{align*}
    For each partition $(\vec y,\vec z)$ of $\vec x$, let $L_{\vec y,\vec z}$ be the list
    \[
        \ab*<
            \alpha_i^j\land\gamma_k^\ell,
            \beta_i^j\land\delta_k^\ell
        >_{\substack{
            i\in[m_1],\,j\in[l_i],\,k\in[m_2],\,\ell\in[r_k]\\
            (\vec y_i^1 \cup \vec y_k^2, \vec z_i^1 \cup \vec z_k^2)=(\vec y,\vec z)
        }}.
    \]
    For fixed formulas $\alpha_i^j,\beta_i^j,\gamma_k^\ell,\delta_k^\ell$ and a fixed partition $(\vec y,\vec z)$, we have
    \begin{align*}
        &\ab(
            \sat(\mathcal G_1,\alpha_i^j,\vec X,\vec y)
            \boxtimes
            \sat(\mathcal G_2,\beta_i^j,\vec X,\vec z)
        )
        \cap
        \ab(
            \sat(\mathcal G_1,\gamma_k^\ell,\vec X,\vec y)
            \boxtimes
            \sat(\mathcal G_2,\delta_k^\ell,\vec X,\vec z)
        ) \\
        &\quad={}
        \sat(\mathcal G_1,\alpha_i^j\land\gamma_k^\ell,\vec X,\vec y)
        \boxtimes
        \sat(\mathcal G_2,\beta_i^j\land\delta_k^\ell,\vec X,\vec z).
    \end{align*}
    It follows that
    \begin{align*}
        \sat(\mathcal G_1\oplus\mathcal G_2,\psi_1\land\psi_2,\vec X,\vec x)
        &=
        \bigsqcup_{(\vec y,\vec z)}\ \bigcup_{(\theta,\xi)\in L_{\vec y,\vec z}}
        \sat(\mathcal G_1,\theta,\vec X,\vec y)
        \boxtimes
        \sat(\mathcal G_2,\xi,\vec X,\vec z).
    \end{align*}
    Thus, $\FV_{\mathcal G_1,\mathcal G_2}(\psi_1\land\psi_2)
        =\ab*<L_{\vec y,\vec z},\vec y,\vec z>_{(\vec y,\vec z):L_{\vec y,\vec z}\neq\emptyset}$
    is the desired list.

    We finally verify the outer induction hypothesis and the running time.
    Suppose that $\phi$ has no free vertex variables and write the constructed list as
    $\FV_{\mathcal G_1,\mathcal G_2}(\phi)=\ab<\alpha_i,\beta_i>_{i\in[m]}$.

    By \cref{prop:fundamental-union-rank}, for each $t\in[0,p]$ and $f\in\mathcal F$,
    \begin{align*}
        &\mathscr E(\mathcal G_1\oplus\mathcal G_2,\phi,f,\vec X)(t)\\
        &\quad=\imin^{+t}\ab*\{
            \mathscr E(\mathcal G_1,\alpha_i,f,\vec X)(a)+\mathscr E(\mathcal G_2,\beta_i,f,\vec X)(b)
            : i\in[m],\ (a,b)\in[0,p]^2
        \}.
    \end{align*}
    Consequently, the vector $\mathscr E(\mathcal G_1\oplus\mathcal G_2,\phi,f, \vec X)$ is computable from the two child vectors using $O(m\cdot (p+1)^2)$ integer additions.

    In each construction, the quantifier rank and the argmin rank of the resulting formulas are at most those of $\phi$.
    By replacing formulas with their canonical representatives and removing duplicate pairs, \cref{lem:finiteness-CMSOAm} bounds every list by a function of $\tau$, $p$, $r_Q$, and $r_A$.
    Hence, the work at each node of $\mathcal T_G$ is bounded by $g(\tau,p,r_Q,r_A)$ for some computable function $g$, and the total running time is $g(\tau,|\phi|)\cdot|\mathcal T_G|$.
    This completes both inductions and the proof.
\end{proof}

Lastly, we modify the list $\FV_{\mathcal G, \mathcal H}(\phi)$ to make the union of assignments in \cref{thm:FV-MSO-AM} disjoint and obtain \cref{thm:splitting-FVMSOAM}.
\begin{proof}[Proof of \cref{thm:splitting-FVMSOAM}]
	The strategy is essentially the same as in the proof of~\cite[Proposition 5.37]{CourcelleEngelfriet12}.
	Let $\FV_{\mathcal G, \mathcal H}(\phi) = \ab*<\ab*<\alpha_i^j, \beta_i^j>_{j\in [l_i]}, \vec y_i, \vec z_i>_{i\in [m]}$ be the list of formulas obtained from \cref{thm:FV-MSO-AM}.
	For each $i\in[m]$, consider the formulas $\gamma_I = \bigwedge_{j \in I} \alpha_i^j \land \bigwedge_{j\in[l_i]\setminus I} \lnot \alpha_i^j$ and $\delta_I = \bigwedge_{j \in I } \beta_i^j \land \bigwedge_{j\in[l_i]\setminus I} \lnot \beta_i^j$ for each $I \subseteq [l_i]$.
    Then, the list \[\FV_{\mathcal G, \mathcal H}(\phi) = \ab*<\ab<\gamma_I, \delta_J>_{I, J \subseteq [l_i]; I\cap J \neq \varnothing}, \vec y_i, \vec z_i>_{i\in [m]}\] satisfies the desired property.
\end{proof}

\section{Applications}\label{sec:applications}
In this section, we discuss applications of our algorithm.
We often use syntactic sugar such as $X \subseteq Y$, $X = Y$, $X \cap Y = \varnothing$, and $x \in X\cap Y$; these are abbreviations with their standard meanings and are definable in $\MSO$.
We also abbreviate $\bigvee_{i \in [0,\delta']} S \in \Argmin_f^{+i}(\phi)$ by $S \in \Argmin_f^{\le +\delta'}(\phi)$.

First, we sketch how to express a bound of the form $\min(\cdot) + \delta$ using our $\imin^{+\delta'}(\cdot)$ operator, where $\delta$ is a nonnegative integer.
In our setting, we can compute $\imin^{+\delta}(\cdot)$ efficiently for a fixed $\delta$.
Let $U$ be a set, $\mathcal U \subseteq \pset(U)$, $f\colon \mathcal U \to \mathbb Z$ be a set function.
It is easy to see that $\min(f(\mathcal U)) + \delta \le \imin^{+\delta}(f(\mathcal U))$ for any $\delta \ge 0$.
Let $\delta'+1$ be the smallest positive integer such that $\min(f(\mathcal U)) +\delta < \imin^{+(\delta'+1)}(f(\mathcal U))$.
Then, we can check $f(S) \le \min(f(\mathcal U)) + \delta$ by checking $f(S) \in \{\imin^{+i}(f(\mathcal U)) : i \in [0,\delta']\}$.
Thus, the property $\ab<f(S) \le \min(f(\mathcal U)) + \delta \land \phi(S)>$ can be converted into $S \in \Argmin_f^{\le +\delta'}(\phi)$ if $\mathcal U$ is defined by an $\Am\CMSO$ formula $\phi$.
Now, $\delta' \le \delta$, which implies that if we obtain an FPT algorithm with $\delta'$ as a parameter, then we also obtain an FPT algorithm with $\delta$ as a parameter instead of $\delta'$.

\subsection{Interdiction problems}
In this subsection, we consider the following types of problems.
\begin{definition}[Interdiction Problems~{\cite{GruneW25}} on Graphs]
    Let $\phi(X)$ be a graph property of vertex sets or edge sets.
    Given a graph $G$, two weight functions $w,w'$, and thresholds $t_1,t_2\in \mathbb Z$,
    \ProblemName{$\phi$-Interdiction} asks whether the following formula holds:
    \begin{align*}
        \exists X \ab[w(X) \leq t_1 \land  \forall Y \ab((\phi(Y) \land w'(Y) \leq t_2) \to X \cap Y \neq \varnothing)].
    \end{align*}
\end{definition}
Intuitively, the task of \ProblemName{$\phi$-Interdiction} is to find a small set $X$ such that any ``nearly optimal'' solution of $\phi$ must intersect $X$.
Note that the above form captures the interdiction variants of maximization problems, by setting $w'$ to negative weights for the objective function.
Interdiction problems are also known as most vital nodes problems.

If $t_2$ is the optimum value of $w'(Y)$ such that $\phi(Y)$ holds, we call \ProblemName{$\phi$-Interdiction} the \ProblemName{Optimal-$\phi$-Interdiction} problem.
These combinatorial optimization problems have been widely investigated~\cite{GruneW25,BaierEHKKPSS10,DvorakK18,Zenklusen10,FredericksonS99,BallGV89}.
Recently, it was shown that for many properties $\phi(X)$ whose associated decision problem is NP-hard, the interdiction variant is $\Sigma^{\mathrm P}_2$-complete~\cite{GruneW25}.

Here, the constraint $|Y| \leq b$ can be expressed by a formula of length $O(b)$.
Thus, when $w'(Y)=|Y|$ and $\phi$ is a fixed $\CMSO_1$ (or $\CMSO_2$) formula, this type of problem is fixed-parameter tractable parameterized by $t_2+\cw$ (or $t_2 + \tw$), where $\cw$ is clique-width and $\tw$ is treewidth, using an algorithm for $\LinE\CMSO$~\cite{ArnborgLS91,CourcelleM93,CourcelleMR00}.
Our method replaces the parameter $t_2$ by $|t_2 - \alpha_\phi(G,w')|$, where $\alpha_\phi(G,w')$ is the optimum value of $w'(Y)$ over all $Y$ such that $\phi(Y)$ holds.
Furthermore, our method allows us to handle a weighted version of the problem, which cannot be expressed by simply bounding the size of the solution set.
More formally, we obtain the following by \cref{thm:eval-argmso} and \cref{thm:eval-argmso-tw}.
\begin{definition}[Most vital nodes of $\delta$-approximate $\CMSO$ solutions]
    Let $\phi(X)$ be a $\CMSO$ formula with one free variable $X$.
    Given a graph $G$, weight functions $w, w':V(G) \to \Z$, and a parameter $\delta \geq 0$,
    \ProblemName{Most Vital Nodes of $\delta$-Approximate $\phi$-Solutions} asks for a minimum-weight vertex set $X$ with respect to $w$ such that
    \[
        \forall Y \subseteq V(G)\; \ab((\phi(Y) \land w'(Y) \leq \mathtt{opt} + \delta) \to X \cap Y \neq \varnothing),
    \]
    where $\mathtt{opt} = \min\{ w'(Y) \mid G \models \phi[Y] \}$.
\end{definition}
We define \ProblemName{Most Vital Edges of $\delta$-Approximate $\phi$-Solutions} analogously.

After determining $\delta'$ as described at the beginning of this section, the condition $\phi(Y) \land w'(Y) \leq \mathtt{opt} + \delta$ can be handled by an $\Am_{\delta}\CMSO$ formula, and we obtain the following corollaries.
\begin{corollary}\label{cor:interdiction-1}
    Let $\phi(X)$ be a $\CMSO_1$ formula with one free variable $X$.
    Let $G$ be a graph with clique-width $\cw$ and $\delta \geq 0$.
    Then, \ProblemName{Most Vital Nodes of $\delta$-Approximate $\phi$-Solutions} is fixed-parameter tractable parameterized by $\cw + \delta + |\phi|$.
\end{corollary}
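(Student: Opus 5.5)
We reduce \cref{cor:interdiction-1} to \cref{thm:eval-argmso} by writing the problem as a single $\Am_{\delta'}\CMSO_1(\mathcal F)$ formula. Here $\mathcal F=\{w,w'\}$ consists of the two additive weight functions. The first step fixes the index $\delta'$. Let $\mathcal U=\{Y : G\models\phi[Y]\}$. As in the discussion at the beginning of this section, $\delta'$ is the largest index $i$ such that $\imin^{+i}(w'(\mathcal U))\le \mathtt{opt}+\delta$. Because the values $\imin^{+i}(w'(\mathcal U))$ are distinct integers, we have $\imin^{+i}\ge \mathtt{opt}+i$, and therefore $\delta'\le\delta$.

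To compute $\delta'$, we use the vector $\mathscr E(G,\phi,w',Y)$ from the proof of \cref{thm:FV-MSO-AM}. For $p=\delta$, this vector lists the values $\imin^{+i}(w'(\mathcal U))$ for $i\in[0,\delta]$. Equivalently, for each $i\le\delta$ we can run the search algorithm of \cref{thm:eval-argmso} on $Y\in\Argmin^{+i}_{w'}(\phi)$ and evaluate $w'$ on the set it returns. Either way, after computing a clique-width expression by \cite{FominK22}, this takes time $g(|\phi|,\delta,\cw)\cdot n^2$. If $\mathcal U=\varnothing$, then every $X$ is feasible and we output $\varnothing$, or a minimum-$w$ set if negative weights are allowed; this case is handled by the same formula.

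With $\delta'$ known, we consider the formula
\[
\Phi(X)\equiv X\in\Argmin_{w}\bigl(\forall Y\,(Y\in\Argmin_{w'}^{\le+\delta'}(\phi)\to \exists v\,(v\in X\land v\in Y))\bigr).
\]
Here $Y\in\Argmin_{w'}^{\le+\delta'}(\phi)$ abbreviates the disjunction over $i\in[0,\delta']$. The inner operator has no free variables besides $Y$, and the outer operator has no free variables besides $X$, so $\Phi$ is well formed. Its length is $O(|\phi|+\delta)$, and its indices are bounded by $\delta$. By the choice of $\delta'$, the sets satisfying the inner $\Argmin$ disjunction are exactly the sets $Y$ with $\phi(Y)$ and $w'(Y)\le\mathtt{opt}+\delta$. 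Hence $\Phi(X)$ holds exactly for the minimum-$w$ feasible interdiction sets. Applying \cref{thm:eval-argmso} to $\Phi$ returns such an $X$ in time $g(|\Phi|,\delta,\cw)\cdot n^2$, which is FPT in $\cw+\delta+|\phi|$.

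The main obstacle is the correctness of the reduction from the additive threshold $\mathtt{opt}+\delta$ to the ranked operator. We need the bound $\delta'\le\delta$, which makes the parameter dependence legitimate. We also need that a single fixed $\delta'$ works uniformly, since it is computed for the given instance before the formula is built. Everything else is a direct invocation of the earlier theorem.
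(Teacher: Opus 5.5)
Your proposal is correct and follows the paper's argument: determine the rank cutoff $\delta'\le\delta$ from the first $\delta+1$ ranked values of $w'$ over the $\phi$-solutions, replace the threshold $\mathtt{opt}+\delta$ by the disjunction $Y\in\Argmin^{\le+\delta'}_{w'}(\phi)$, wrap the interdiction condition in an outer $\Argmin_w$, and invoke \cref{thm:eval-argmso}. One small correction: each disjunct carries its own copy of $\phi$, so $|\Phi|$ is $O((\delta+1)|\phi|)$ rather than $O(|\phi|+\delta)$; this is still bounded by a function of the parameters, so the conclusion is unaffected.
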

\begin{corollary}\label{cor:interdiction-2}
    Let $\phi(X)$ be a $\CMSO_2$ formula with one free variable $X$.
    Let $G$ be a graph with treewidth $\tw$ and $\delta \geq 0$.
    Then, \ProblemName{Most Vital Nodes of $\delta$-Approximate $\phi$-Solutions} and \ProblemName{Most Vital Edges of $\delta$-Approximate $\phi$-Solutions} are fixed-parameter tractable parameterized by $\tw + \delta + |\phi|$.
\end{corollary}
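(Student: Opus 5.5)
The plan is to reduce \cref{cor:interdiction-2} to \cref{thm:eval-argmso-tw}. This requires writing the problem as an $\Am_{\delta'}\CMSO_2(\mathcal F)$ formula whose length depends only on $|\phi|$ and $\delta$. Take $\mathcal F=\{w,w'\}$. Both are additive unary functions: weighted cardinality restricted to $V(G)$ for the node version, or to $E(G)$ for the edge version. Weights on elements outside the relevant sort are set to $0$, and the sort restriction $Y\subseteq V$ or $Y\subseteq E$ is enforced in the formula via $P_V$ and $P_E$.

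\textbf{Step 1: handling the additive slack $\delta$.} As explained at the start of this section, the condition $\phi(Y)\land w'(Y)\le \mathtt{opt}+\delta$ is equivalent to $Y\in\Argmin_{w'}^{\le +\delta'}(\phi)$. Here $\delta'\le\delta$ is the largest index $i$ with $\imin^{+i}(w'(\sat(G,\phi)))\le\mathtt{opt}+\delta$. The index $\delta'$ depends on $G$, so I will compute it first. Apply \cref{thm:evalargmso} or the DP of \cref{thm:FV-MSO-AM}, whose vectors $\mathscr E(\cdot,\phi,w',Y)$ list the first $\delta+1$ distinct values $\imin^{+i}$. Alternatively, try each of the $\delta+1$ candidates $d\in[0,\delta]$ and test nonemptiness of the relevant $\Argmin^{+i}$ sets, using $\Argmin$ values compared via model checking. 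The simplest option is to read the values off the DP vector computed at the root. Either way this costs FPT time.

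\textbf{Step 2: encoding interdiction.} With $\delta'$ fixed, define
\[
\theta(X)\equiv X\in\Argmin_{w}\Bigl(X\subseteq P_V\land\forall Y\bigl(Y\in\Argmin_{w'}^{\le+\delta'}(Y\subseteq P_V\land\phi(Y))\to\exists z\,(z\in X\land z\in Y)\bigr)\Bigr),
\]
with $P_E$ replacing $P_V$ for the edge version. The inner $\Argmin$ has only the free variable $Y$, and the outer has only $X$, as the syntax requires. Its length is $O(|\phi|+\delta)$ and its index bound is $\delta'\le\delta$. One subtlety is that nested $\Argmin$ under a quantifier over $Y$ is allowed because the inner predicate mentions $Y$ only as its argument tuple. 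The outer $\Argmin_w$ takes the minimum over all feasible $X$, which is exactly the problem's objective. The set of feasible $X$ is nonempty, since taking $X$ to be the whole sort always works.

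\textbf{Step 3: conclude.} \cref{thm:eval-argmso-tw} then finds a satisfying $X$ in time $g(|\theta|,\delta,\tw)\cdot n^2$, where the additive functions are given by their per-element $g_v$ tables. The main obstacle I expect is Step 1, namely justifying rigorously that $\delta'$ can be extracted within the FPT bound. This rests on the vectors $\mathscr E$ maintained in the proof of \cref{thm:FV-MSO-AM}, transferred to the incidence-graph setting used for \cref{thm:eval-argmso-tw}. A cruder fallback is to run the whole algorithm once for each $\delta'\in[0,\delta]$. For each run, additionally verify via model checking that $\imin^{+\delta'}\le\mathtt{opt}+\delta<\imin^{+(\delta'+1)}$, using $\Argmin$ witnesses and comparing their $w'$ values after backtracking.
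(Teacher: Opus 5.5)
Your proposal is correct and follows essentially the paper's route. You first determine $\delta'\le\delta$ from the first $\delta+1$ distinct values $\imin^{+i}$ of $w'$ over $\phi$-solutions, either read off the DP vectors or obtained by evaluating $w'$ on witnesses returned for $Y\in\Argmin^{+i}_{w'}(\cdot)$. You then apply \cref{thm:eval-argmso-tw} to the nested formula $X\in\Argmin_w(\cdots\forall Y(Y\in\Argmin_{w'}^{\le+\delta'}(\phi)\to X\cap Y\neq\varnothing))$, which is exactly the pattern of the paper's ${}^+k$-Bounded Cut formula.

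There are two harmless slips. First, the feasible family for $X$ can be empty, namely when $\varnothing$ is itself a $\delta$-approximate $\phi$-solution, and in that case the theorem simply reports nonexistence. Second, expanding the abbreviation $\Argmin^{\le+\delta'}$ copies the inner formula $\delta'+1$ times, so $|\theta|=O((\delta+1)|\phi|)$ rather than $O(|\phi|+\delta)$; this is still bounded by a function of the parameters.
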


\paragraph*{Examples}
The \ProblemName{Length-Bounded Cut} problem is a well-studied interdiction variant of the \ProblemName{Shortest Path} problem.
In \ProblemName{Length-Bounded Cut}, given a graph $G$, vertices $s, t \in V(G)$, and a positive integer $\lambda$, the task is to find a minimum-cardinality edge set $F\subseteq E(G)$ such that there is no $s$-$t$ path of length at most $\lambda$ in $G-F$.
The decision version of \ProblemName{Length-Bounded Cut} is NP-complete~\cite{BaierEHKKPSS10} even for $\lambda =4$,
W[1]-hard parameterized by the pathwidth plus the maximum degree~\cite{DvorakK18},
and can be solved in $\lambda^{O(\tw^2)}n$ time~\cite{DvorakK18}, where $\tw$ is the treewidth of the input graph.

If $s$ and $t$ are disconnected, or if their distance exceeds $\lambda$, the empty edge set is already a solution. Otherwise, let $d_{s,t}$ be the length of the shortest $s$-$t$ path in $G$ and set $\delta(\lambda)=\lambda-d_{s,t}\ge0$.
It is known~\cite{Courcelle97} that the property that an edge set $Y$ forms an $s$-$t$ path can be expressed by an $\MSO_2$ formula $\mathrm{Path}_{s,t}(Y)$.
Moreover, for an edge set $F$, an edge set $Y$ forms an $s$-$t$ path in $G-F$ if and only if $Y \cap F = \varnothing$ and $Y$ is an $s$-$t$ path in $G$.
Therefore, the desired edge set $X$ can be defined by an $\Am_{\delta(\lambda)}\MSO_2$-formula.
Thus, we obtain an FPT algorithm parameterized by $\delta(\lambda)$ plus treewidth by \cref{thm:eval-argmso-tw}.

\ProblemName{Minimum Spanning Tree Interdiction} is the interdiction variant of \ProblemName{Minimum Spanning Tree}.
Recall that a \emph{spanning tree} is an acyclic edge set that connects all vertices; this property is definable in $\MSO_2$.
\ProblemName{Minimum Spanning Tree Interdiction} is NP-hard in general~\cite{BallGV89},
and W[1]-hard parameterized by the weight $t_2$ of the spanning tree~\cite{GuoS14},
but \cref{cor:interdiction-2} applies.

Interdiction variants of \ProblemName{Matching}~\cite{Zenklusen10,GuoS14} and
\ProblemName{Independent Set}~\cite{BazganTT11} have also been studied.
These problems are NP-hard in general, but we can apply \cref{cor:interdiction-1} or \cref{cor:interdiction-2}.

\subsection{Forcing a unique optimum}\label{subsec:unique}
A \emph{forcing set} $F$ for perfect matchings is a subset of edges such that there is exactly one perfect matching $M$ with $F \subseteq M$.
The \emph{forcing number} for perfect matchings of a graph $G$ is the smallest cardinality of a forcing set, and it is NP-hard to compute the forcing number~\cite{AfshaniHM04}.
Note that the property of having a unique perfect matching is definable in $\MSO_2$, and thus by Courcelle's theorem~\cite{ArnborgLS91,CourcelleM93} computing the forcing number is fixed-parameter tractable parameterized by treewidth.

The notion of forcing sets has been generalized to other combinatorial optimization problems.
This has been investigated in various settings, such as puzzles~\cite{DemaineMSWA16}, graph coloring~\cite{HatamiM05}, SAT~\cite{HatamiM05,DemaineMSWA16}, minimum vertex cover~\cite{HoriyamaKOSS24,AnCCKLOS25}, and shortest paths~\cite{GimaKOS25}.
These problems often become $\Sigma^{\mathrm P}_2$-hard when the base problem is NP-hard.

For minimum vertex cover, this type of problem is called \ProblemName{Pre-assignment for Uniquification of Minimum Vertex Cover (PAU-VC)}, and it was shown that PAU-VC is fixed-parameter tractable parameterized by clique-width~\cite{AnCCKLOS25}.
Our results generalize this in the following sense.
\begin{definition}
    Let $\phi(X)$ be a graph property.
    \ProblemName{PAU-$\min$-$\phi$} is the problem that, given a graph $G$,
    asks for a minimum-cardinality set $S$ that is contained in exactly one minimum-cardinality set $U$ satisfying $\phi(U)$.
    We define \ProblemName{PAU-$\max$-$\phi$} analogously.
\end{definition}
The uniqueness property can be expressed by the $\MSO$ formula
\[
    \exists! X\,\phi(X) \equiv \exists X\,\forall Y\ab(\phi(Y) \leftrightarrow Y = X).
\]
Thus, by considering a formula $\psi(S) \equiv \exists! U \ab[U \in \Argmin(\phi(U)) \land S\subseteq U]$ and $S \in \Argmin(\psi)$, we obtain the following corollaries from \cref{thm:eval-argmso} or \cref{thm:eval-argmso-tw}.

\begin{corollary}
    Let $\phi(X)$ be a property definable by a $\CMSO_1$-formula with one free variable $X$.
    Then, \ProblemName{PAU-$\min$-$\phi$} and \ProblemName{PAU-$\max$-$\phi$} are fixed-parameter tractable parameterized by clique-width plus $|\phi|$.
\end{corollary}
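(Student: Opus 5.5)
We reduce the corollary to \cref{thm:eval-argmso} by writing down a single $\Am\CMSO_1$ formula that defines the optimal pre-assignment sets, and then checking that its length depends only on $|\phi|$.

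First, starting from the $\CMSO_1$ formula $\phi(X)$, we form
\[
\psi(S) \equiv \exists U\,\bigl[U \in \Argmin(\phi) \land S \subseteq U \land \forall U'\,\bigl((U' \in \Argmin(\phi) \land S \subseteq U') \to U' = U\bigr)\bigr].
\]
Here $U' \in \Argmin(\phi)$ abbreviates $U' \in \Argmin(\phi[U'/U])$ after renaming the free variable. By the semantics of $\Argmin$ with the cardinality function, $G \models \psi[S]$ holds exactly when $S$ is contained in exactly one minimum-cardinality set satisfying $\phi$. We then set $\theta(S) \equiv S \in \Argmin(\psi)$. The satisfying sets of $\theta$ are precisely the optimal solutions of \ProblemName{PAU-$\min$-$\phi$}. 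For \ProblemName{PAU-$\max$-$\phi$}, we replace the inner $\Argmin(\phi)$ by $\Argmax(\phi)$, which by the paper's convention is $\Argmin_{-|\cdot|}(\phi)$. The function $X\mapsto -|X|$ is additive with $g_v(1)=-1$ and $g_v(0)=0$, so it is admissible in $\mathcal F$. The outer operator remains $\Argmin$ with respect to cardinality.

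Next we bound the size of $\theta$. It lies in $\Am_0\CMSO_1(\mathcal F)$ with $\mathcal F=\{|\cdot|,-|\cdot|\}$. Its argmin rank is $2$, and its length is $O(|\phi|)$.

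Finally, applying \cref{thm:eval-argmso} with $k=0$ either returns a set $S$ with $G\models\theta[S]$ or reports that none exists. The running time is $g(|\theta|,0,\cw)\cdot n^2$, which is $g'(\cw,|\phi|)\cdot n^2$. That result also supplies the clique-width expression, so we need not assume one is given.

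One degenerate case needs a check. If no set satisfies $\phi$, then $\Argmin(\phi)$ is empty, $\psi$ is unsatisfiable, and the algorithm correctly reports that no solution exists. Otherwise, the unique minimum solution $U$ itself witnesses $\psi[S]$ when $S=U$, so $\psi$ is nonempty whenever a unique optimum exists.

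The only step requiring any care is confirming that the nested use of $\Argmin$ is well-formed. The inner $\Argmin(\phi)$ has free variable $U$ only and no free first-order variables, as the definition requires. The outer $\Argmin(\psi)$ has free variable $S$ only, since both $U$ and $U'$ are bound inside $\psi$. Given this, the remainder of the proof is routine.
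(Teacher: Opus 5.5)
Your proposal is correct and is essentially the paper's argument: the paper forms $\psi(S) \equiv \exists! U\,[U \in \Argmin(\phi(U)) \land S \subseteq U]$, of which your formula is the expansion of $\exists!$, and applies \cref{thm:eval-argmso} to $S \in \Argmin(\psi)$. One small wording fix: in your degenerate-case remark, \emph{any} minimum solution $U$ witnesses $\psi[U]$ (every minimum set containing $U$ equals $U$), not only a unique one, though this does not affect the proof.
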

\begin{corollary}
    Let $\phi(X)$ be a property definable by a $\CMSO_2$-formula with one free variable $X$.
    Then, \ProblemName{PAU-$\min$-$\phi$} and \ProblemName{PAU-$\max$-$\phi$} are fixed-parameter tractable parameterized by treewidth plus $|\phi|$.
\end{corollary}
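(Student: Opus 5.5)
The corollary is the treewidth counterpart of the preceding clique-width corollary. I would derive it by writing the problem as a single $\Am\CMSO_2$ formula and then invoking \cref{thm:eval-argmso-tw}. Let $\phi(X)$ be the given $\CMSO_2$ formula. Following the discussion just before the clique-width corollary, I define
\[
    \psi(S) \equiv \exists U\,\forall U'\,\bigl( (U' \in \Argmin(\phi) \land S \subseteq U') \leftrightarrow U' = U \bigr).
\]
This formula says that exactly one minimum-cardinality set satisfying $\phi$ contains $S$; here $\exists!$ is unfolded with the inner predicate $U' \in \Argmin(\phi)$. The final formula is $\chi(S) \equiv S \in \Argmin(\psi)$. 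For \ProblemName{PAU-$\max$-$\phi$} I use $\Argmax$ in place of the inner $\Argmin$, which the paper reduces to $\Argmin$ with negated weights. Both are $\Am_0\CMSO_2$ formulas, so $k = 0$, and the only function symbols are cardinality and its negation. These are additive on $\pset(V(G)\cup E(G))$ by \cref{def:additive}, with $g_v(1)=\pm1$ and $g_v(0)=0$.

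Next I check that $\chi$ is a correct formalization. A set $S$ satisfies $\chi$ exactly when it has minimum cardinality among sets contained in a unique optimal $U$, which is the PAU definition. The length $|\chi|$ is $O(|\phi|)$, and $\chi$ depends only on $\phi$.

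Applying \cref{thm:eval-argmso-tw} to $\chi$ with $k=0$ then either finds $S$ with $G\models\chi[S]$ or reports that none exists. The running time is $g(|\chi|,0,\tw)\cdot n^2$, which is $g'(|\phi|,\tw)\cdot n^2$, so the problem is fixed-parameter tractable in treewidth plus $|\phi|$.

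The main obstacle is a sorting subtlety. The family $\Argmin(\phi)$ in the PAU definition ranges over sets of the sort intended by $\phi$, which may be vertex sets or edge sets; PAU for perfect matchings, for example, uses edge sets. In $\CMSO_2$ a set variable ranges over subsets of $V(G)\cup E(G)$, so the cardinality function counts both vertices and edges. I would therefore conjoin the relevant sort restriction, $U'\subseteq P_V$ or $U'\subseteq P_E$, into $\phi$ and into $\psi$, and likewise require $S \subseteq U'$ to be of the same sort. With this restriction the minimum is taken over the intended family and the correspondence is exact. Everything else is a direct application of the theorem.
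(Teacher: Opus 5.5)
Your proposal is correct and follows the paper's argument. The paper likewise forms $\psi(S) \equiv \exists! U\,[U \in \Argmin(\phi(U)) \land S \subseteq U]$, takes $S \in \Argmin(\psi)$, and applies \cref{thm:eval-argmso-tw}, handling $\Argmax$ through negated cardinality; your explicit sort restriction to $P_V$ or $P_E$ is a harmless precision that the paper leaves implicit.
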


\subsection{Diverse optimum solutions}
In diversity-maximization problems, the task is to find $r$ solutions that are diverse from one another.
The diversity measure is often defined by the Hamming distance between two solutions.
Let $A \Delta B = (A \setminus B) \cup (B \setminus A)$ denote the symmetric difference of sets $A$ and $B$; their Hamming distance is $|A\Delta B|$.
There are two commonly used diversity measures for a tuple $\vec S = (S_1, \dots, S_r)$ of $r$ $(\ge 2)$ solutions.
\begin{itemize}
    \item The sum of distances between all pairs of solutions: $\mathtt{SumHam}(\vec S) = \sum_{1 \le i < j \le r} |S_i \Delta S_j|$.
    \item The minimum distance between any pair of solutions: $\mathtt{MinHam}(\vec S) = \min_{1 \le i < j \le r} |S_i \Delta S_j|$.
\end{itemize}
Let $\phi(X)$ be a property of vertex sets or edge sets.
It is known that if $\phi$ is definable in $\CMSO_1$ (or $\CMSO_2$), then, given a graph $G$ and integers $r,k,d$, the problem of finding $r$ solutions $\vec S = (S_1, \dots, S_r)$ with $G\models \phi(S_i)$ and $|S_i| \leq k$ for all $i$ that have $\mathtt{SumHam}(\vec S) \geq d$ is fixed-parameter tractable parameterized by $r + k + \cw$ (or $r + k + \tw$)~\cite{Baste22,DrabikM26}, where $\cw$ and $\tw$ are the clique-width and treewidth of $G$, respectively.
When replacing $\mathtt{SumHam}$ with $\mathtt{MinHam}$, the problem is fixed-parameter tractable parameterized by $r + k + d+ \cw$ (or $r + k + d + \tw$)~\cite{Baste22,DrabikM26}.
These results are based on the notion of a \emph{dynamic-programming core}~\cite{Baste22,DrabikM26}, which permits a more precise running-time analysis.
Our method has the same polynomial factor in the running time but removes the dependence on the cardinality parameter $k$ when only nearly optimal solutions are considered.
Observe that $\mathtt{SumHam}(\vec S)$ is additive because it can be written as
\[
    \mathtt{SumHam}(\vec S) = \sum_{v \in V(G)} \sum_{1 \le i < j \le r} \indicator_{S_i \Delta S_j}(v) = \sum_{v \in V(G)} |\{\ell : v \in S_\ell\}|\cdot |\{\ell : v \notin S_\ell\}|.
\]
Let $f_w: S \mapsto \sum_{v \in S} w(v)$.  Then, the formula \[
    \psi(\vec S) \equiv \vec S \in \Argmax_{\mathtt{SumHam}}
    \ab(\bigwedge_{1 \le i \le r} S_i \in \Argmin_{f_w}(\phi))
\]
defines the set of $r$-tuples of solutions that maximize $\mathtt{SumHam}$ among all $r$-tuples of optimal solutions of $\phi$. Thus, we obtain the following corollary by \cref{thm:model-check-AM}.

\begin{corollary}\label{cor:sumham-amcmso}
    Let $G$ be a graph with clique-width $\cw$ and $w: V(G) \to \Z$ be a weight function.
    Let $\phi$ be a $\CMSO_1$ formula with one free variable $X$.
    Then, the problem of finding $r$ minimum-weight (or maximum-weight) solutions $\vec S = (S_1, \dots, S_r)$ with $G\models \phi[S_i]$ for all $i$ that maximize $\mathtt{SumHam}(\vec S)$ can be solved in $g(r,\cw,|\phi|)n^2$ time for some computable function $g$.
    In particular, maximizing $\mathtt{SumHam}$ among $r$ optimum solutions is fixed-parameter tractable parameterized by $r + \cw$ for the following problems:
    minimum weighted vertex cover, maximum weighted independent set, minimum weighted dominating set, minimum weighted feedback vertex set, and their connected variants.
\end{corollary}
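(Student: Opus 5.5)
The plan is to reduce \cref{cor:sumham-amcmso} to a single application of \cref{thm:model-check-AM}. The formula will be $\psi(\vec S)$ from the discussion above, over the measured signature $\tau=(\tau^1_{\cw},\{f_w,-\mathtt{SumHam}\})$. Here $f_w$ has arity $1$ and $-\mathtt{SumHam}$ has arity $r$, and $\Argmax_{\mathtt{SumHam}}$ is read as $\Argmin_{-\mathtt{SumHam}}$. For the maximum-weight variant we use $\Argmin_{-f_w}$.

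\textbf{Step 1 (measured structure is additive).} Both functions are additive in the sense of \cref{def:additive}.
\begin{itemize}
\item For $f_w$, take $g_v(b)=b\,w(v)$.
\item For $-\mathtt{SumHam}$, the displayed identity gives $g_v(b_1,\dots,b_r)=-\bigl(\sum_\ell b_\ell\bigr)\bigl(r-\sum_\ell b_\ell\bigr)$, which satisfies $g_v(0,\dots,0)=0$.
\end{itemize}
Both tuples $\langle g_v\rangle_{v}$ are written down in $O(2^r n)$ time. Restricting them along a $\cw$-expression tree yields an additive measured $\tau$-structure, as described at the end of the preliminaries.

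\textbf{Step 2 (correctness of $\psi$).} The inner conjunction $\bigwedge_i S_i\in\Argmin_{f_w}(\phi)$ has free variables $\vec S$ and no free vertex variables. Hence the outer predicate is well formed, and $\psi\in\Am_0\CMSO_1$ with $|\psi|=O(r\,|\phi|)$. By the semantics of $\Argmin$ with index $0$, $\sat(\mathcal G,\psi,\vec S)$ is exactly the set of $r$-tuples of optimum $\phi$-solutions that maximize $\mathtt{SumHam}$ among such tuples. If no $\phi$-solution exists, this set is empty and we report that.

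\textbf{Step 3 (running time).} Compute a $g'(\cw)$-expression of linear size in quadratic time by \cite{FominK22}; its width is bounded by a function of $\cw$. Apply \cref{thm:model-check-AM}, which also returns a witness. The time is $g(\tau,|\psi|)\cdot|\mathcal T_G|$ with $|\mathcal T_G|=O(n)$, plus the $n^2$ for the decomposition. Since $\tau$ depends only on $\cw$ and $r$, the total is $g(r,\cw,|\phi|)\,n^2$.

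\textbf{Step 4 (listed problems).} Each listed problem has a fixed $\CMSO_1$ formula $\phi$:
\begin{itemize}
\item vertex cover, independent set and dominating set are first-order properties;
\item feedback vertex set is defined by acyclicity of $V\setminus X$, which is expressible in $\MSO_1$;
\item the connected variants add connectivity of $G[X]$, which is also expressible in $\MSO_1$.
\end{itemize}
With $|\phi|$ constant, the bound gives fixed-parameter tractability in $r+\cw$.

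The main obstacle is confirming that $\Argmin$ over a multi-ary additive function such as $-\mathtt{SumHam}$ is covered by the algorithm, since the union-ranking argument must hold for $r$-tuples. \Cref{def:union-rank} and \cref{obs:separable-additive} are already stated for $\pset_k$ with general arity, so I expect this check to go through. I would verify it first.
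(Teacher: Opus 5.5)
Your proposal is correct and follows essentially the paper's own argument. The paper likewise observes that $\mathtt{SumHam}$ is additive via the per-vertex identity $|\{\ell : v \in S_\ell\}|\cdot|\{\ell : v \notin S_\ell\}|$, writes down the same formula $\psi(\vec S)$, and concludes by \cref{thm:model-check-AM}; your extra details (the explicit $g_v$, the quadratic-time decomposition, and the definability of the listed problems) only fill in steps the paper leaves implicit.
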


For the measure $\mathtt{MinHam}$, the situation is more complicated since $\mathtt{MinHam}$ is not additive.
We use techniques similar to those in previous work~\cite{DrabikM26} to handle $\mathtt{MinHam}$.
Let $\mathbb B=(\{0,1\},\lor,\land)$ be the Boolean semiring and let $R=\mathbb B[(x_{ij})_{1\le i<j\le r}]$
be the Boolean polynomial semiring in $\binom r2$ variables $\ab<x_{ij}>_{1\le i < j\le r}$.
For an $r$-tuple $\vec S$, define
\[
    m(\vec S)=\prod_{1\le i<j\le r}
    x_{ij}^{|S_i\mathbin\Delta S_j|}.
\]
For a family $\mathscr A \subseteq \pset_r(V(G))$, define
    $h_m(\mathscr A)=\bigvee_{\vec S\in\mathscr A}m(\vec S)$.
We claim that $h_m$ is separably evaluable with respect to
$(R,\lor,\cdot,0)$.  Indeed, Boolean addition immediately gives
$h_m\left(\bigsqcup_{\ell}\mathscr A_\ell\right) =\bigvee_{\ell}h_m(\mathscr A_\ell)$
for pairwise disjoint families.  Moreover, if $\mathscr A$ and
$\mathscr B$ are separated, then their ground sets are disjoint and
$|(A_i\cup B_i)\mathbin\Delta(A_j\cup B_j)| =|A_i\mathbin\Delta A_j|+|B_i\mathbin\Delta B_j| $
for every $i<j$.
Hence, $h_m(\mathscr A\boxtimes\mathscr B) =h_m(\mathscr A)\cdot h_m(\mathscr B)$.
Therefore, $h_m$ is separably evaluable, and \cref{thm:evalargmso} gives an algorithm to compute $h_m$.
Then, we can compute the maximum $\mathtt{MinHam}(\vec S)$ in linear time of the size of $h_m$ by checking each monomial.

We now estimate the running time. Let $a=\binom r2$.
Since $|S_i \mathbin{\Delta} S_j|$ is at most $n = |V(G)|$, the degree in each variable of a polynomial $h_m(\mathscr A)$ is at most $n$.
Thus, addition takes $O(n^a)$ time, and multiplication takes $O(n^{2a})$ time by the naive method.
Since \cref{thm:evalargmso} uses $g(r,\cw,|\phi|)n$ polynomial operations,
the total running time is $g(r,\cw,|\phi|)\cdot O(n^{2a} n) \subseteq g(r,\cw,|\phi|)\cdot O(n^{r^2})$ for some computable function $g$.
Moreover, we can reduce the degree of each variable into $d$ by replacing the definition of $m(\vec S)$ with 
\[
m'(\vec S) = \prod_{1\le i<j\le r} x_{ij}^{\min(|S_i\mathbin\Delta S_j|, d)},
\]
and modify the multiplication operation to cap each resulting exponent at $d$.
The total running time of this approach is $g(r,\cw,|\phi|)\cdot O(d^{2a} n) \subseteq g(r,\cw,|\phi|)\cdot O(d^{r^2} n)$.

\begin{corollary}\label{cor:minham-amcmso}
    Let $\phi$ be a $\CMSO_1$ formula with one free variable $X$.
    Let $G$ be a graph with $n$ vertices with weight $w: V(G) \to \mathbb{Z}$.
    Let $d$ be a positive integer.
    Then, the problem of finding $r$ minimum (or maximum) weight solutions $\vec S = (S_1, \dots, S_r)$ with $G\models \phi[S_i]$ for all $i$ that maximize $\mathtt{MinHam}(\vec S)$ can be solved in $g(r,\cw,|\phi|)\cdot O(n^{r^2})$ time for some computable function $g$.
    For the decision version of the problem, where we seek $r$ solutions such that $\mathtt{MinHam}(\vec S) \ge d$, the problem can be solved in $g(r,\cw,|\phi|)\cdot O(d^{r^2}n + n^2)$ time.
\end{corollary}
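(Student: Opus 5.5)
The plan is to derive \cref{cor:minham-amcmso} from \cref{thm:evalargmso}, applied to the separably evaluable function $h_m$ (or its capped variant $h_{m'}$) constructed just above. The formula is the minimization part of $\psi$ with the $\Argmax_{\mathtt{SumHam}}$ wrapper removed:
\[
\chi(\vec S) \equiv \bigwedge_{1\le i\le r} S_i \in \Argmin_{f_w}(\phi).
\]
For maximum-weight solutions we use $\Argmax_{f_w}$, or equivalently $\Argmin$ with the negated weights. This formula lies in $\Am\CMSO_1[\tau, r_Q, r_A]$ with measured signature $(\tau^1_{\cw},\{f_w\})$, its length is $O(r|\phi|)$, and $f_w$ is additive. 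We first obtain a $\cw$-expression $\mathcal T_G$ with $O(n)$ nodes in time $g(\cw)n^2$ via~\cite{FominK22}, using the same route as for \cref{thm:eval-argmso}. We then extend $f_w$ to the additive measured structure by restricting it to each subexpression.

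Next we check separable evaluability; this is mostly recorded in the text already. Disjoint unions map to $\lor$, because the Boolean polynomial semiring is idempotent and each monomial records only presence. Separated products map to multiplication, because Hamming distances add over disjoint ground sets. For a singleton $\{u\}$, the value $h_m(\mathscr A)$ is a disjunction over at most $2^r$ tuples, each contributing a monomial whose exponents lie in $\{0,1\}$, so it is computable in $O_r(1)$ time. For the capped version we must check that the product in which exponents are truncated at $d$ is still associative and commutative and still a homomorphism. This holds because $\min(a+b,d)=\min(\min(a,d)+\min(b,d),d)$ for nonnegative $a,b$, so the truncation map is a semiring homomorphism to the quotient semiring.

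\cref{thm:evalargmso} then gives $h(\sat(G,\chi,\vec S))$ using $g(r,\cw,|\phi|)\cdot|\mathcal T_G|$ semiring operations. An element of $R$ is a Boolean vector indexed by exponent vectors in $[0,n]^a$ with $a=\binom r2$, so it has $O(n^a)$ entries. Addition costs $O(n^a)$ and naive multiplication costs $O(n^{2a})$, which gives $O(n^{2a+1})\subseteq O(n^{r^2})$ since $2\binom r2+1\le r^2$. For the optimization answer we scan the monomials of the root polynomial. The maximum of $\min_{i<j} e_{ij}$ over monomials present is exactly $\max \mathtt{MinHam}$ over $r$-tuples of optimum solutions, by the definition of $h_m$.

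To recover an actual tuple we use standard backtracking through the DP. Each product or sum in the evaluation records which pair of child monomials produced a given monomial, and this extra bookkeeping fits in the same time bound. In the capped decision version, the exponents lie in $[0,d]$ and each operation costs $O(d^{2a})$. A monomial with all $e_{ij}= d$ exists iff some tuple has $\mathtt{MinHam}\ge d$, because truncation keeps exactly the information $\min(\cdot,d)$. Together with the $n^2$ term for computing the decomposition, this yields $g\cdot O(d^{r^2}n+n^2)$.

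The main obstacle is justifying that \cref{thm:evalargmso} applies with a nontrivial $\otimes$ inside the $\Argmin$ setting. The families $\sat(\mathcal G_1,\alpha_i)$ and $\sat(\mathcal G_2,\beta_i)$ in the disjoint decomposition of \cref{thm:splitting-FVMSOAM} must be separated, and the union must be disjoint. Both hold because the lists are computed per node with the union-ranking advice and the vertex sets of children are disjoint. So the argument reduces to invoking that theorem; the only real verification is the truncated-semiring homomorphism.
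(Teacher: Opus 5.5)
Your proposal is correct and follows the paper's own argument: apply \cref{thm:evalargmso} to $\bigwedge_{i} S_i \in \Argmin_{f_w}(\phi)$ with the Boolean-polynomial evaluation $h_m$ (exponents capped at $d$ for the decision version), bound each operation by $O(n^{2a})$ or $O(d^{2a})$ respectively, and read off the answer by scanning the monomials. Your check that exponent truncation is a semiring homomorphism and your witness-recovery step fill in details the paper leaves implicit. One attribution is slightly off: disjointness of the split comes from the atom refinement $\gamma_I,\delta_J$ in the proof of \cref{thm:splitting-FVMSOAM}, not from the union-ranking, and since $\lor$ is idempotent it is not even needed here.
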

Note that \cref{cor:sumham-amcmso,cor:minham-amcmso} can be generalized to the case of $\delta$-approximate solutions if $\delta$ is a parameter, where a $\delta$-approximate solution is a feasible solution such that the difference of its objective value from the optimal value is at most $\delta$.

\subsection{A note on applications of evaluation algorithms} \label{subsec:cardcmso}
This subsection explains how graph-polynomial evaluation algorithms apply to problems with a $\CMSO$ property $\phi$ and an additional cardinality constraint $\Gamma$. The results in this subsection reformulate known results, such as those in~\cite{Makowsky04,CourcelleD16}.

Some problems require a tuple of solutions $\vec S$ in which all sets have the same cardinality, that is, $|S_i| = |S_j|$ for all $i,j$.
For example, when the number of vertices is divisible by $k$, \ProblemName{Equitable $k$-Coloring} asks for a proper $k$-coloring of a graph such that all color classes have the same cardinality\footnote{In general, if the number of vertices is not divisible by $k$, \ProblemName{Equitable $k$-Coloring} requires that the $|(|S_i| - |S_j|)| \le 1$ for all $i,j$. This problem is also captured by \cref{thm:cardinality-constrained-cmso}.}.
It is well known that the property that $\vec S$ forms a proper $k$-coloring can be expressed by a $\CMSO$ formula $\phi(\vec S)$.
As described in \cref{subsec:eval-ammso}, we can construct the spectrum $s$ of the cardinalities of solutions $\vec S$ such that $G \models \phi(\vec S)$.
Then, we can check whether all color classes have equal size by testing whether, for some $q\in[0,n]$, the spectrum $s$ has a nonzero coefficient for the monomial $\prod_{i=1}^k x_i^q$.

In general, we can solve the following problems.
\begin{definition}
\ProblemName{Cardinality-Constrained $\Am\CMSO$ Problem} is defined by an $\Am\CMSO$ formula $\phi$ with $k$ free set variables $\vec X$ and a constraint $\Gamma \subseteq \N^k$.
We assume that $\vec a \in \Gamma$ can be tested in constant time for any $\vec a \in \N^k$.
Given a graph $G$, the task is to find a tuple of sets $\vec S$ such that $G \models \phi(\vec S)$ and $(|S_1|, \dots, |S_k|) \in \Gamma$.
For an $n$-vertex graph, the relevant restriction $\Gamma\cap[0,n]^k$ is given as a Boolean array indexed by $[0,n]^k$, with constant-time access to each entry.
\end{definition}

\cardinalityconstrainedcmso*

\begin{proof}
    Let $s$ be the spectrum of the cardinalities of solutions $\vec S$ such that $G \models \phi(\vec S)$.
    Then, we can check whether there is a tuple of solutions $\vec S$ such that $(|S_1|, \dots, |S_k|) \in \Gamma$ by checking whether the spectrum $s$ contains a tuple $(\gamma_1, \dots, \gamma_k) \in \Gamma$.
    Scanning the Boolean array and the corresponding spectrum entries takes $O((n+1)^k)$ time.
    A witnessing tuple can be recovered by standard dynamic-programming backtracking.

    Since two polynomials of degree at most $n$ in each of $k$ variables can be multiplied in $O(n^{2k})$ time by the naive method, and we may assume without loss of generality that the given $\cw$-expression of $G$ has $O(n)$ nodes, the total running time is $g(\cw,|\phi|)n^{2k+1}$ for some computable function $g$.
\end{proof}
The same argument gives a $\Am\CMSO_2$ variant on graphs of bounded treewidth.
For edge-set variables, the corresponding array indices range over $[0,|E(G)|]$; on graphs of bounded treewidth, this also has size $O(n)$ per coordinate.
This theorem can be applied to various problems, such as \ProblemName{Equitable $k$-Coloring}~\cite{BodlaenderF05}, \ProblemName{Equitable $k$-Partition}~\cite{BodlaenderF05}, \ProblemName{Partial Dominating Set}~\cite{AminiFS11}, and \ProblemName{Bisection}~\cite{HanakaKS21}.

\section{Hardness} \label{sec:hardness}

Let $\phi(X, Y)$ be a $\CMSO$ formula.
The formula $X \in \Argmin(\phi \mid Y)$ expresses that $X$ has minimum cardinality among the sets satisfying $\phi(X,Y)$ with $Y$ fixed.
The logic $1$-$\Am\CMSO$ extends $\Am\CMSO$ with such parameterized $\Argmin$ operators.

The following example is key to the proof of the NP-hardness of \ProblemName{$1$-$\Am\MSO_1$ Model Checking} on trees of depth 3.
\begin{lemma}\label{lem:definability}
	The following property is definable in $1$-$\Am\MSO_1$.
	\begin{itemize}
		\item $\mathtt{Same}(X)$: all connected components of the induced subgraph $G[X]$ have the same size.
	\end{itemize}
\end{lemma}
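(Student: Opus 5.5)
We need to define $\mathtt{Same}(X)$ in 1-$\Am\MSO_1$. The idea is to characterize "all components of $G[X]$ have the same size" through minimality and maximality of a component. For a fixed $X$, the property $\beta(X,C)$ stating "$C$ is the vertex set of a connected component of $G[X]$" is $\MSO_1$-definable. It says that $C \subseteq X$, $C\neq\varnothing$, $G[C]$ is connected, and no vertex of $X\setminus C$ is adjacent to a vertex of $C$. Connectivity of $G[C]$ is expressed in the standard way: every nonempty proper subset $Z\subsetneq C$ has an edge to $C\setminus Z$.

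All components have the same size if and only if every component is a minimum-cardinality component. Thus I would write
\[
	\mathtt{Same}(X) \equiv \forall C\,\bigl(\beta(X,C) \to C \in \Argmin(\beta \mid X)\bigr).
\]
This is exactly the allowed shape $\forall Y(Y\in\Argmin(\beta(X,Y)\mid X)\to\alpha(X,Y))$ up to contraposition, and it uses a single external set variable $X$.

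To verify correctness, suppose first that all components have size $s$. Then every $C$ satisfying $\beta(X,C)$ has $|C|=s=\min$, so the formula holds. Conversely, suppose two components have sizes $s<s'$. The larger component satisfies $\beta$ but is not in $\Argmin(\beta\mid X)$, so the formula fails. When $X=\varnothing$ the formula is vacuously true, which matches the convention that an empty graph has all components of equal size.

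The main care point is syntactic. We must make sure that the parameterized $\Argmin$ operator in 1-$\Am\MSO$ allows its body $\beta$ to have the free variable $X$ that is quantified outside. This is exactly the feature the extension adds, so it should be permitted. If only the displayed form is permitted, I would take $\alpha(X,C)\equiv\top$ and negate appropriately, or use $\Argmax$ symmetrically. In that case the property becomes: every component is a maximum component. This is written as $\forall C(\beta(X,C)\to C\in\Argmax(\beta\mid X))$, which is also correct. Nothing here requires bounded width; the definability itself is the entire content.
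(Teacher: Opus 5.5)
Your construction is the paper's proof: it defines $\mathtt{Same}(X) := \forall Y\,(\mathtt{Conn}(X,Y) \to Y \in \Argmin(\mathtt{Conn}(X,Y) \mid X))$ with the same justification, namely that all components of $G[X]$ have equal size if and only if every component has minimum cardinality. Your syntactic worry is unnecessary, because $1$-$\Am\CMSO$ admits $Y \in \Argmin(\phi \mid X)$ as an ordinary predicate and the introduction's displayed form is only an example. That matters, since your fallbacks would not reach that form: contraposition yields $\neg(C \in \Argmin) \to \neg\beta(X,C)$, and taking $\alpha \equiv \top$ makes the formula trivially true.
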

\begin{proof}
	The property $\mathtt{Same}(X)$ is equivalent to the condition that every connected component $Y$ of the induced subgraph $G[X]$ has minimum cardinality among the connected components of $G[X]$. Thus, we can define $\mathtt{Same}(X)$ as follows.
	\begin{align*}
		\mathtt{Same}(X) := \forall Y \ab(\mathtt{Conn}(X,Y) \to Y \in \Argmin(\mathtt{Conn}(X,Y) \mid X)).
	\end{align*}
	Here, $\mathtt{Conn}(X,Y)$ expresses that $Y$ is the vertex set of a connected component of the induced subgraph $G[X]$ and is definable in $\MSO_1$ (see, e.g.,~\cite{Libkin04}).
\end{proof}

\begin{theorem}\label{thm:nph}
	There exists a $1$-$\Am\MSO_1$ sentence $\phi$ such that determining whether a given graph $G$ satisfies $\phi$ is NP-hard even on trees of depth 3 with 4 colors.
	Moreover, $\phi$ has only one $\Argmin$ operator.
\end{theorem}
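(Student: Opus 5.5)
We reduce from an NP-complete problem in which equal part sizes are the difficult constraint, and use $\mathtt{Same}$ from \cref{lem:definability} as the only $\Argmin$-based ingredient. The natural source is \textsc{3-Partition}, which is strongly NP-hard. An instance consists of integers $a_1,\dots,a_{3m}$ in unary with $\sum a_i = mB$ and $B/4<a_i<B/2$. The question is whether the items can be split into $m$ triples, each of sum $B$. Strong NP-hardness lets us encode every integer as a path or star of polynomial size.

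The construction is a tree of depth 3. Take a root $r$ with color $c_0$. Below $r$ we attach $m$ ``bin'' vertices $b_1,\dots,b_m$ with color $c_1$. Separately, for each item $i$ we attach to $r$ an ``item'' vertex $u_i$ with color $c_2$, and $u_i$ carries $a_i$ pendant leaves with color $c_3$. A bin can only meet items through the root, so the tree itself cannot fix the assignment of items to bins. The assignment therefore has to be guessed by existentially quantified sets.

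Concretely, the sentence quantifies existentially over a set $Z$ of $c_2$-vertices and over sets that mark which bin each item goes to. Since $m$ is unbounded, we cannot use one set per bin. Instead we encode the assignment as a single edge-like relation. This can be done without edges by adding, for each pair (bin, item), a gadget leaf under the root's subtree and existentially choosing a set $W$ of such gadgets. The tree stays of depth 3 if each gadget vertex is a child of the root with a small subtree encoding the pair, but then the pair membership must itself be MSO-recognisable, and this is the part I expect to be the main obstacle.

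A cleaner alternative avoids pairs entirely. Let an existentially chosen set $X$ induce in the tree a forest whose components are exactly the groups, and require $\mathtt{Same}(X)$. To make the components tie a group of items together, I would build the tree as root $\to$ $m$ bin vertices $\to$ $3m$ item slots under every bin $\to$ $a_i$ leaves under slot $(j,i)$. This is depth 3 and polynomial size because the numbers are in unary. We then require in MSO that $X$ omits the root, contains every bin vertex, and contains, for each item index $i$, exactly one slot $(j,i)$ over all $j$ together with all of its leaves. ``Same item index'' is expressed by an existentially chosen set of slots plus colors. The components of $G[X]$ are then the bins with their chosen items. $\mathtt{Same}(X)$ forces all bins to have the same total $1+\sum a_i$, which equals $1+B$, since there are $m$ bins and the total is $mB$. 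The bounds $B/4<a_i<B/2$ make each bin receive exactly three items.

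The main obstacle is enforcing ``exactly one copy of each item is chosen'' in MSO, because item identity across different bins is not local. I would first try to resolve it by giving each slot an additional marker: a private leaf whose color is unique per item is impossible with a bounded palette. The fallback is to link the copies of item $i$ through an auxiliary vertex and accept a depth-4 tree, then compress the depth by rerooting. If that fails too, I would move identity into cardinality and use a second constraint from $\mathtt{Same}$ applied to a different set.
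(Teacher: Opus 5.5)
There is a genuine gap, and it is the step you flag yourself: your construction never expresses that every item is used exactly once. With a bounded palette, slot $(j,i)$ cannot be told apart from slot $(j',i')$ by colors. An existentially quantified set of slots also cannot certify that two slots carry the same item index, because the quantifier may choose any set. Without this constraint, $\mathtt{Same}(X)$ only says that all bin components have equal size. Assignments that reuse some items and omit others satisfy it, the total is no longer $mB$, and the correctness argument collapses. (The slot vertices also lie in $X$, so a component has size $1+\sum(a_i+1)$ rather than $1+\sum a_i$.) Your depth-4 fallback does not work either. Joining the copies of item $i$ through an auxiliary vertex creates the cycle root--$b_j$--slot$(j,i)$--aux--slot$(j',i)$--$b_{j'}$--root, so the graph is not a tree and no rerooting can repair this. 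Finally, the specific claims of the statement (four colors, a single $\Argmin$) are not addressed.

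The missing idea is the one you mention only as a last resort. Encode identity in the \emph{cardinality} of a small subtree, and test equality of two such cardinalities by applying $\mathtt{Same}$ to the union of two vertex-disjoint, non-adjacent pieces; the induced subgraph then has exactly those two components. The paper does exactly this, but reduces from 3CNF-SAT, which avoids any global exactly-once constraint:
\begin{itemize}
\item below each variable vertex $x_i$ hang two subtrees of sizes $2i+3$ (true) and $2i+2$ (false);
\item below each clause vertex hangs one subtree per literal, of the size matching that variable and polarity;
\item an assignment is a set $S$ picking one of the two subtrees per variable, a local condition needing no $\Argmin$;
\item satisfaction says that for every clause vertex, some literal subtree $D$ and some component $S'$ of $T[S]$ satisfy $\mathtt{Same}(S'\cup D)$.
\end{itemize}
This gives depth 3, four colors, and one occurrence of $\mathtt{Same}$, hence one $\Argmin$.

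To salvage \textsc{3-Partition} you would need, for instance, three things: identity gadgets on slots kept outside the bin components; size comparisons enforcing that every item is chosen at least once; and a reference gadget fixing each bin component at size $B+4$, which then forces exactly once by counting. All of this would have to fit within four colors and one $\Argmin$, and none of it is worked out in your proposal.
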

\begin{proof}
	\begin{figure}
		\centering
		\includegraphics{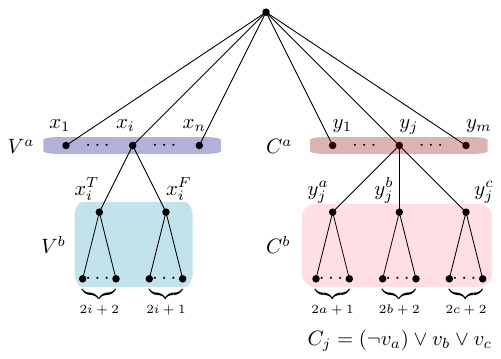}
		\caption{An illustration of the reduction in the proof of \cref{thm:nph}.}
		\label{fig:nphardness}
	\end{figure}
	We reduce from \ProblemName{3CNF-SAT}.
	Let $\psi$ be a 3CNF formula with clause set $\mathcal C = \{C_1, C_2, \ldots, C_m\}$ and variable set $V=\{v_1, v_2, \ldots, v_n\}$.
	We may assume that each clause contains three distinct variables.

	We construct a tree $T$ of depth 3 with 4 colors $\{V^a, V^b, C^a, C^b\}$ as follows.
	Create a root vertex $r$ and attach $n+m$ children $x_1, x_2, \ldots, x_n, y_1, y_2, \ldots, y_m$ to $r$.
	Here, $x_i$ has color $V^a$ and corresponds to the variable $v_i$, and $y_j$ has color $C^a$ and corresponds to the clause $C_j$.
	Then, for each $i \in [n]$, add two children $x_i^T$ and $x_i^F$, both with color $V^b$, to $x_i$.
	For each $i \in [n]$, add $2i+2$ children with color $V^b$ to $x_i^T$ and $2i+1$ children with color $V^b$ to $x_i^F$.
	Finally, for each $i \in [m]$ and for each literal $t \in C_i$, if $t = v_j$ for some $j \in [n]$, then add a child $y_i^j$ having color $C^b$ to $y_i$ and add $2j+2$ children having color $C^b$ to $y_i^j$; if $t = \lnot v_j$ for some $j \in [n]$, then add a child $y_i^j$ having color $C^b$ to $y_i$ and add $2j+1$ children having color $C^b$ to $y_i^j$.

    It is easy to see that the number of nodes in $T$ is polynomial in the number of variables and clauses of $\psi$, and hence the reduction can be computed in polynomial time.

	We use the following auxiliary formulas $\mathtt{Asg}(S)$ and $\mathtt{Sat}(S)$.
	\begin{itemize}
		\item $\mathtt{Asg}(S)$: $S \subseteq V^b$ consists of the vertices of exactly one of the two subtrees rooted at $x_i^T$ and $x_i^F$ for each $i \in [n]$.
		The intended meaning is that $S$ corresponds to a truth assignment to the variables.
		\item $\mathtt{Sat}(S)$: for each $i \in [m]$, there is a child $y_i^j$ of $y_i$ such that the induced subgraph $T[S]$ contains a connected component of the same size as the subtree rooted at $y_i^j$.
		The intended meaning is that the truth assignment corresponding to $S$ satisfies every clause $C_i$.
	\end{itemize}
	Before defining these formulas formally, we show that $T \models \exists S (\mathtt{Asg}(S) \land \mathtt{Sat}(S))$ if and only if the given 3CNF formula $\psi$ is satisfiable.

	Assume that $T \models \exists S (\mathtt{Asg}(S) \land \mathtt{Sat}(S))$.
	Then, there is a subset $S$ of $V^b$ such that $\mathtt{Asg}(S)$ and $\mathtt{Sat}(S)$ are true. Since $\mathtt{Asg}(S)$ is true, we can uniquely define a truth assignment $\sigma$ to the variables as follows: for each $i \in [n]$, if the subtree rooted at $x_i^T$ is contained in $S$, then $\sigma(v_i) = \mathtt{true}$; otherwise, $\sigma(v_i) = \mathtt{false}$.
	We show that $\sigma$ satisfies every clause $C_i$. Since $\mathtt{Sat}(S)$ is true, for each $i \in [m]$, there is a child $y_i^j$ of $y_i$ such that the induced subgraph $T[S]$ contains a connected component of the same size as the subtree rooted at $y_i^j$.
	By the construction of $T$, the subtree rooted at $y_i^j$ has the same size as the subtree rooted at $x_j^T$ if $C_i$ contains $v_j$ as a positive literal and has the same size as the subtree rooted at $x_j^F$ if $C_i$ contains $\lnot v_j$ as a negative literal.
	Thus, the truth assignment $\sigma$ satisfies $C_i$.
    Since this holds for every clause $C_i$, $\psi$ is satisfiable.

	Conversely, assume that the given formula $\psi$ is satisfiable and let $\sigma$ be a truth assignment to the variables that satisfies $\psi$.
	We define a subset $S$ of $V^b$ as follows: for each $i \in [n]$, if $\sigma(v_i) = \mathtt{true}$, then include all vertices of the subtree rooted at $x_i^T$ in $S$; otherwise, include all vertices of the subtree rooted at $x_i^F$ in $S$.
	Then, it is easy to see that $\mathtt{Asg}(S)$ is true.
    Let $i \in [m]$. Since $\sigma$ satisfies $C_i$, there is a literal $t \in C_i$ such that $\sigma(t) = \mathtt{true}$.
    By the construction of $S$, there is a connected component of $T[S]$ that has the same size as the subtree rooted at $y_i^j$, where $y_i^j$ is the child of $y_i$ corresponding to the literal $t$.
    Thus, $\mathtt{Sat}(S)$ is true, and hence $T \models \exists S (\mathtt{Asg}(S) \land \mathtt{Sat}(S))$.

	Finally, we define the formulas $\mathtt{Asg}(S)$ and $\mathtt{Sat}(S)$.
	The formula $\mathtt{Asg}(S)$ is defined as follows, where
	$\alpha \veebar \beta$ is the ``exclusive or'' operator, defined as $(\alpha \land \lnot \beta) \lor (\lnot \alpha \land \beta)$.
	\begin{align*}
		\mathtt{Asg}(S) :=  \forall v \forall x \forall y
		 \begin{bmatrix}
			&S \subseteq V^b \\
			\land& \begin{pmatrix}
				\ab(v \in V^a \land E(v,x) \land E(v,y) \land x \neq y \land x \in V^b \land y \in V^b) \\
				 \to (x \in S \veebar y \in S)
			\end{pmatrix} \\
			\land& \ab\big(x \in S \land E(x,y) \land y\in V^b \to y \in S)
		\end{bmatrix}.
	\end{align*}
	The meaning of $\mathtt{Asg}(S)$ is as follows: the first line means that $S$ is a subset of $V^b$; the second line means that, for each $i \in [n]$, exactly one of $x_i^T$ and $x_i^F$ belongs to $S$; the third line means that $S$ is closed under adjacency within $V^b$, so it contains every vertex of each selected subtree and no vertex of any unselected subtree.

	The formula $\mathtt{Sat}(S)$ is defined as follows.
	\begin{align*}
		\mathtt{Sat}(S) &:= \\ &
			\forall y \ab(y\in C^a \to \exists x\, \exists D\, \exists S'\, \exists W
		\begin{bmatrix}
			E(y, x) \land x \in C^b \\
			\land\; \forall z \ab(z \in D \leftrightarrow (z = x \lor (E(x,z) \land z \in C^b))) \\
			\land\; \mathtt{Conn}(S, S') \\
			\land\; \forall z\ab(z \in W \leftrightarrow (z \in S' \lor z \in D)) \\
			\land\; \mathtt{Same}(W)
		\end{bmatrix}).
	\end{align*}
	The meaning of $\mathtt{Sat}(S)$ is as follows.
	For each clause vertex $y \in C^a$, we find a literal vertex $x \in C^b$ adjacent to $y$, let $D$ be the vertex set of the subtree rooted at $x$ (i.e., $x$ together with all its $C^b$-children), and find a connected component $S'$ of $T[S]$.
	The set $W$ is defined as the disjoint union of $S'$ and $D$.
	Since $S \subseteq V^b$ and $D \subseteq C^b$ are vertex-disjoint with no edges between them in $T$, the induced subgraph $T[W]$ has exactly two connected components, $S'$ and $D$.
	Therefore, $\mathtt{Same}(W)$ expresses $|S'| = |D|$, i.e., $S'$ has the same size as the subtree rooted at $x$.
\end{proof}

\subsection{Hardness for the polynomial hierarchy}
We show that \ProblemName{$1$-$\Am\MSO_1$ Model Checking} is hard for every level of the polynomial hierarchy even on trees of bounded depth.

\begin{definition}
	Let $k \ge 1$ be an integer.
	\ProblemName{$\SigmaP_k$-3CNF-SAT} is defined as follows.
	The input is a Boolean formula $\phi$ of the form $\exists X_1 \forall X_2 \cdots Q_k X_k \psi(X_1,X_2, \ldots, X_k)$, where $\psi(X_1,X_2, \ldots, X_k)$ is a 3CNF formula whose variables are partitioned into sets $X_1, X_2, \ldots, X_k$, and $Q_k$ is $\exists$ if $k$ is odd and $\forall$ otherwise.
	The question is whether $\phi$ is true.

	\ProblemName{$\SigmaP_k$-3CNF-UNSAT} is defined as follows.
	The input is a Boolean formula $\phi$ of the form $\exists X_1 \forall X_2 \cdots Q_k X_k \lnot \psi(X_1,X_2, \ldots, X_k)$, where $\psi(X_1,X_2, \ldots, X_k)$ is a 3CNF formula whose variables are partitioned into sets $X_1, X_2, \ldots, X_k$, and $Q_k$ is $\exists$ if $k$ is odd and $\forall$ otherwise.
	The question is whether $\phi$ is true.
\end{definition}

\begin{theorem}[\cite{Stockmeyer76}] \label{thm:ph-sat}
	Let $k \ge 1$ be an integer.
	\begin{itemize}
			\item If $k$ is odd, then \ProblemName{$\SigmaP_k$-3CNF-SAT} is $\SigmaP_k$-hard.
			\item If $k$ is even, then \ProblemName{$\SigmaP_k$-3CNF-UNSAT} is $\SigmaP_k$-hard.
	\end{itemize}
\end{theorem}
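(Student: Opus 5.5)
We would prove \cref{thm:ph-sat} by reduction from the alternating-quantifier characterization of the polynomial hierarchy, with a Tseitin encoding of the matrix. First we recall the characterization due to Stockmeyer and Wrathall: a language $L$ is in $\SigmaP_k$ if and only if there are a polynomial $q$ and a polynomial-time predicate $R$ such that
\[
x\in L \iff \exists y_1\,\forall y_2\cdots Q_k y_k\; R(x,y_1,\dots,y_k),
\]
where each $y_i$ ranges over $\{0,1\}^{q(|x|)}$. We then fix $L\in\SigmaP_k$ and an input $x$. Applying the Cook--Levin tableau construction to the machine deciding $R$ yields, in polynomial time, a Boolean circuit $C_x(y_1,\dots,y_k)$ of polynomial size that computes $R(x,\cdot)$. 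Hence $x\in L$ if and only if $\exists X_1\forall X_2\cdots Q_kX_k\, C_x(X_1,\dots,X_k)$ holds, where $X_i$ is the set of Boolean variables encoding $y_i$.

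Next we turn the circuit into a 3CNF. We introduce a fresh variable $z_g$ for every gate $g$ of $C_x$. Let $\theta(X,Z)$ be the Tseitin 3CNF asserting that each $z_g$ equals the value of its gate, with the output variable written $z_{\mathrm{out}}$. For every assignment to $X=X_1\cup\cdots\cup X_k$ there is exactly one $Z$ satisfying $\theta$, and it records the gate values. We write $\mathrm{Tseitin}$ for this construction only informally; in the proof we will use $\theta$ directly.

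The auxiliary variables $Z$ are placed in the innermost block, and here the parity of $k$ matters.
\begin{itemize}
\item If $k$ is odd, then $Q_k=\exists$. We set $\psi=\theta\wedge z_{\mathrm{out}}$. Since $Z$ is uniquely determined, $C_x(X)$ holds if and only if $\exists Z\,\psi(X,Z)$ holds. Absorbing $Z$ into $X_k$ therefore gives an instance of \ProblemName{$\SigmaP_k$-3CNF-SAT}.
\item If $k$ is even, then $Q_k=\forall$. We set $\psi=\theta\wedge\lnot z_{\mathrm{out}}$. Now $C_x(X)$ holds if and only if $\lnot\exists Z\,\psi(X,Z)$ holds, which is $\forall Z\,\lnot\psi(X,Z)$. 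Absorbing $Z$ into the universal block $X_k$ therefore gives an instance of \ProblemName{$\SigmaP_k$-3CNF-UNSAT}.
\end{itemize}
In both cases the quantifier prefix keeps the required alternation pattern.

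The main care point is this placement of $Z$ together with the uniqueness of the Tseitin witness. Uniqueness is exactly what makes the existential and universal readings over $Z$ agree with $C_x$. We will verify it gate by gate: each gate contributes clauses forcing $z_g$ to equal its function of the inputs. Finally, we pad clauses so that each has at most three literals, allowing repeated literals, or we split wider gates into binary ones. The whole reduction is polynomial-time, which proves both items of \cref{thm:ph-sat}.
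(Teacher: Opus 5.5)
Your proof is correct, and it is the standard Stockmeyer--Wrathall argument: the alternating-quantifier characterization of $\SigmaP_k$, a Cook--Levin circuit for the matrix, and Tseitin variables absorbed into the innermost block, using $\theta\wedge z_{\mathrm{out}}$ when that block is existential and $\forall Z\,\lnot(\theta\wedge\lnot z_{\mathrm{out}})$ when it is universal; the paper gives no proof of its own and only cites Stockmeyer, so this is the argument behind that citation. One wording fix: padding with repeated literals lengthens short clauses to exactly three literals, while splitting wide gates into binary ones keeps Tseitin clauses at \emph{most} three, so these are two separate steps rather than alternatives.
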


\begin{corollary}\label{cor:phhardness}
	For any fixed integer $i \ge 1$,
	there exist $1$-$\Am\MSO_1$ sentences $\phi_s$ and $\phi_p$ such that
	\ProblemName{$1$-$\Am\MSO_1$ Model Checking} for $\phi_s$ is $\SigmaP_i$-hard even on trees of depth 3 with $O(i)$ colors,
	and \ProblemName{$1$-$\Am\MSO_1$ Model Checking} for $\phi_p$ is $\PiP_i$-hard even on trees of depth 3 with $O(i)$ colors.
\end{corollary}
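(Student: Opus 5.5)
We generalize the reduction in the proof of \cref{thm:nph} from \ProblemName{3CNF-SAT} to \ProblemName{$\SigmaP_i$-3CNF-SAT} (for odd $i$) and \ProblemName{$\SigmaP_i$-3CNF-UNSAT} (for even $i$), which are $\SigmaP_i$-hard by \cref{thm:ph-sat}. The $\PiP_i$ statement then follows by complementation: if $\phi_s$ gives $\SigmaP_i$-hardness via a reduction $T(\cdot)$, then $\phi_p := \lnot \phi_s$ gives $\PiP_i$-hardness under the same reduction, since the complement of a $\SigmaP_i$-hard problem is $\PiP_i$-hard. It therefore suffices to construct $\phi_s$.

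\textbf{Tree construction.} Given $\exists X_1 \forall X_2 \cdots Q_i X_i\, \psi$, we build the same depth-3 tree as in \cref{thm:nph}. There is a root $r$, a variable gadget $x_j$ with children $x_j^T, x_j^F$ carrying $2j+2$ and $2j+1$ leaves respectively, and a clause gadget $y_c$ with literal children $y_c^j$ of matching subtree sizes. The only change is that the variable gadgets are colored by block: if $v_j \in X_\ell$, then $x_j$ gets color $V^a_\ell$ and all of its descendants get color $V^b_\ell$. This uses $2i+2 = O(i)$ colors. The reduction is clearly polynomial.

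\textbf{Formula.} We relativize $\mathtt{Asg}$ to each block. $\mathtt{Asg}_\ell(S_\ell)$ is the formula $\mathtt{Asg}$ with $V^a, V^b$ replaced by $V^a_\ell, V^b_\ell$. It says that $S_\ell$ selects exactly one of the two subtrees under each variable vertex of block $\ell$, so its models are in bijection with truth assignments to $X_\ell$. We then set
\[
\phi_s \equiv \exists S_1 \bigl(\mathtt{Asg}_1(S_1) \land \forall S_2 (\mathtt{Asg}_2(S_2) \to \exists S_3 (\mathtt{Asg}_3(S_3) \land \cdots \Theta(S_1\cup\cdots\cup S_i)))\bigr).
\]
Here the union $S := S_1 \cup \dots \cup S_i$ is MSO-definable via an extra quantified set. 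The matrix $\Theta$ is $\mathtt{Sat}(S)$ for odd $i$ and $\lnot\mathtt{Sat}(S)$ for even $i$, with $\mathtt{Sat}$ taken verbatim from the proof of \cref{thm:nph} except that $S \subseteq \bigcup_\ell V^b_\ell$.

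\textbf{Correctness.} The selected subtrees of different blocks lie under distinct children of $r$, and $r \notin S$. Hence the components of $T[S]$ are exactly the selected subtrees. Moreover, subtree sizes still encode the variable index and polarity uniquely. The argument of \cref{thm:nph} therefore shows that $T \models \mathtt{Sat}(S)$ if and only if the assignment encoded by $S$ satisfies $\psi$. By induction on the quantifier prefix, $T \models \phi_s$ if and only if the quantified Boolean formula is true.

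$\mathtt{Sat}$ uses $\mathtt{Same}$ from \cref{lem:definability}, so $\phi_s$ is a $1$-$\Am\MSO_1$ sentence with a single parameterized $\Argmin$. Depth 3 and $O(i)$ colors give \cref{cor:phhardness}.

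\textbf{Main obstacle: the depth-4, color-free form of \cref{thm:hardness-argmso-tree}.} To remove the colors, I would encode each color by attaching to every vertex a distinct number of pendant marker leaves. This raises the depth by one, which is why the bound is 4. Colors then become definable by counting leaf neighbours up to $O(i)$, which is first-order. The hardest part is checking that these markers do not disturb the size equalities tested by $\mathtt{Same}$. I would handle this by choosing the marker counts uniformly, so that equal-size gadgets stay equal, or by requiring $S$ and $D$ to avoid marker leaves, which is MSO-expressible.
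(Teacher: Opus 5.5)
Your argument is correct and is essentially the paper's proof. Both use the same depth-3 tree with block-specific colors (you split $V^a,V^b$ into $V^a_\ell,V^b_\ell$ where the paper adds colors $B_j$), block-relativized $\mathtt{Asg}_\ell$, alternating quantification over $S_1,\dots,S_i$ with matrix $\mathtt{Sat}$ or $\lnot\mathtt{Sat}$ applied to the union, and the negated sentence for the $\PiP_i$ case. Your marker-leaf paragraph concerns \cref{thm:hardness-argmso-tree} rather than this corollary. There, only your second fix (keeping marker leaves out of $S$ and $D$) is safe, because differently colored vertices such as those in $V^b_\ell$ and $C^b$ must receive different marker counts.
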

\begin{proof}
    Suppose that $i$ is odd. Let $\psi$ be a 3CNF formula and consider an instance
		\[\exists X_1 \forall X_2 \cdots \exists X_i \psi(X_1,X_2, \ldots, X_i)\] of \ProblemName{$\SigmaP_i$-3CNF-SAT}.
    Let $T_{\psi}$ be a tree of depth 3 with $O(i)$ colors constructed from $\psi$ as in the proof of \cref{thm:nph}.
    For each $j \in [i]$, give each vertex in $V^b$ corresponding to a variable in $X_j$ the additional color $B_j$.
    We reuse the formula $\mathtt{Sat}(S)$ defined in the proof of \cref{thm:nph} and modify $\mathtt{Asg}(S)$ to $\mathtt{Asg}_j(S)$ for each $j \in [i]$ as described below.
    \begin{itemize}
	    \item $\mathtt{Asg}_j(S)$: $S \subseteq V^b \cap B_j$ consists of the vertices of exactly one of the two subtrees rooted at $x_\ell^T$ and $x_\ell^F$ for each $v_\ell \in X_j$.
    \end{itemize}
    Define\footnote{Here, $\mathtt{Sat}(\bigcup_{j=1}^i S_j)$ is the modified formula such that each $x \in U$ in $\mathtt{Sat}(U)$ is replaced with $\bigvee_{j=1}^i x \in S_j$.} 
     $\phi_i := \exists S_1 \;(\mathtt{Asg}_1(S_1) \land \forall S_2 \;(\mathtt{Asg}_2(S_2) \to (\cdots \exists S_i \;(\mathtt{Asg}_i(S_i) \land \mathtt{Sat}(\bigcup_{j=1}^i S_j)))))$.
    It is straightforward to see that $T_{\psi} \models \phi_i$ if and only if the given quantified Boolean formula is true. Thus, model checking for $\phi_i$ is $\SigmaP_i$-hard, and model checking for $\lnot \phi_i$ is $\PiP_i$-hard for odd $i$.
    The proof for even $i$ is similar by replacing the innermost existential block $\exists S_i \dots$ with a universal block $\forall S_i \;(\mathtt{Asg}_i(S_i) \to \lnot\mathtt{Sat}(\bigcup_{j=1}^i S_j))$.
\end{proof}

We can remove the colors in the above constructions by simulating them with attached leaves.
Assume $T$ has at most $c$ colors and these colors are indexed from $1$ to $c$.
If the color set of a vertex $v$ is $C \subseteq [c]$, we attach $\sum_{i\in C}2^i$ leaves to $v$.
If $C = \varnothing$, we attach one leaf to $v$.
Then, we can define the color of a vertex by the number of attached leaves using a first-order formula with $O(2^c)$ quantifiers.
We can check whether a vertex is a leaf using a first-order formula of constant size, and thus we can remove the colors in the above constructions by modifying the formulas and the tree $T$ accordingly.
The number of quantifiers and the size of the tree increase by a factor of $O(2^c)$, so the hardness conclusions of \cref{cor:phhardness} continue to hold after removing the colors. This modification increases the depth of the tree by only 1.
Hence, we obtain the following theorem.

\phhardargmso*

Lastly, we note that the above hardness results hold for any extension of $\MSO$ that can express the $\texttt{Same}(X)$ property in \cref{lem:definability}.

\section{Conclusion}\label{sec:conclusion}
We have shown that extending $\CMSO_1$ with $\Argmin$ and $\Argmax$ operators for additive objective functions preserves tractability on graphs of bounded clique-width, with a corresponding result for $\CMSO_2$ on graphs of bounded treewidth. This contrasts with extensions of $\MSO$ by cardinality comparisons, which are known to be intractable on graphs of bounded clique-width~\cite{Szeider11,DreierGH25}.
For many optimization problems involving optimal feasible solutions, our results remove the dependence on a cardinality parameter $k$.

Another interesting direction is to consider more general extensions that capture a meaningful class of problems.
Our extension $\Am\CMSO$ is equipped with an inverse image of the composition $\min \circ f$ (or $\max \circ f$) from the set of feasible solutions, where $f$ is an additive set function.
Is there a natural characterization of the class of functions $g$ or $g \circ f$ for which extending $\CMSO$ with a similar inverse image operator preserves tractability on graphs of bounded clique-width?
Moreover, it would be interesting to investigate whether such a characterization can define a meaningful class of problems.

From the viewpoint of combinatorial optimization, it is natural to ask whether width parameters are useful for solving other $\SigmaP_2$-type problems, such as Stackelberg games and minimization of maximum regret.

Another open question concerns the complexity of $\min$-$\max$-$\min$-type $\MSO_1$ problems.
The $\LinE\MSO_1$ framework yields an FPT algorithm for deciding $\exists X (|X| \le k \land \phi(X))$~\cite{CourcelleMR00}, while problems of the form $\exists X (|X| \le k_x \land \forall Y (|Y| \le k_y \to \phi(X,Y)))$ are in XP~\cite{DreierGH25}, parameterized by the clique-width of the input graph and the $\MSO_1$ formula $\phi$.
To the best of our knowledge, it remains open whether problems of the form $\exists X (|X| \le k_x \land \forall Y (|Y| \le k_y \to \exists Z (|Z| \le k_z \land \phi(X,Y,Z))))$ are in XP under the same parameterization.
In integer linear programming, feasibility of $\exists \vec x : A\vec x \le b$ and parametric feasibility of $\forall \vec y\in Q\; \exists \vec x : A\vec x + B\vec y \le b$ are decidable in polynomial time when the number of variables is fixed~\cite{Kannan90}. In contrast, deciding $\exists \vec x\in R\; \forall \vec y\in Q\; \exists \vec z : A\vec x + B\vec y + C\vec z \le b$ is NP-hard even when the number of variables is fixed~\cite{NguyenP22}. Here, all variables are integer vectors, and $Q$ and $R$ are rational polyhedra given by linear inequalities.
This raises the question of whether a similar hardness result holds for $\min$-$\max$-$\min$-type $\MSO_1$ problems, even when the clique-width of the input graph and $\phi$ are fixed.

\bibliography{references}

\end{document}